\documentclass[letterpaper]{article}
\usepackage[totalwidth=480pt, totalheight=680pt]{geometry}

\usepackage[T1]{fontenc}
\usepackage{amsfonts,amsmath,amssymb,amsthm}
\usepackage{algorithm}
\usepackage{algorithmicx}
\usepackage{algpseudocode}
\usepackage{booktabs}
\usepackage{tabularx}
\usepackage{multirow}
\usepackage{url}
\usepackage[round]{natbib}
\usepackage{xcolor}
\definecolor{darkgreen}{rgb}{0,0.5,0}
\usepackage{hyperref}
\hypersetup{
    colorlinks=true,
    linkcolor=red,
    citecolor=darkgreen,
    filecolor=magenta,
    urlcolor=blue
}
\usepackage{authblk}

\newtheorem{theorem}{Theorem}[section]

\newtheorem{lemma}[theorem]{Lemma}
\newtheorem{corollary}[theorem]{Corollary}
\theoremstyle{definition}
\newtheorem{definition}[theorem]{Definition}

\theoremstyle{remark}

\makeatletter
\providecommand*{\theHALG@line}{}
\renewcommand*{\theHALG@line}{\thealgorithm.\arabic{ALG@line}}
\makeatother

\newcommand{\MMS}{\textsf{MMS}}
\newcommand{\USW}{\mathrm{USW}}
\newcommand{\OPTUSW}{\mathrm{OPT}_{\mathrm{USW}}}

\title{\bf Online Fair Division with Upgrade Recourse}
\author[1]{Yichen Huang}
\author[2]{Nicholas Teh}
\affil[1]{Harvard University, USA}
\affil[2]{University of Oxford, UK}
\date{}

\begin{document}

\maketitle

\begin{abstract}
We study the online fair division of indivisible goods. In the standard model, where assignments are irrevocable, no algorithm can guarantee any positive fraction of the maximin share (MMS). Reassigning earlier goods can restore fairness, but it may take value away from agents who already hold them. We ask how much fairness can be maintained if every reassignment must leave each affected agent at least as well off. We consider two forms of such \emph{upgrade recourse}. Under \emph{item-level upgrades}, each good an agent holds must be kept or replaced by a distinct good she values at least as much; under \emph{bundle-level upgrades}, only her total value must not decrease. For item-level upgrades, we show that the optimal MMS factor is essentially $1/H_n \approx 1/\ln n$, where $H_n$ is the $n$-th harmonic number, and that fixed picking sequences nearly attain it. Since these sequences use only the agents' rankings of individual goods, knowing the actual values cannot improve the optimal factor for up to 33 agents, and improves it by at most a vanishing margin as $n$ grows. Bundle-level upgrades allow substantially stronger guarantees. By permitting several-for-one exchanges, they turn this vanishing factor into a constant: we achieve exact MMS for two agents and more than half of MMS for any number of agents, in both cases together with relaxations of envy-freeness. Finally, we show that predictions about future goods let online algorithms inherit any offline MMS guarantee, losing only an additive term bounded by the prediction error. Under item-level upgrades, however, insisting on exact MMS when predictions are correct leaves no positive guarantee when they are wrong.
\end{abstract}

\section{Introduction}
Consider teams sharing a growing pool of accelerators for independent training experiments or deployed inference services. Each team values a device by the capacity it provides for its workload, so the teams need not agree on which device is best. Fair access and differences in performance across hardware and workloads are central concerns in machine learning cluster scheduling~\citep{mahajan2020themis,narayanan2020gavel}. As new devices become available, reassignment can improve fairness. 
However, taking a device from one team may reduce the capacity available for its planned experiments or its deployed service. We therefore ask that each update preserve the value every team already receives.

This requirement connects fair division to a standard concern in safe deployment: an update should preserve an existing performance guarantee. Safe policy improvement in reinforcement learning, for example, compares a new policy with a baseline policy~\citep{laroche2019safe}; here, we compare successive allocations, separately for every recipient. We model the teams as $n$ agents and the devices as indivisible goods. Each good's values are revealed on arrival; it must be assigned immediately and remains available for reassignment. Since the workloads are independent, we use additive valuations.

The key benchmark we consider is the widely studied \emph{maximin share} (MMS)~\citep{budish2011combinatorial,amanatidis2023fair}: the value an agent could secure by partitioning the available goods into $n$ bundles and receiving a least-valued bundle. An $\alpha$-MMS allocation gives every agent at least an $\alpha$ fraction of her own benchmark. We also study \emph{envy-freeness up to one good} (EF1), which requires that any envy toward another agent disappear after removing some good from that agent's bundle, and its stronger variant \emph{EFX}, which requires this after removing any positively valued good. With irrevocable assignments, no positive approximation to MMS or EF1 is possible for general additive valuations, even when the number of arrivals is known~\citep{neoh2026closinggaps}. Randomization does not ensure MMS with high probability either: against an oblivious adversary, which fixes the input in advance, for every $\alpha>0$ every randomized algorithm has an input on which it returns an $\alpha$-MMS allocation with probability at most $5/6$~\citep{cohen2026oblivious}. Allowing reassignment, or \emph{recourse}, can restore fairness~\citep{HeProcacciaPsomasZeng2019}. The question is whether it can do so while preserving each recipient's earlier value.

We introduce two forms of \emph{upgrade recourse}. Under \emph{item-level upgrades}, each previously assigned good must be retained or replaced by a distinct good that its owner values at least as much. Under \emph{bundle-level upgrades}, only the owner's total value must be nondecreasing. Replacing two devices worth three units each with one worth seven increases a team's capacity and frees the old devices for other teams. Bundle-level upgrades permit this exchange; item-level upgrades require two distinct replacements. Thus, under item-level upgrades, an agent never holds fewer goods than before, and none of her goods is replaced by one she values less. Neither rule bounds the number of reassignments. Our question is: \emph{To what extent can fairness be maintained as goods arrive if every revision must leave each affected agent at least as well off?}

\paragraph{Scheduling and predictions.}
For identical additive valuations, goods are jobs and agents are machines, and MMS is the optimal minimum load in \emph{machine covering}~\citep{galvez2020symmetry}; advance information about jobs gives a semi-online model~\citep{tan2007semi}. Learning-augmented algorithms instead use predictions that may be wrong, seeking better guarantees when they are accurate while limiting the loss when they are not~\citep{lykouris2018caching,cohen2023general}; this approach has been applied to online Nash social welfare and fair division~\citep{banerjee2022online,neoh2026online}. In the accelerator example, procurement plans may predict the capacity of future devices and justify giving a team fewer resources now. If those devices do not arrive, restoring fairness may require revising earlier assignments. Thus the value of predictions depends on which past decisions can be changed.

\subsection{Our Results}
All algorithms are deterministic, and valuations are additive unless stated otherwise. Table~\ref{tab:main-results} summarizes the main MMS bounds.

\paragraph{Fairness under item-level upgrades.}
A simple insertion procedure maintains, after every arrival, the allocation that any fixed picking sequence produces on the goods observed so far; this relies on the monotonicity of picking sequences when goods are added~\citep{chakraborty2021picking}. We show that no algorithm using only the agents' rankings of individual goods can guarantee a higher MMS factor than the best fixed picking sequence. Let $\alpha_n^{\mathrm{ord}}$ denote the supremum MMS factor for these \emph{ordinal} algorithms, and let $\alpha_n$ denote the supremum factor when the algorithm may also use the revealed cardinal values. We show that $\alpha_n^{\mathrm{ord}}=\alpha_n$ for $2\le n\le33$ and determine these factors exactly, including $\alpha_2=2/3$ and $\alpha_3=1/2$. For every $n$, we characterize $\alpha_n^{\mathrm{ord}}$ as the limit of a sequence of finite optimization problems. For $n\ge3$, both factors are strictly below $1/H_n$, where $H_n=\sum_{r=1}^n1/r$, yet their gaps from $1/H_n$ are $o(n^{-k})$ for every fixed $k>0$. In contrast, if EF1 is also required, the optimal MMS factor drops to $1/n$.

\paragraph{Bundle-level upgrades admit stronger guarantees.}
We show that bundle-level upgrades permit exact MMS, EFX, and Pareto optimality after every arrival for two additive agents, without predictions. For $n\ge3$, we obtain EF1, $1/2$-EFX, and $(n-1)/(2n-3)$-MMS. These MMS factors exceed the item-level upper bound for every $n$. For identical valuations, we maintain the classical longest-processing-time-first guarantee of $(3n-1)/(4n-2)$-MMS, together with EFX and Pareto optimality. Yet no factor above $10/11$ is possible under bundle-level upgrades for $n\ge3$, even with identical valuations. Thus several-for-one exchanges yield a constant MMS guarantee without predictions, although preserving every agent's value still prevents exact MMS for $n\ge3$.

\paragraph{Learning-augmented algorithms.}
Predictions of each agent's ranked values (her singleton values in nonincreasing order) determine a picking sequence that transfers any offline $\alpha$-MMS guarantee to the online setting. Agent $i$ receives at least $\alpha\MMS_i(G)-E_i$, where $E_i$ sums absolute differences between her predicted and realized ranked values after zero padding. This error measure also accounts for an incorrect number of goods. Nevertheless, under item-level upgrades, exact MMS with correct predictions rules out every positive MMS factor for arbitrary predictions, even when the number of goods is known and valuations are identical. For bundle-level upgrades, two agents can attain exact MMS regardless of the prediction. For $n\ge3$, however, no factor above $10/11$ is possible for arbitrary predictions, even when the number of goods is known. Predictions can improve fairness, but their consequences depend on which earlier assignments may be revised.

\begin{table}[t]
\caption{MMS guarantees with an unknown number of goods and no predictions, for deterministic algorithms and additive valuations. Bold entries are optimal. Here $\alpha_n$ is the optimal MMS factor under item-level upgrades. Except in the EF1 row, the upper bounds impose no additional fairness requirement.}
\label{tab:main-results}
\begin{center}
\small
\setlength{\tabcolsep}{3.5pt}
\renewcommand{\arraystretch}{1.2}
\begin{tabularx}{\textwidth}{@{}p{.205\textwidth} c c c X@{}}
\toprule
Recourse & $n$ & Achievable MMS & Upper bound & Implication \\
\midrule
None~\citep{neoh2026closinggaps} & $\ge2$ & $\mathbf{0}$ & $\mathbf{0}$ & Reassignment is necessary. \\
\midrule
Item-level upgrades & $2$ & $\mathbf{2/3}$ & $\mathbf{2/3}$ & Exact optimal factor. \\
 & $3$ & $\mathbf{1/2}$ & $\mathbf{1/2}$ & Exact optimal factor. \\
 & $4$--$33$ & $\boldsymbol{\alpha_n}$ & $\boldsymbol{\alpha_n}$ & Cardinal values do not improve on fixed picking sequences; see Table~\ref{tab:exact-factors}. \\
 & $\ge34$ & $1/H_n-o(n^{-k})$ & $<1/H_n$ & Every fixed $k>0$; a fixed sequence suffices. Exact factors remain open. \\
Item-level upgrades + EF1 & $\ge2$ & $\mathbf{1/n}$ & $\mathbf{1/n}$ & Better MMS requires giving up EF1. \\
\midrule
Bundle-level upgrades & $2$ & $\mathbf{1}$ & $\mathbf{1}$ & Exact MMS, EFX, and PO together. \\
 & $\ge3$ & $(n-1)/(2n-3)$ & $10/11$ & More than half of MMS, with EF1 and $1/2$-EFX. \\
Identical, bundle-level upgrades & $\ge3$ & $(3n-1)/(4n-2)$ & $10/11$ & More than three quarters of MMS, with EFX and PO. \\
\bottomrule
\end{tabularx}
\end{center}
\smallskip
\begin{minipage}{\textwidth}\footnotesize
Item-level proofs are in Appendix~\ref{app:item-proofs}, with exact factors in Table~\ref{tab:exact-factors} and an explicit guarantee for every $n$ in Corollary~\ref{cor:general-deadline-mms}. Bundle-level proofs are in Appendix~\ref{app:value-proofs}; their upper bounds hold for identical valuations. We do not claim polynomial running time for the two-agent exact MMS algorithm.
\end{minipage}
\end{table}

\subsection{Related Work}
We discuss the most closely related work here. Appendix~\ref{app:further_related_work} gives a more detailed discussion, including temporal voting, recourse in online algorithms, offline MMS guarantees, and connections to scheduling.

Dynamic fair division permits reassignment, often with a bound on disruptions~\citep{friedman2015dynamic,HeProcacciaPsomasZeng2019}. In online house allocation, where each agent receives exactly one house, recourse is inherent, and envy-freeability with subsidies can be maintained using bounded recourse~\citep{teh2026house}. We instead constrain the effect of reassignment on each recipient: bundle-level upgrades impose resource monotonicity, while item-level upgrades require distinct, weakly better replacements. A small number of reassignments can still substantially reduce an agent's value; conversely, many reassignments may leave every agent better off. Temporal fair division also requires fairness after every round, but keeps assignments irrevocable: even when all future goods are known, an allocation that is EF1 after every round need not exist~\citep{elkind2025temporal}. Under item-level upgrades, balanced picking sequences maintain EF1 after every arrival without any information about future goods (Theorem~\ref{thm:balanced-serial-insertion}). Earlier ordinal MMS guarantees use deadlines for picking turns~\citep{amanatidis2016truthfulMMS,halpern2021partial}; our upper bounds hold even for algorithms that use revealed cardinal values. For bundle-level upgrades, some EF1 and EFX guarantees follow from existing procedures for extending partial allocations~\citep{bu2025welfare}; our MMS guarantees do not.

Predictions have been studied for online allocation~\citep{banerjee2022online,cohen2023general} and, more specifically, for online MMS, proportionality up to one good, and EFX~\citep{neoh2026online,choo2026approximate,melissourgos2026onlineefx}, as well as for budget-constrained online fair division and online house allocation~\citep{cohen2026budget,teh2026house}. \citet{neoh2026online} already use value-frequency predictions and analyze prediction errors through a matching between predicted and arriving goods. With upgrade recourse, our error bound depends only on the sorted predicted and realized values and also allows an incorrect number of goods. We further compare guarantees for correct and arbitrary predictions. Unlike in bandit fair division~\citep{procaccia2024honor}, each good's values are observed on arrival, and the uncertainty concerns only future goods.

\section{Preliminaries}
\label{sec:preliminaries}
Let $[k]=\{1,\dots,k\}$, with $[0]=\varnothing$, and $\mathbb N_+=\{1,2,\dots\}$. For a real number $z$, let $[z]_+=\max\{z,0\}$, and let $H_n=\sum_{r=1}^n1/r$. There are $n\ge2$ \emph{agents} and $m\ge1$ indivisible \emph{goods} $G=\{g_1,\dots,g_m\}$, arriving in this order. 
Let $G^{(t)}=\{g_1,\dots,g_t\}$ and $G^{(0)}=\varnothing$. Each good must be allocated on arrival and remains available for reassignment. Unless a prediction is supplied, the algorithm knows nothing about future goods, including their number $m$ (which we call the \emph{horizon}). 

Agent $i$ has a fixed nonnegative valuation $v_i:2^G\to\mathbb R_{\ge0}$. Unless stated otherwise, valuations are \emph{additive}: $v_i(S)=\sum_{g\in S}v_i(g)$, where $v_i(g)=v_i(\{g\})$. When a good $g$ arrives, all values $v_i(g)$ are revealed. For nonadditive results, valuations are \emph{normalized}, $v_i(\varnothing)=0$, and \emph{monotone}, $v_i(S)\le v_i(T)$ whenever $S\subseteq T$. Where stated, they are also \emph{subadditive}: $v_i(S\cup T)\le v_i(S)+v_i(T)$. We allow value queries on subsets of observed goods. Normalization refers only to $v_i(\varnothing)=0$; in particular, $v_i(G)$ is not known in advance.

An allocation $\mathcal A=(A_1,\dots,A_n)$ partitions the goods into possibly empty bundles. After arrival $t$, the algorithm returns a complete allocation $\mathcal A^{(t)}$ of $G^{(t)}$; set $A_i^{(0)}=\varnothing$ and $\mathcal A=\mathcal A^{(m)}$. Fairness and efficiency refer to the final allocation. Without predictions, they must hold after every arrival, since arrivals could stop there. Guarantees conditional on a correct prediction need hold only for the realized final instance. Recourse constrains \emph{every} update, including when predictions are incorrect.
All algorithms are deterministic, and guarantees are worst-case over arrival sequences.

\subsection{Fairness and Efficiency}
An allocation is \emph{envy-free} (EF) if $v_i(A_i)\ge v_i(A_j)$ for all $i,j$. Since an envy-free allocation need not exist, even offline with two agents and one good that both value positively, we consider the following standard relaxations of EF. Approximation factors lie in $[0,1]$, and larger factors are better.

\begin{definition}[EF1 and EFX]
For $\alpha\in[0,1]$, an allocation is \emph{$\alpha$-EF1} if, for every $i,j\in[n]$ with $A_j\ne\varnothing$, some $g\in A_j$ satisfies $v_i(A_i)\ge\alpha v_i(A_j\setminus\{g\})$. It is \emph{$\alpha$-EFX} if this inequality holds for every $g\in A_j$ with $v_i(g)>0$. When $\alpha=1$, it is simply EF1 and EFX, respectively.
\end{definition}
For a set $H$ and $k\in\mathbb N_+$, let $\Pi_k(H)$ be the set of its ordered $k$-partitions, allowing empty parts.
\begin{definition}[MMS]
Let agent $i$'s $k$-maximin share be $\MMS_i^k(H)=\max_{(P_1,\dots,P_k)\in\Pi_k(H)}\min_{\ell\in[k]}v_i(P_\ell)$, and write $\MMS_i(H)=\MMS_i^n(H)$. For $\alpha\in[0,1]$, an allocation of $H$ is \emph{$\alpha$-MMS} if $v_i(A_i)\ge\alpha\MMS_i(H)$ for every $i\in[n]$.
\end{definition}
Removing one good cannot decrease MMS when the number of parts is also reduced by one (Lemma~\ref{lem:mms-deletion}); we refer to this fact as \emph{MMS deletion}. This lets us account for earlier picks and set aside singleton bundles in the proofs.

We also consider a commonly studied notion of economic efficiency in fair division.
An allocation is \emph{Pareto optimal} (PO) if no allocation $\mathcal B$ of the same goods has $v_i(B_i)\ge v_i(A_i)$ for all $i$, with at least one strict inequality. This comparison includes allocations that recourse may prevent the algorithm from reaching.

\subsection{Recourse}
We place no bound on the number of reassignments. Instead, we consider two forms of \emph{upgrade recourse}, which compare individual goods or entire bundles.
\begin{definition}[Item-level upgrade recourse]
For bundles $X,Y$, write $Y\succeq_i^{\mathrm{up}}X$ if there is an injection $\phi:X\to Y$ with $v_i(\phi(g))\ge v_i(g)$ for every $g\in X$. An algorithm has \emph{item-level upgrade recourse} if $A_i^{(t)}\succeq_i^{\mathrm{up}}A_i^{(t-1)}$ for every agent $i$ and arrival $t$.
\end{definition}
The injection requires distinct replacements, with a retained good allowed to replace itself.
In particular, an agent's number of goods can never decrease. This is appropriate when each good serves a separate need, but it can be restrictive when the agent cares only about her total value. An exchange of several goods for one may increase her value and free goods for other agents, yet item-level upgrade recourse excludes it. Our second recourse rule permits such exchanges.
\begin{definition}[Bundle-level upgrade recourse]
An algorithm has \emph{bundle-level upgrade recourse} if $v_i(A_i^{(t)})\ge v_i(A_i^{(t-1)})$ for every agent $i$ and arrival $t$.
\end{definition}
For additive valuations, item-level upgrade recourse implies bundle-level upgrade recourse by summing the individual replacement inequalities. For nonadditive valuations, the implication need not hold, even under subadditivity; Appendix~\ref{app:recourse-comparison} gives an example.

\subsection{Picking Sequences and Ranked Values}
Picking sequences are a standard way to allocate goods using only the agents' rankings of individual goods: agents take turns choosing their most valuable remaining good. Round-robin is a special case, and the MMS guarantee of a picking sequence depends on the positions of each agent's turns~\citep{amanatidis2016truthfulMMS,halpern2021partial}. Their monotonicity when goods are added makes them particularly useful for upgrade recourse~\citep{chakraborty2021picking}.

Fix an infinite \emph{picking sequence} $\sigma=(\sigma_1,\sigma_2,\dots)\in[n]^{\mathbb N_+}$. Its picking allocation on $G^{(t)}$ lets agent $\sigma_k$ choose her most valuable remaining good at turn $k\in[t]$. Picking considers all goods in $G^{(t)}$, so the good chosen at turn $k$ need not be $g_k$. Ties are broken by a fixed deterministic order over goods, and the order among old goods does not change when new goods arrive. Write $x\succ_i^{\mathrm{tb}}y$ when agent $i$ ranks $x$ above $y$, comparing by $v_i$ and breaking ties by this order; in particular, $v_i(x)\ge v_i(y)$.

A sequence is \emph{balanced} if each agent's number of turns differs from every other's by at most one in every prefix. Equivalently, each successive block of $n$ turns is a permutation of the agents. A \emph{round-robin} sequence repeats the same permutation.

For any sequence $\sigma$, let $P_i^\sigma(k)=\{\ell\in[k]:\sigma_\ell=i\}$ be the set of agent $i$'s turns among the first $k$, and $N_i^\sigma(k)=|P_i^\sigma(k)|$. For a finite set $H$ of $m$ goods, list $i$'s singleton values in nonincreasing order as $x_1^i(H)\ge\cdots\ge x_m^i(H)$, which we call her \emph{ranked values}, and define the \emph{rank value} $R_i^\sigma(H)=\sum_{k\in P_i^\sigma(m)}x_k^i(H)$, where $R_i^\sigma(\varnothing)=0$.
Before turn $k$, at most $k-1$ goods have been taken, so at least one of the choosing agent's $k$ most valuable goods remains. Hence agent $i$'s value in the picking allocation on $H$ is at least $R_i^\sigma(H)$. The \emph{ordered instance} of $H$ has goods $e_1,\dots,e_m$ and additive values $v_i^{\mathrm{ord}}(e_k)=x_k^i(H)$. It preserves each agent's multiset of singleton values, and hence her MMS, while making all agents rank the goods in the same order.

\section{Item-Level Upgrade Recourse}
\label{sec:goods-item}
Item-level upgrades leave room to revise an allocation, but not to take away an earlier good without replacing it. The challenge is therefore to choose fair allocations that can still be upgraded to fair allocations as the set of goods grows. Recomputing a fair allocation from scratch does not ensure this: an agent might have to give up a good for several goods that are individually less valuable.

Our approach fixes who owns each picking turn, rather than which good that turn receives. When a more desirable good arrives, an earlier turn can take it and pass its old choice to a later turn. The monotonicity argument of \citet{chakraborty2021picking} ensures that no earlier choice becomes worse. We implement these changes by insertion, maintaining the picking allocation on all observed goods. This separates two questions: how to preserve earlier choices, and how to choose the turns to obtain fairness. We first recover EF1 using balanced turns, then choose different orders to obtain stronger MMS guarantees.

\subsection{Insertion and EF1}
After $g_t$ arrives, call $g_t$ the unallocated good, and let each earlier picking turn, in order, compare its current good with the unallocated good. If the turn's owner prefers the unallocated good, the turn takes it, and its old good becomes the unallocated good. Otherwise, the turn keeps its current good. The final unallocated good goes to the new turn $t$. We call this procedure \emph{insertion}; Algorithm~\ref{alg:serial-insertion} in Appendix~\ref{app:serial-insertion} gives the pseudocode.
\begin{lemma}
\label{lem:serial-insertion}
Fix any infinite picking sequence $\sigma$.
Algorithm~\ref{alg:serial-insertion} run with $\sigma$ satisfies item-level upgrade recourse. Moreover, after each arrival $t$, the maintained allocation $\mathcal A^{(t)}$ is exactly
the picking allocation induced by $(\sigma_1,\dots,\sigma_t)$ on
$G^{(t)}$.
\end{lemma}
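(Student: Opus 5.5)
We argue by induction on $t$, proving the second claim, that $\mathcal A^{(t)}$ is the picking allocation of $(\sigma_1,\dots,\sigma_t)$ on $G^{(t)}$, together with recourse. The base case $t=1$ is immediate. For the inductive step, let $c_k$ be the good held by turn $k\le t-1$ before $g_t$ arrives, and let $c_k'$ be its good after insertion. By hypothesis, $c_k$ is the $\succ^{\mathrm{tb}}_{\sigma_k}$-best good in $G^{(t-1)}\setminus\{c_1,\dots,c_{k-1}\}$. Write $u_k$ for the unallocated good when turn $k$ is processed, so $u_1=g_t$.

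\textbf{Recourse.} Turn $k$ either keeps $c_k$ or swaps it for $u_k$ with $u_k\succ^{\mathrm{tb}}_{\sigma_k}c_k$. So every old turn's good is replaced by one its owner weakly prefers. For each agent $i$, map $c_k\mapsto c_k'$ over $i$'s old turns. This map is injective because distinct turns hold distinct goods, and it satisfies $v_i(c_k')\ge v_i(c_k)$. Hence $A_i^{(t)}\succeq_i^{\mathrm{up}}A_i^{(t-1)}$. The agent who owns the new turn $t$ only gains a good.

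\textbf{Correctness.} The key invariant is
\[
\{c_k,\dots,c_{t-1}\}\cup\{u_k\}\;=\;G^{(t)}\setminus\{c_1',\dots,c_{k-1}'\},
\]
and, for the old remaining set $R_k=G^{(t-1)}\setminus\{c_1,\dots,c_{k-1}\}$, the new remaining set satisfies $R_k'=G^{(t)}\setminus\{c_1',\dots,c_{k-1}'\}=R_k\cup\{u_k\}$. The first identity holds by conservation of goods: each step only exchanges a turn's good with $u$. The second follows from it together with the inductive fact that $\{c_k,\dots,c_{t-1}\}=R_k$, which is true because the old picking allocation exhausts $G^{(t-1)}$.

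Given $R_k'=R_k\cup\{u_k\}$, agent $\sigma_k$'s best good in $R_k'$ is the better of $c_k$ (her best in $R_k$) and $u_k$. This is exactly what the insertion step assigns, so $c_k'$ agrees with the picking allocation's choice at turn $k$. Moreover, the leftover good is the worse of the two, so $R_{k+1}'=R_{k+1}\cup\{u_{k+1}\}$ and the invariant propagates.

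At the end, $R_t'=\{u_t\}$ is the single remaining good. Turn $t$ therefore picks exactly $u_t$, which is what insertion assigns to it.

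The main obstacle is stating this invariant cleanly. The tie-breaking order must be a fixed total order that is consistent across arrivals, so that "best in $R_k\cup\{u_k\}$" is the better of $c_k$ and $u_k$ under $\succ^{\mathrm{tb}}$. Without that consistency, the comparison at turn $k$ could disagree with the choice made in the fresh picking allocation on $G^{(t)}$.
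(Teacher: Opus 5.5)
Your proposal is correct and follows essentially the same route as the paper. For recourse, you map each old position's good to the good at the same position after the update. For correctness, you use the same inner invariant: the unprocessed positions hold the old remaining set $R_k$, plus one unallocated good. From this, turn $k$'s new pick is the $\succ^{\mathrm{tb}}_{\sigma_k}$-better of $c_k$ and $u_k$, by consistency of the tie-breaking order under restriction.
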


For recourse, each old good is mapped to the good that the same turn holds after the update; these replacements are distinct because different turns hold different goods. An update takes $O(n+t)$ time, for $O(nm+m^2)$ overall, and there are at most $m(m-1)/2$ reassignments.

Balanced turns now give the familiar EF1 guarantee in the online setting. To see why, compare agents $i$ and $j$: the good chosen at $j$'s $(\ell+1)$st turn is still available at $i$'s $\ell$th turn. Agent $i$ therefore values her choice at least as much. Pairing these goods leaves only $j$'s first good unmatched, which is precisely the good removed for EF1. The implication from EF1 to $1/n$-MMS is standard~\citep{amanatidis2018relaxations}.
\begin{theorem}\label{thm:balanced-serial-insertion}
For every $n\ge2$ and every balanced sequence $\sigma$, Algorithm~\ref{alg:serial-insertion} run with $\sigma$ satisfies item-level upgrade recourse, and its final allocation is EF1 and $1/n$-MMS.
\end{theorem}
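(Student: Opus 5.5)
Recourse comes directly from Lemma~\ref{lem:serial-insertion}: for any infinite $\sigma$, Algorithm~\ref{alg:serial-insertion} has item-level upgrade recourse, and after every arrival $t$ it maintains exactly the picking allocation of $(\sigma_1,\dots,\sigma_t)$ on $G^{(t)}$. Since the horizon is unknown, I would prove EF1 and $1/n$-MMS for the picking allocation of every prefix of a balanced sequence on every finite set of goods. This covers the final allocation, whatever $m$ turns out to be, and also every intermediate one.

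\textbf{EF1.} Fix agents $i\ne j$ and a prefix of length $t$. Let $i$'s turns be $p_1<\dots<p_a$ and $j$'s turns be $q_1<\dots<q_b$, and let $a_\ell,b_\ell$ be the goods chosen at these turns. Balance of every prefix gives $q_{\ell+1}>p_\ell$ for every $\ell$ with $\ell+1\le b$, and also $b\le a+1$. To see the first claim, suppose $q_{\ell+1}<p_\ell$. Then the prefix of length $q_{\ell+1}$ gives $j$ at least $\ell+1$ turns and $i$ at most $\ell-1$, a difference of at least two. The bound $b\le a+1$ follows the same way from the whole prefix.

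So at turn $p_\ell$ the good $b_{\ell+1}$ is still unpicked. Because $i$ takes her $\succ_i^{\mathrm{tb}}$-top remaining good, $v_i(a_\ell)\ge v_i(b_{\ell+1})$. Summing over $\ell=1,\dots,b-1$ and using $b-1\le a$ together with nonnegativity gives $v_i(A_i)\ge v_i(A_j\setminus\{b_1\})$. If $A_j=\varnothing$ there is nothing to show. Here I must be careful that the goods held by turns in the final picking allocation are exactly those chosen at those turns. This holds by the definition of the picking allocation, with no appeal to insertion dynamics.

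\textbf{$1/n$-MMS.} I would give a direct argument rather than cite \citet{amanatidis2018relaxations}. For each $j\ne i$ choose $g_j\in A_j$ as in the EF1 step (if $A_j\neq\varnothing$), so $v_i(A_j)\le v_i(A_i)+v_i(g_j)$. Let $S=\{g_j\}$, which has at most $n-1$ goods. By MMS deletion (Lemma~\ref{lem:mms-deletion}) applied $|S|$ times, $\MMS_i(H)\le\MMS_i^{n-|S|}(H\setminus S)\le v_i(H\setminus S)/(n-|S|)$.

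I then need $v_i(H\setminus S)\le (n-|S|)\,v_i(A_i)$, which does not follow from naive summation. The cleaner route is to use the specific structure: $g_j=b_1$ is $j$'s first pick. If $j$'s first turn precedes $i$'s first turn, then $g_j$ is among the at most $n-1$ goods picked before $i$ first picks. Otherwise, $v_i(g_j)\le v_i(a_1)$, and then $v_i(A_j)\le v_i(A_i)$ fully.

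Let $T$ be the set of agents whose first turn precedes $i$'s, and let $k=|T|$. Remove their first picks $S$ with $|S|=k$. By MMS deletion $\MMS_i(H)\le\MMS_i^{n-k}(H\setminus S)\le v_i(H\setminus S)/(n-k)$. Every agent $j\notin T$ satisfies $v_i(A_j)\le v_i(A_i)$. Every $j\in T$ satisfies $v_i(A_j\setminus\{g_j\})\le v_i(A_i)$. Hence $v_i(H\setminus S)\le n\,v_i(A_i)$. This gives only $\MMS_i\le n\,v_i(A_i)/(n-k)$, which is weaker than needed, so I expect this step to be the main obstacle.

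To fix it, I would instead argue by the ordered instance. Within each block of $n$ turns, $i$'s $\ell$th turn lies at position at most $\ell n$, so $R_i^\sigma(H)\ge\sum_\ell x^i_{\ell n}$. Then I would use the standard bound $\MMS_i(H)\le\sum_{\ell\ge1}x^i_{(\ell-1)n+1}$ restricted suitably. Concretely, the top good together with the goods $x_{\ell n+1},\dots$ is dominated by $\sum_\ell x_{\ell n}$, but the first good needs handling: if $x^i_1>\MMS_i$ one deletes it via MMS deletion and inducts on $n$ with the remaining sequence, which is still balanced. I would try this induction first. The base case is that every good is worth at most $\MMS_i$, in which case $\MMS_i\le v_i(H)/n\le\sum_\ell x_{\ell n}+x_1/n$ after a block-averaging step, and that inequality is the part to verify.
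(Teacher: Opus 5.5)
Your recourse and EF1 arguments match the paper's proof: pairing $i$'s $\ell$th turn with $j$'s $(\ell+1)$st, using balancedness and $b\le a+1$, is exactly how the paper proves EF1. The gap is in the $1/n$-MMS step. The paper gets it by citing the standard EF1$\Rightarrow1/n$-MMS implication, and your proposal leaves it unfinished because you misread the target. $1/n$-MMS means $\MMS_i(H)\le n\,v_i(A_i)$. The condition you say you ``need'', $v_i(H\setminus S)\le(n-|S|)\,v_i(A_i)$, would give exact MMS, which balanced picking does not guarantee. Your first chain already finishes the proof: $\MMS_i(H)\le v_i(H\setminus S)/(n-|S|)\le n\,v_i(A_i)/(n-|S|)\le n\,v_i(A_i)$, since $|S|\le n-1$. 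You can even skip deletion, because some part of an MMS partition avoids the at most $n-1$ goods of $S$. Likewise, the bound $\MMS_i\le n\,v_i(A_i)/(n-k)$ from your second attempt is sufficient, not ``weaker than needed''.

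Two further steps fail as written. First, the claim that $v_i(A_j)\le v_i(A_i)$ whenever $j$'s first turn follows $i$'s is false for general balanced sequences, because the order within blocks can change. With $n=2$, $\sigma$ beginning $(1,2,2,1)$, and three goods of equal value, agent $1$ holds one good and values agent $2$'s two goods at twice her own bundle. Second, your fallback's base-case inequality $v_i(H)/n\le\sum_\ell x^i_{\ell n}+x^i_1/n$ is false. For $n\ge3$ and $2n-1$ unit-valued goods, where every good is worth at most $\MMS_i=1$, the left side is $2-1/n$ and the right side is $1+1/n$. The proposed induction, deleting $x_1^i$ and passing to a ``balanced sequence on $n-1$ agents'', is also not well defined.

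If you want an ordinal route, it closes in one line without induction. Some part of an MMS partition avoids the $n-1$ top-ranked goods. Grouping the remaining ranks into blocks $\{\ell n,\dots,(\ell+1)n-1\}$, and using that $i$'s $\ell$th turn lies at position at most $\ell n$, gives $\MMS_i(H)\le\sum_{k\ge n}x^i_k(H)\le n\sum_{\ell\ge1}x^i_{\ell n}(H)\le n\,R^\sigma_i(H)\le n\,v_i(A_i)$.
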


\subsection{Stronger MMS Guarantees}
The $1/n$-MMS guarantee gives every agent a positive fraction of her benchmark, but that fraction becomes small as the number of agents grows. Can we do better while preserving every earlier choice? Balanced turns are natural for controlling envy, but MMS compares an agent with her own partition benchmark. An early turn may secure much of the available value; an agent who starts later may instead need more subsequent turns. This suggests departing from balanced orders. For two agents, the order $(1,2,2,1,2,2,\dots)$ compensates the second agent with more later turns. Its offline $2/3$-MMS guarantee is due to \citet{amanatidis2016truthfulMMS}; insertion makes it feasible online.
The factor $2/3$ is optimal even for arbitrary item-level upgrade algorithms (Theorem~\ref{thm:tight-two-agent-mms}).

For general $n$, we first give each agent one turn, in the order $1,\ldots,n$, and compensate later starters with more subsequent turns. After the first $i-1$ picks, MMS deletion leaves $n-i+1$ parts without lowering agent $i$'s benchmark. To obtain $\alpha$-MMS, it suffices to give her an $\alpha/(n-i+1)$ fraction of the total value of the remaining goods. Picking guarantees this value fraction when she receives at least the same fraction of turns in every prefix starting at her first turn. We enforce this by assigning deadlines to her successive turns, spaced roughly $(n-i+1)/\alpha$ positions apart, and ordering all agents' turns by deadline. The requested fractions sum to $\alpha H_n$, which must be at most one, explaining the target $1/H_n$. The difficulty is meeting every deadline after rounding to integer positions. We show that the loss needed for this rounding decreases faster than any inverse polynomial in $n$.
\begin{theorem}\label{thm:near-harmonic}
There is a constant $c>0$ such that, for every sufficiently large $n$, insertion with a fixed picking sequence guarantees $(1/H_n-\exp[-c(\log n)^{3/2}])$-MMS. Conversely, for every $n\ge3$, the supremum MMS factor of deterministic algorithms with item-level upgrade recourse is strictly below $1/H_n$, even when they use the revealed cardinal values.
\end{theorem}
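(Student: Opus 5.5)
The proof has two parts: a deadline-based picking sequence for the lower bound, and an adversarial construction for the strict upper bound.

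\textbf{Lower bound.} By Lemma~\ref{lem:serial-insertion}, it suffices to design a fixed sequence $\sigma$ whose offline picking allocation is $\alpha$-MMS on every prefix set $G^{(t)}$, with $\alpha=1/H_n-\varepsilon_n$. First, I give turns $1,\dots,n$ to agents $1,\dots,n$ in that order. For agent $i$, MMS deletion (Lemma~\ref{lem:mms-deletion}) applied to the $i-1$ goods picked before her first turn gives $\MMS_i(H)\le \MMS_i^{n-i+1}(H')\le v_i(H')/(n-i+1)$, where $H'$ is the set of remaining goods. I will show a rank-value bound: if, for every $k$ from her first turn onward, agent $i$ owns at least a $\beta_i=\alpha/(n-i+1)$ fraction of the turns in positions $[i,k]$, then $R_i^\sigma\ge\beta_i\,x(H')$. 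Following the MMS deadline argument of Amanatidis et al., this comes from a majorization argument. Her $\ell$-th turn falls at or before position $i-1+\lceil \ell/\beta_i\rceil$, and ranked values are nonincreasing, so an Abel summation compares her picks against the uniform $\beta_i$-share of the ranked sequence. Truncation to an arbitrary prefix $t$ is harmless because the condition is prefix-closed.

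The remaining task is scheduling. Agent $i$'s $\ell$-th additional turn gets deadline $d_{i,\ell}\approx n+\ell(n-i+1)/\alpha$, and all turns are ordered by earliest deadline. EDF meets every deadline iff, for every $T$, the number of turns with deadline $\le T$ is at most $T$. That count is $n+\sum_i\lfloor (T-n)\beta_i\rfloor$, up to the exact offsets. Since $\sum_i\beta_i=\alpha H_n<1$, the slack is $(1-\alpha H_n)(T-n)$, and this must absorb the $O(n)$ rounding and start-up terms. The main obstacle is that the slack is only of order $\varepsilon_n H_n T$, which is too small for $T$ near $n$. Naive rounding would force $\varepsilon_n\approx 1/\log^2 n$, not the claimed superpolynomially small loss.

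To overcome this, I would not use exact fractions $\beta_i$. Instead, I would group agents into blocks of consecutive indices with nearly equal $n-i+1$ and shift offsets within each block so that their roundings interleave, making the total count error $O(\text{number of blocks})$ rather than $O(n)$. Agents with small $n-i+1$ (late starters with large $\beta_i$) would get slightly inflated shares, funded by the first-turn slack. I would then optimize the block sizes over $\approx(\log n)^{1/2}$ scales. This is where I expect to obtain the $\exp[-c(\log n)^{3/2}]$ bound, and it is the step I am least sure of.

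\textbf{Upper bound.} The plan is an adversary argument against any deterministic algorithm with item-level recourse, including cardinal ones. Item-level recourse means each agent's count of goods never decreases and her $k$-th best good never gets worse. The adversary releases goods in decreasing value scales $1,\delta,\delta^2,\dots$ (with $\delta\to 0$), using identical or near-identical valuations, and stops when some agent is below the target factor. The ratio $1/H_n$ is exactly tight for the continuous fractional deadline relaxation, since the fractions must sum to $\le1$. To get a strict gap, I would exhibit for $n\ge3$ a specific finite prefix where integrality forces a loss. For example, after the first $n$ goods each agent holds one, and the agent holding the cheapest of them, relative to future goods, needs a share exceeding her fractional quota. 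Making this strict uniformly against adaptive algorithms is the delicate part. I would argue by compactness over the finitely many combinatorial choices at each of boundedly many stages, so that a finite instance yields a factor $1/H_n-\eta$ with $\eta>0$.
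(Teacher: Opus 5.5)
\textbf{Lower bound.} Your setup (first turns $1,\dots,n$, MMS deletion, the Abel-summation prefix lemma, earliest-deadline scheduling) matches the paper's. However, the step you flag is a genuine gap, and the proposed repair cannot work.

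Under the sufficient condition you state, the agent with $r=n-i+1$ must own $\lceil\alpha k/r\rceil$ of the first $k$ positions starting at her fixed first turn. Her deadlines $i+\lfloor jr/\alpha\rfloor$ are therefore forced. There are no offsets to shift, and inflating anyone's share only adds requests. The number of requests due by time $n+x$ is exactly $n+\sum_{r=1}^n(\lceil\alpha+\alpha x/r\rceil-1)$, so feasibility is a fixed arithmetic inequality, not a design choice.

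An error of $O(\text{number of blocks})$ is also the wrong target. Take $\alpha=1/H_n-\varepsilon$ with $\varepsilon$ superpolynomially small. The slack $x-\alpha xH_n=\varepsilon H_nx$ stays below one until $x\ge1/(\varepsilon H_n)$. Before then, the integer count of requests due must essentially not exceed $\alpha xH_n$, so rounding down must \emph{save} the whole start-up term $\alpha n$; an error of order one is fatal. Since $\sum_r\lfloor\alpha+y/r\rfloor=\alpha n+yH_n-\sum_r\{\alpha+y/r\}$, what is needed is $\sum_{r=1}^n\{\alpha+y/r\}\ge\alpha n$ for all $0\le y\le Y$, with $Y$ superpolynomial in $n$. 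Beyond $Y$, the crude bound suffices at a cost of $n/(YH_n^3)$ (Lemma~\ref{lem:longer-counting-range}).

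This is an equidistribution property of $y/r$ modulo one, and the paper proves it on three ranges:
\begin{itemize}
\item the integral estimate~\eqref{eq:deadline-rounding} for $y\le n^2/54$;
\item fifth finite differences (Lemma~\ref{lem:fifth-difference-count}) for $An\le y\le n^6/C$;
\item the Vinogradov-type exponential-sum bound of Matom\"aki et al.\ for $X^5\le y\le X\exp(\eta(\log X)^{3/2})$, with $X=\lfloor n/2\rfloor$.
\end{itemize}
The exponent $3/2$ is exactly the range on which that exponential-sum bound stays nontrivial. Nothing in a block-interleaving scheme produces it.

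\textbf{Upper bound.} You correctly expect integrality to create the gap, but you give no witness. Compactness only turns a strict failure at every leaf of a finite adversary tree into a uniform margin; it does not supply the failure. Your heuristic that the last starter needs more than her quota is not justified as stated. For large $n$, the deadline sequence meets every quota at every horizon for factors only slightly below $1/H_n$, so the obstruction must come from a carefully chosen horizon.

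The paper's argument has three steps.
\begin{itemize}
\item[(i)] Each agent values arrival $j$ at $C^{-(j-1)}$ until she first receives a good, at arrival $i$. She values later goods among the first $n+x$ at $qC^{-(i-1)}$. Then the first $n$ goods get distinct permanent owners, and the counts $c_r$ after arrival $n+x$, which sum to $n+x$, cannot change once only cheaper goods arrive. These agent-specific values keep each agent's later goods comparable to her own first good, which one common scale cannot do for all agents at once. The identical-valuation construction (Theorem~\ref{thm:mms-impossibility}) gives only $1/H_n$.
\item[(ii)] It picks $b<1/H_n$ and $x$ with $\sum_r(\lfloor b(1+x/r)\rfloor+1)>n+x$. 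The construction uses a Bertrand prime $p\in(n/2,n]$, $D_n=\operatorname{lcm}(1,\dots,n)$, and $k$ with $(D_n/p)k\equiv-1\pmod p$. This forces some $c_r\le\lfloor b(x+r)/r\rfloor$.
\item[(iii)] It appends many tiny goods. Equal-valued ones give every other agent $s$ a required fraction above $\gamma/s$. The deficient agent instead values them so as to top up the $d_r$ short parts of a partition of $g_i,\dots,g_{n+x}$ into parts of sizes $h_r-1$ and $h_r$. She then requires a fraction above $(\gamma h_r-c_r)/d_r$; if $r\mid x$, her count requirement already fails.
\end{itemize}
These fractions sum to more than one for an explicit $\gamma<1/H_n$, which covers $n\ge5$. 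The cases $n=3,4$ follow from the exact values $1/2<6/11$ and $10/21<12/25$.
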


For every fixed $k>0$, both optimal factors are therefore within $o(n^{-k})$ of $1/H_n$; in particular, using cardinal values instead of rankings can improve the MMS factor by at most $o(n^{-k})$. Appendix~\ref{app:uniform-mms} proves the theorem; Corollary~\ref{cor:general-deadline-mms} gives a simpler explicit factor of at least $1/(H_n+1)$ for every $n\ge2$.

For small numbers of agents, we determine the optimal factors exactly. Do fixed picking sequences remain optimal, or can an unrestricted algorithm use cardinal values to decide who should receive an early good? For three agents, we show that this extra information does not improve the factor $1/2$. The lower bound uses a fixed sequence; the upper bound chooses later values depending on who receives the first two goods. Appendix~\ref{app:three-agent-tight} gives the six-good example. The same comparison between achievable picking guarantees and upper bounds for unrestricted algorithms yields exact factors through $33$ agents.

\begin{theorem}\label{thm:finite-exact}
For every $2\le n\le33$, $\alpha_n=\alpha_n^{\mathrm{ord}}$, and insertion with a fixed picking sequence attains this factor. In particular, $\alpha_2=2/3$ and $\alpha_3=1/2$; Table~\ref{tab:exact-factors} gives the remaining exact factors. The upper bounds apply to arbitrary item-level upgrade algorithms, not only ordinal ones.
\end{theorem}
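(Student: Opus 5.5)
The plan is to sandwich $\alpha_n$ between a lower bound attained by a concrete picking sequence and a matching upper bound valid for every deterministic item-level upgrade algorithm. Since every ordinal algorithm is also a cardinal one, $\alpha_n^{\mathrm{ord}}\le\alpha_n$ holds automatically, so it suffices to show
\[
\alpha_n\le\beta_n\le\alpha_n^{\mathrm{ord}}
\]
for each $2\le n\le33$, with a common explicit value $\beta_n$.

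For the lower bound, I will fix for each $n$ a specific infinite sequence $\sigma$, generated by the deadline construction behind Theorem~\ref{thm:near-harmonic} with rationally tuned spacings. By Lemma~\ref{lem:serial-insertion}, insertion maintains the picking allocation on $G^{(t)}$. Agent $i$'s value is at least $R_i^\sigma(G^{(t)})$, so it suffices to bound $R_i^\sigma(H)/\MMS_i(H)$ from below on ordered instances. I will do this in two steps:
\begin{itemize}
\item Apply MMS deletion (Lemma~\ref{lem:mms-deletion}) to the goods taken before agent $i$'s first turn. This reduces the benchmark to $\MMS_i^{n-i+1}$ of the remaining goods.
\item Use the prefix-fraction argument. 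If agent $i$ holds at least a $\beta_n/(n-i+1)$ fraction of the turns in every prefix starting at her first turn, then, because ranked values are nonincreasing, an Abel-summation comparison gives $R_i\ge \beta_n\cdot v_i(\text{rest})/(n-i+1)\ge\beta_n\MMS_i$.
\end{itemize}
For small $n$ the fraction condition will be slightly too lossy. There I will instead state the guarantee as a finite linear program: minimise $R_i$ subject to $\MMS_i\ge1$ over nonincreasing value vectors. I will reduce to finitely many prefix lengths by showing that the worst case occurs within the first period of an eventually periodic $\sigma$, and then check the bound by computer.

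For the upper bound, I will build an adaptive adversary, generalising the six-good example of Appendix~\ref{app:three-agent-tight}. Identical or near-identical valuations suffice. The adversary releases large goods one by one and watches which agent receives each. Item-level recourse forbids an agent's count of goods above any threshold from decreasing, so an early recipient keeps a large good forever. The adversary then releases a batch of small goods, sized so that the agents still waiting are forced below $\beta_n\MMS$ unless the algorithm has already committed in a way the next branch punishes.

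This gives a finite game tree, branching on who receives each good, and the algorithm's best value is the game's value. Ties and symmetry among agents collapse the branches. I will encode the game as a mixed-integer/linear feasibility problem whose optimum equals $\beta_n$. Equality with the picking lower bound then yields Table~\ref{tab:exact-factors}. The cases $n=2,3$ give $2/3$ and $1/2$ and can be written out by hand: for $n=2$, three goods of values $1,1,1$ followed by adaptively chosen goods.

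The main obstacle is making the adversary's bound exactly match the sequence's guarantee. This will require designing the adversary so that its "threshold counts" mirror the deadline structure of the optimal sequence. Concretely, at each stage the adversary should offer a good whose recipient pattern forces the algorithm into the same turn-ownership prefix as $\sigma$. Failing that, I would fall back on a computer search over adversary strategies of bounded depth and rely on their coincidence with the LP values for $n\le33$, which matches why the theorem stops at $33$.
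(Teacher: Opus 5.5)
\paragraph{Verdict: genuine gap, mainly in the upper bound.}
Your sandwich framework ($\alpha_n\le\beta_n\le\alpha_n^{\mathrm{ord}}$) is right. Where the prefix-fraction argument suffices ($n\le3$ and $n\in\{7,\dots,10,13,14,19,\dots,33\}$), your lower bound is the paper's. The gap is on the upper-bound side, which is where the theorem's content lies (cardinal values do not help), and your hints point the wrong way.

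\textbf{Identical valuations are not enough.} No known argument reaches the table values with identical valuations. The natural identical-valuation adversary (many small goods, then $n$ large ones; Theorem~\ref{thm:mms-impossibility}) yields only $1/H_n$. Every $\alpha_n$ with $n\ge3$ is strictly below $1/H_n$. Already for $n=3$, the six-good example works only because of heterogeneous values: agents who did not receive a unit good value the four later goods at $1/4$, while the unit-good owners value them at $q\approx1$. With a common value $q$, every agent can receive essentially her full MMS.

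\textbf{What the paper's adversary does.} You have the freezing idea (counts above a threshold cannot drop), but not the quantitative mechanism that makes the bound meet $\alpha_n$. The paper makes values depend on the algorithm's choices.
\begin{itemize}
\item Before first receiving a good, an agent values arrival $j$ at $C^{-(j-1)}$. Afterwards she values the next goods at $q$ times her own first good.
\item Hence the first $n$ goods go to distinct agents and are never replaced. The earlier first goods act as singleton parts in each agent's MMS, so every agent simultaneously faces an $(n-i+1)$-part requirement. This forces counts $c_r\ge B_r$ after $n+x$ arrivals.
\item A tail of goods worth less than everything earlier then freezes all counts. Each agent receives one of two completions of that tail, with total value $rZb_i$ or $d_rqb_i$ (the second exploits unequal part sizes).
\item These force tail fractions above $\beta/r$ or above $(\beta h_r-c_r)/d_r$, and the fractions sum past one (Lemmas~\ref{lem:one-prefix-upper} and~\ref{lem:two-prefix-upper}).
\end{itemize}
The result is a set of closed-form rational inequalities. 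A game-tree or MILP search cannot replace this. The required prefixes have $x$ up to about $2.7\cdot10^8$ goods plus an unboundedly long tail, and the adversary chooses continuous values, so a bounded-depth search certifies nothing. For $n=2$, the bound comes from many small goods followed by two large ones ($1/H_2=2/3$), not from an opening of unit goods.

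\textbf{Lower-bound gap for some $n$.} For $n\in\{4,5,6,11,12,15,16,17,18\}$, plain deadlines at factor $\alpha_n$ are infeasible, and your fallback does not close this. Minimizing $R_i$ subject to $\MMS_i\ge1$ is a disjunction over all partitions of arbitrarily many goods, and you give no reason the worst case lies in the first period.

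The missing tool is the weighted-partition argument:
\begin{itemize}
\item By LP duality (Lemma~\ref{lem:partition-weights}), $a$-MMS holds if every partition admits weights in $[0,\lambda]$ summing to $a$ whose weighted part counts are at most the agent's own count at every prefix.
\item Concavity and symmetry show that the extremal partition of the first $L=rh-d$ ranks is the balanced one, with $d$ parts of size $h-1$.
\item This justifies delaying one agent's turns (Lemma~\ref{lem:delayed-turns}, refined for $n=16$ by Lemma~\ref{lem:refined-early-turns}).
\end{itemize}
Together these prove the guarantee at every horizon. All that remains is the check $D(t)\le t$, which is finite by periodicity or by $D(t)<\rho t+B$.
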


\subsection{The Role of Rankings}
The guarantees above use fixed picking sequences, even though new rankings become available after every arrival. Could an ordinal algorithm use them to choose who picks next, or to reassign goods in ways that fixed picking cannot reproduce? Theorem~\ref{thm:ordinal-optimality} shows that neither kind of adaptation improves the worst-case MMS factor.

To see why, suppose every agent strictly prefers earlier arrivals to later ones. No old good can change owners: the first good has no better replacement, then neither does the second, and so on. On these instances, every ordinal algorithm sees the same rankings, follows a fixed sequence, and gives each agent exactly her rank value. Insertion with that sequence gives each agent at least her rank value on every instance. Appendix~\ref{app:ordinal-characterization} proves this claim and gives a finite optimization approximating $\alpha_n^{\mathrm{ord}}$ to any desired accuracy. This characterizes $\alpha_n^{\mathrm{ord}}$; whether $\alpha_n^{\mathrm{ord}}=\alpha_n$ for $n\ge34$ remains open.

\subsection{Limits of Item-Level Upgrades}
Allocations achieving the stronger MMS guarantees need not be EF1, and this is unavoidable: Theorem~\ref{thm:goods-ef1-mms-tradeoff} shows that requiring EF1 reduces the optimal MMS factor to $1/n$. EF1 spreads early small goods nearly evenly. If large goods arrive later, their recipients cannot give up enough small goods, so an agent without a large good gets only about a $1/n$ fraction of her MMS. Thus the choice of picking turns reflects a genuine conflict between the two fairness notions, not a limitation of insertion. Appendix~\ref{app:ef1-mms} gives the full tradeoff for approximate EF1, and Appendix~\ref{sec:limits} gives further impossibility results for EFX and efficiency.

\section{Bundle-Level Upgrade Recourse}
\label{sec:value-monotone}
The preceding upper bounds rely on an important restriction: receiving one valuable good does not allow an agent to release several less valuable goods, even when her total value would increase. Bundle-level upgrades remove this restriction, permitting exchanges that leave more goods for agents whose MMS targets are not yet met.

Several familiar fair-division operations already preserve each recipient's value. Envy-cycle elimination does so by giving each participating agent a preferred bundle~\citep{lipton2004approximately}. Applying the partial-allocation procedures of \citet{bu2025welfare} after each arrival yields polynomial-time bundle-level algorithms for EF1 with any number of additive agents and EFX with two (Appendix~\ref{app:partial-allocations}). These procedures do not settle MMS: they ensure that no agent's value decreases, but an agent's MMS may increase more than her value does.

\subsection{Exact MMS for Two Agents}
Can we meet each agent's new MMS while preserving her old value? EFX alone does not ensure this: for two identical agents and goods worth $3,3,2,2$, assigning equal-valued goods together is EFX, but one agent receives $4$ when her MMS is $5$. Nevertheless, exact MMS, EFX, and PO can coexist after every arrival (Appendix~\ref{app:two-agent-exact}).
\begin{theorem}
\label{thm:two-agent-value-monotone-exact}
    For two agents with nonnegative additive valuations, there is a deterministic online algorithm with bundle-level upgrade recourse whose allocation after every arrival is exact MMS, EFX, and PO. 
\end{theorem}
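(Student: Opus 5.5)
The plan is an exhaustive-search algorithm driven by an invariant, in line with the paper's remark that no polynomial running time is claimed. After arrival $t$, write $x_i=v_i(A_i^{(t-1)})$. Call an allocation $\mathcal B$ of $G^{(t)}$ \emph{feasible} if
\[
v_i(B_i)\ge\max\{x_i,\MMS_i(G^{(t)})\}\quad\text{for } i=1,2 .
\]
The algorithm returns a feasible allocation that is PO and EFX. Feasibility builds in bundle-level upgrade recourse and exact MMS, so the proof reduces to two claims.
\begin{enumerate}
\item The feasible set is nonempty.
\item The feasible set contains an allocation that is both PO and EFX.
\end{enumerate}
The induction hypothesis is that $\mathcal A^{(t-1)}$ is MMS, EFX and PO for $G^{(t-1)}$; I expect to need a slightly stronger invariant, described below.

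\emph{PO is the easy part.} The feasible set is upward closed in utility space: any Pareto improvement of a feasible allocation is again feasible. Hence a feasible allocation that maximizes a lexicographic objective is PO among all allocations of $G^{(t)}$, not only among feasible ones. A suitable objective is to maximize $v_1+v_2$ after normalizing each $v_i$ by $v_i(G^{(t)})$, and then break ties by $v_1$.

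\emph{Structure of PO allocations.} For two additive agents, every PO allocation is, up to ties and zero-valued goods, a threshold allocation. Sort the goods by the ratio $v_1(g)/v_2(g)$ in nonincreasing order; agent $1$ receives a prefix and agent $2$ the complementary suffix. Since $\mathcal A^{(t-1)}$ is PO, $A_1^{(t-1)}$ is such a prefix of $G^{(t-1)}$. Inserting $g_t$ into the ratio order yields a chain of threshold allocations of $G^{(t)}$. Along this chain, $v_1$ decreases and $v_2$ increases as the threshold moves. Existence of a feasible allocation therefore reduces to a discrete intermediate-value statement: some threshold, possibly after swapping tied goods, meets both lower bounds at once.

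\emph{The main obstacle} is this discreteness, since a single step of the threshold can jump over the target of one agent. The tools I would use are as follows.
\begin{itemize}
\item Agent $i$'s MMS is at most $v_i(G^{(t)})/2$.
\item Each $x_i$ is at most $\MMS_i(G^{(t-1)})$ plus slack controlled by EFX.
\item Monotonicity gives $\MMS_i(G^{(t)})\le \MMS_i(G^{(t-1)})+v_i(g_t)$.
\end{itemize}
The dangerous case is when $g_t$ sits at a ratio position where one step transfers too much value to one agent.

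In that case I would not restrict to thresholds. Instead, I would take agent $2$'s MMS partition $(P,Q)$ of $G^{(t)}$, let agent $1$ choose the better part, and compare the result with $(A_1^{(t-1)}\cup\{g_t\},A_2^{(t-1)})$ and $(A_1^{(t-1)},A_2^{(t-1)}\cup\{g_t\})$. A case split on which of $x_1,x_2$ exceeds half of the agent's total value should show that one of these candidates is feasible. If this split fails, I would strengthen the invariant to require that the agent who exceeds half her total value does so by less than her least-valued good in her own bundle. This is an EFX-type slack, and it is what makes the case analysis close.

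For EFX, I would show that the PO maximizer over the feasible set can be chosen to be EFX. Suppose agent $i$ violates EFX toward $j$ via a good $g\in B_j$. Moving the least-ratio good of $B_j$ to $B_i$, or swapping along the threshold, keeps agent $j$ feasible. The reason is that EFX violation means $v_j(B_j\setminus\{g\})$ still dominates. This keeps both MMS bounds, because an EFX-violating agent $j$ holding more than $v_j(G)/2$ minus one good keeps her MMS, and it also keeps the old values. The move strictly improves a potential function, such as the minimum normalized value, so the process terminates at a feasible, PO and EFX allocation.
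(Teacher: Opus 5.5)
Your PO step is sound and matches the paper: the feasible set is upward closed, so a welfare maximizer over it is PO among all allocations. The gaps are in the EFX step and in the existence step.

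\textbf{EFX.} Your repair argument does not work as written. An EFX violation of agent $i$ toward $j$ via $g\in B_j$ is the inequality $v_i(B_j\setminus\{g\})>v_i(B_i)$, which is a statement about agent $i$'s valuation. It says nothing about $v_j(B_j\setminus\{g\})$. Moving a good out of $B_j$ can therefore push $j$ below $x_j$ or below $\MMS_j(G^{(t)})$, and the new allocation need not stay PO. So neither the feasibility claim nor the potential argument is justified.

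The idea you are missing is that no repair is needed: for two additive agents, exact MMS already implies EFX. Suppose $v_i(h)>0$ and $v_i(B_j\setminus\{h\})>v_i(B_i)$ for some $h\in B_j$. Then $(B_i\cup\{h\},B_j\setminus\{h\})$ is a two-partition of $G^{(t)}$ whose parts are both worth strictly more than $v_i(B_i)$. This gives $\MMS_i(G^{(t)})>v_i(B_i)$, contradicting feasibility. Hence every feasible allocation is EFX, and your PO maximizer is already EFX; this is exactly how the paper concludes.

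\textbf{Existence.} Your route to a feasible allocation rests on a false structural claim. For indivisible goods, PO does not force a ratio-threshold structure. Take goods $a,b$ with $(v_1,v_2)=(1,10)$ and $(100,100)$. Giving $a$ to agent $1$ and $b$ to agent $2$ is PO, yet $\{a\}$ is not a prefix of the ratio order. So $A_1^{(t-1)}$ need not be a prefix, and the threshold-chain intermediate-value argument has no footing.

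Your fallback candidates are the right ones, but the case split you propose (relative to half the total value) is left unfinished. Split instead on whether $x_i\ge\MMS_i(G^{(t)})$.
\begin{itemize}
\item If neither agent is below her new MMS, give $g_t$ to either agent.
\item If only agent $i$ is below, give $g_t$ to her. Deleting $g_t$ from an optimal two-partition of $G^{(t)}$ shows $\MMS_i(G^{(t)})\le\MMS_i(G^{(t-1)})+v_i(g_t)\le x_i+v_i(g_t)$. This uses only the inductive lower bound $x_i\ge\MMS_i(G^{(t-1)})$, not the upper bound on $x_i$ that you list.
\item If both are below, cut-and-choose gives each agent at least her new MMS, which strictly exceeds her old value.
\end{itemize}
No threshold analysis and no strengthened EFX-slack invariant is needed.
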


\subsection{More Than Two Agents}
For more agents, recomputing a fair allocation need not preserve the value of someone who already exceeds her new MMS target. We retain the bundles of agents who meet that target and use exchanges to give the others enough value. The next theorem shows that this permits more than half of MMS, simultaneously with EF1 and approximate EFX.
\begin{theorem}
\label{thm:three-agent-value-monotone}
For every $n\ge3$ and nonnegative additive valuations, there is a deterministic online algorithm with bundle-level upgrade recourse whose allocation after every arrival is EF1, $1/2$-EFX, and $(n-1)/(2n-3)$-MMS. For fixed $n$ and integer singleton values, the algorithm runs in pseudo-polynomial time.
\end{theorem}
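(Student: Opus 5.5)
We maintain an invariant after every arrival and restore it by local repairs that never decrease any agent's value. Write $\beta=(n-1)/(2n-3)$. The invariant is that the current allocation $\mathcal A$ of $G^{(t)}$ is EF1 and $1/2$-EFX and gives every agent $v_i(A_i)\ge\beta\,\MMS_i(G^{(t)})$. When $g_{t+1}$ arrives, we first tentatively give it to an agent who is not envied, as in envy-cycle elimination~\citep{lipton2004approximately}: rotate bundles along envy cycles until some agent is unenvied, then give her the good. Each rotation and the addition are bundle-level upgrades and preserve EF1. In fact, adding a good to an unenvied agent's bundle also preserves $1/2$-EFX, because removing any good from her new bundle leaves at most her old bundle plus one good. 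We then use a chain of exchanges, made as necessary, to retain EF1 and $1/2$-EFX while reaching the MMS target.

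The MMS part is the core. Call an agent \emph{satisfied} if she has at least $\beta\MMS_i(G^{(t+1)})$. Satisfied agents keep their bundles, which preserves their value trivially. Let $U$ be the set of unsatisfied agents, and let $R$ be the goods held by agents in $U$ together with any goods released by exchanges. By MMS deletion (Lemma~\ref{lem:mms-deletion}), together with EF1 between agents, we will show that for each $i\in U$ the goods outside $R$ are few enough that $\MMS_i^{|U|}(R)\ge\MMS_i(G^{(t+1)})$ up to the appropriate factor. The step is to bound each satisfied agent's bundle in $i$'s eyes. If a satisfied bundle contains a good $g$ with $v_i(g)$ large, we treat $\{g\}$ as a singleton part and delete it. Otherwise EF1/EFX controls it relative to $v_i(A_i)$, which is below $\beta\MMS_i$.

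Next we redistribute $R$ among $U$ so that each agent in $U$ gets at least her old value and at least $\beta\MMS_i$. For this we run a bag-filling or matching procedure restricted to $U$ that uses several-for-one exchanges. An unsatisfied agent whose bundle is large but of low value in her own eyes is given a single good or a small set she values at least as much, and her surplus is freed. The constant $\beta=(n-1)/(2n-3)$ should come out of the worst case $|U|=n-1$ with one satisfied agent. There, agent $i$ must be guaranteed a $1/(2-1/(n-1))$ share. The intuition is that a bag-filling threshold of half the remaining per-agent value, plus one extra good that can overshoot, gives $\tfrac{n-1}{2n-3}$ once we sum over $n-1$ parts and subtract a single overshoot.

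Finally we check fairness: the repaired allocation is made EF1 by a final envy-cycle pass, which again is value-preserving, and $1/2$-EFX holds because each bag is filled by at most twice a threshold. The pseudo-polynomial running time comes from the fact that, with integer values, each agent's value strictly increases in every nontrivial repair step, so the number of steps is bounded by $\sum_i v_i(G)$. Computing the exact MMS values for fixed $n$ is done by dynamic programming over value vectors.

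I expect the main obstacle to be the redistribution step. There we must simultaneously keep each unsatisfied agent's old value, which may already exceed her share of $R$ in some unbalanced way, and meet the new target. I would first try an induction on $|U|$. At each stage, pick the agent whose old value is largest relative to her target and give her a minimal subset of $R$ meeting $\max\{v_i(A_i),\beta\MMS_i\}$. Then we argue, via EF1 and MMS deletion, that the leftover still supports the remaining agents.
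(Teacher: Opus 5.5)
Your top-level outline matches the paper's: satisfied agents keep their bundles, the remaining goods are pooled, MMS deletion plus envy bounds show the pool suffices, and the pool is redistributed with exchanges. The gaps are in the details. The invariant you carry is too weak, and your first step breaks it.

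\textbf{The first step and the invariant.} Giving $g_{t+1}$ to an unenvied agent $j$ does not preserve $1/2$-EFX. Suppose $v_i(A_i)=1$, $A_j$ consists of two goods worth $1/2$ each to $i$, and $v_i(g_{t+1})=5$. Removing one small good from $A_j\cup\{g_{t+1}\}$ leaves $11/2>2$.

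The same step can also leave the new good with a satisfied agent. Take $n=3$, with agents $1,2,3$ holding $\{x\},\{y\},\{z\}$. Let $v_1(x)=v_1(z)=1$, $v_1(y)=2$ and $v_1(g_{t+1})=2$, while agents $2$ and $3$ value only $y$ and $z$, respectively. This allocation is EF1, $1/2$-EFX and $2/3$-MMS, and agents $1$ and $3$ are unenvied. If $g_{t+1}$ goes to agent $3$, agents $2$ and $3$ are satisfied and keep their bundles. Agent $1$'s target rises to $\tfrac23\cdot 2=\tfrac43$, but your $R=\{x\}$ is worth only $1$.

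The paper avoids both failures in two ways. It retains the \emph{old} bundles of satisfied agents and puts $g_t$ into the pool. It also maintains \eqref{eq:singleton-invariant}: no agent values a good in a nonsingleton bundle above her own bundle. This is enforced by first letting an agent trade her whole bundle for a pool good she strictly prefers.

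EF1 together with $1/2$-EFX does not give this structure. The bundle $A_j=\{h,g\}$ with $v_i(h)=2$, $v_i(g)=1$ and $v_i(A_i)=1$ satisfies both, yet is worth $3v_i(A_i)$ to $i$. Your fallback of deleting a large good as a singleton part removes only that one good. The rest of $A_j$ then sits outside $R$ with no part removed, so the value accounting does not close.

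\textbf{The remaining-goods bound is only announced.} The bound $\MMS_i^{|U|}(R)\ge\beta\,\MMS_i(G^{(t+1)})$, the paper's \eqref{eq:three-agent-remaining-mms}, is the technical core. Even with each retained nonsingleton split into one good and a remainder, each below $i$'s target, the paper still needs a delicate argument. It takes a lexicographically optimal $|U|$-partition and shows every other part is a pair and that the least part contains a good dominating all of them. It then counts selected goods, with a separate case $|U|=2$. The last remaining agent needs a separate argument again. Your bag-filling heuristic for $(n-1)/(2n-3)$ does not substitute for this.

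\textbf{The redistribution step is missing.} The bound can be reapplied to agents still in $U$ only if every bundle handed out during the update is worth less than each of their targets. Serving the agent with the largest old-value-to-target ratio a minimal subset does not ensure this, because another unsatisfied agent may value that subset far above her target. The paper gets this property from inclusion-minimal proposed bundles and a Hall-type matching in which no unmatched agent is adjacent to a matched bundle. It keeps EF1 and \eqref{eq:singleton-invariant} toward new bundles by letting assigned agents swap for minimum-cardinality, strictly preferred proper subsets.

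\textbf{EF1 cannot be restored at the end.} A closing envy-cycle pass only permutes bundles. An acyclic envy graph containing a non-EF1 comparison is left unchanged, so EF1 must be preserved at every step, as the paper does.
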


Consider an agent still below her updated target. EF1 and a maintained bound on singleton values show that each retained bundle with at least two goods consists of one good and a remainder, each worth less than her target. Retained singletons are handled by MMS deletion. We then show that the unallocated goods can be partitioned into enough bundles meeting her target. We adapt the procedure of \citet{akrami2025simultaneous} to assign minimal such bundles, allowing earlier recipients to exchange their bundles only for ones they prefer. Every remaining agent values each newly assigned bundle below her target, so the argument can continue. Appendix~\ref{app:three-agent-value-monotone} gives the construction and proof. A simpler polynomial-time algorithm achieves $n/(2n-1)$-MMS with the same envy guarantees; Appendix~\ref{app:general-valuations} presents it and its extensions to nonadditive valuations.

These MMS factors exceed the deterministic item-level upper bound for every $n\ge2$ (Corollary~\ref{cor:goods-recourse-comparison}). Nevertheless, bundle-level upgrade recourse does not permit exact MMS in general.
\begin{theorem}
\label{thm:value-monotone-mms-upper}
For every $n\ge3$ and every $\alpha\in(10/11,1]$, no deterministic online algorithm with bundle-level upgrade recourse can guarantee $\alpha$-MMS. This holds for identical additive valuations with singleton values in $[0,1]$. The final instance admits an allocation that is exact MMS, EFX, and PO.
\end{theorem}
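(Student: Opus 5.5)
The plan is an adaptive adversary. It first handles three agents and then pads to general $n$. All valuations are identical, so a single additive function $v$ describes every agent. MMS is then the optimal min-load of a machine-covering instance, and bundle-level upgrade recourse says that no bundle's value (load) may ever decrease. Fix $\alpha\in(10/11,1]$ and a deterministic algorithm $\mathcal{A}$ claiming $\alpha$-MMS after every arrival. The adversary releases goods one at a time and reads off the current allocation. At each step, the $\alpha$-MMS requirement on the current prefix, together with the integrality of the small values used, forces the bundle loads into a short list of configurations. It then branches on which configuration $\mathcal{A}$ chose. Values are chosen as small integers, e.g. a scale where the final MMS is $11$, and then divided by the largest value so that singleton values lie in $[0,1]$.

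For three agents, each branch is built in two stages. First, the early goods are set so that every $\alpha$-MMS allocation must give some agent a bundle whose load is already "too high" for a particular future completion. The canonical form is that an agent is stuck holding a load of, say, $6$ made of goods that cannot later be split usefully. Second, the adversary appends goods that make the new MMS partition incompatible with this locked load. Because bundle loads cannot decrease, the locked agent keeps at least her load. One checks that the remaining goods cannot then give the other two agents more than $10$ each, against a new MMS of $11$. This contradicts $\alpha>10/11$. The bookkeeping in each branch is a small partition computation: the total value is $3\cdot 11$ or close to it, so the best covering under the constraint "some bundle has load $\ge L$ using only goods from a fixed set" can be enumerated by hand. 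If the algorithm instead avoids the lock by spreading goods evenly, the adversary takes the complementary branch. It appends goods that make uneven early bundles necessary, for example one large good whose presence requires small goods to be concentrated. The even spread then leaves some agent below $\alpha$ times the new MMS, again by a direct computation.

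To extend the construction to $n\ge 3$, I would prepend $n-3$ goods of value $1$ (the maximum). I would argue that, at every later stage of the construction, MMS deletion (Lemma~\ref{lem:mms-deletion}) and the smallness of the remaining goods relative to these pads force the pads into distinct bundles. Those $n-3$ agents are then effectively removed. This may require scaling the three-agent construction so that its total value stays below a pad's value, so that an optimal partition puts each pad alone. Any agent receiving extra small goods alongside a pad only shrinks what the other three get, which helps the adversary. Finally, I would verify the last claim directly. With identical valuations, every complete allocation is PO. So it suffices to exhibit, for each final instance, an optimal MMS partition in which every bundle's load exceeds every other bundle's load minus its smallest positive good, which can be read off the explicit partition.

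I expect the main obstacle to be the design of the first stage. It must pin the algorithm into a lockable configuration in all branches simultaneously, using few enough goods that the branching tree stays manageable. My first attempt would start with three equal goods, which forces one good per agent, followed by several goods of an intermediate value. Those intermediate goods force some agent to combine goods in a way that is only optimal if no large good later arrives. The adversary would then tune the values until the worst branch gives exactly $10/11$.
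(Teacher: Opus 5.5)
There is a genuine gap. The proposal never specifies an instance, and the whole content of this theorem is the construction together with the arithmetic that makes $10/11$ come out. Moreover, the locking mechanism you sketch cannot yield a contradiction as described.

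Under bundle-level upgrades an agent is constrained only by her previous \emph{value}, not by which goods she holds. Phrases such as ``a load of $6$ made of goods that cannot later be split usefully'' or ``load $\ge L$ using only goods from a fixed set'' therefore correspond to no constraint on the algorithm. For the same reason, your opening step of three equal goods pins down nothing beyond each agent's value.

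With identical valuations, suppose the sorted previous loads are dominated coordinatewise by the sorted part values of some $\alpha$-MMS allocation of the extended instance. Then the algorithm just matches bundles in sorted order and survives. In the three-agent setting this means that if every previous load is at most the new MMS, nothing is violated. So a lock of $6$ against a new MMS of $11$ is always harmless. The first stage must instead force, in \emph{every} $\alpha$-MMS allocation of the prefix, some pad-free agent strictly above the new MMS. Your use of integrality to turn $\alpha>10/11$ into exactness is in the right spirit, but it only helps once such an overshoot is forced.

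The paper shows how to do this without any branching. Its adversary is oblivious, and only the allocations just before and just after the last arrival matter. It presents $n-3$ pads worth $199$, then goods $22,22,22,33,33,50$, then a final $16$.
\begin{itemize}
\item Before the $16$ arrives, MMS is $55$, so $\alpha>10/11$ forces every agent above $50$.
\item The pads must then lie in distinct bundles; otherwise four agents share $182<200$.
\item The $50$ must go to a pad-free agent; otherwise three agents share $132<150$.
\item That agent needs a second good, so her value is at least $72$.
\item After the $16$ arrives, MMS is $66$. Pad owners must keep their pads, since $199>198$.
\item The three pad-free agents share exactly $198=3\cdot 66$. Every bundle of these goods has value $50$, at least $66$, or of the form $11k$ or $11k+16$, so none lies in $(60,66)$.
\item Hence each pad-free agent gets exactly $66<72$, a contradiction.
\end{itemize}

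The missing ideas are therefore (i) a prefix in which a large good alone falls just short of the $\alpha$-MMS threshold, forcing an overshoot above the future MMS, and (ii) a final instance whose pad-free total equals exactly three times its MMS, with a value gap just below the MMS, so that $\alpha$-MMS forces equality. Your padding idea matches the paper's. However, it needs the quantitative checks above ($182<4\cdot 50$, and $199$ exceeding all other goods combined), not just the smallness of the remaining goods.
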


In our integer-valued construction, the MMS requirement before the final arrival forces some agent to receive value at least $72$, whereas after the final arrival every $\alpha$-MMS allocation with $\alpha>10/11$ gives her value exactly $66$. Appendix~\ref{app:value-upper} gives the goods and the argument for every $n\ge3$.

\paragraph{Identical valuations.}
Agreement about values does not remove the $10/11$ upper bound, but it permits a stronger positive guarantee. A polynomial-time algorithm maintains $(3n-1)/(4n-2)$-MMS, EFX, and PO by recomputing the longest-processing-time-first allocation and matching bundles to agents in order of their previous values. The sorted bundle values cannot decrease when a good is inserted~\citep{galvez2020symmetry}. Appendix~\ref{app:identical-additive} proves the guarantee; Theorem~\ref{thm:identical-value-monotone} also covers EFX for identical monotone valuations.

\section{Predictions of Future Values}
\label{sec:predictions}
So far, the algorithm has had no information about future goods. In applications, a prediction of future goods may suggest a better allocation: a team expecting valuable devices later can receive fewer early choices, leaving more for other teams. The difficulty is deciding how much to rely on that prediction. When the expected goods do not arrive, earlier assignments may be difficult to revise without making someone worse off.

This is the central concern of learning-augmented online algorithms: exploit accurate predictions while quantifying the loss when they are inaccurate~\citep{lykouris2018caching,banerjee2022online,cohen2023general}. Predictions specify each agent's ranked values, equivalently her multiset of singleton values, as in the value-frequency predictions of \citet{neoh2026online}. Such information describes the anticipated mix of goods without predicting their arrival order or linking different agents' predicted values to the same good. We make no statistical assumption about how the prediction is obtained.

Our positive guarantees use insertion with a fixed picking sequence, so they hold under both recourse rules. The rules differ in the guarantees possible for arbitrary predictions.

\subsection{From Ranked Values to Picking Turns}
Correct frequency predictions already transfer offline share guarantees to irrevocable online allocation~\citep{neoh2026online}. We revisit this transfer to obtain an upgrade algorithm whose performance can be compared directly with the realized ranked values when predictions are wrong. Given an allocation of the predicted ordered instance, we give picking turn $k$ to the agent who receives the good of rank $k$. By the rank-value bound in Section~\ref{sec:preliminaries}, each agent then receives at least the true values of her assigned ranks. Since ordering preserves MMS, an offline $\alpha$-MMS allocation of the ordered instance yields an $\alpha$-MMS allocation online when the predictions are correct (Theorem~\ref{thm:ordered-instance-transfer}, proved in Appendix~\ref{app:correct-predictions}).

In particular, correct ranked values permit exact MMS for two agents, because every two-agent additive instance admits an MMS allocation (Corollary~\ref{cor:two-agent-perfect-order}). They also permit exact MMS for any number of agents with identical valuations, by using an MMS partition of the predicted instance. More generally, since the ordered instance preserves each agent's multiset of singleton values, correct ranked values permit exact MMS in any valuation domain determined by these multisets in which MMS allocations always exist offline, such as the domain of \citet{teh2026lexicographic}, which generalizes weakly lexicographic and several few-valued valuations.

\subsection{Robustness to Errors}
Suppose the prediction gives a horizon $\widehat m\ge0$ and ranked values $\widehat x_1^i\ge\cdots\ge\widehat x_{\widehat m}^i\ge0$ for each agent $i$. Before arrivals begin, apply any offline $\alpha$-MMS algorithm to the predicted ordered instance. For each $k\le\widehat m$, give picking turn $k$ to the agent who receives $e_k$ in its allocation $\mathcal B$, and extend the sequence arbitrarily thereafter. For the error analysis, set $x_k^i(G)=0$ for $k>m$ and $\widehat x_k^i=0$ for $k>\widehat m$.
\begin{theorem}
\label{thm:ranked-value-prediction}
Let $E_i:=\sum_{k\ge1}|x_k^i(G)-\widehat x_k^i|$. If $\mathcal B$ is $\alpha$-MMS in the predicted ordered instance, then insertion with the induced picking sequence satisfies item-level upgrade recourse and, for every $i\in[n]$,
\[
 v_i(A_i)
 \ge \alpha\MMS_i(G)
 -\alpha\sum_{k\ge1}[x_k^i(G)-\widehat x_k^i]_+
 -\sum_{k:e_k\in B_i}[\widehat x_k^i-x_k^i(G)]_+ \ge \alpha\MMS_i(G)-E_i.
\]
Thus, zero error gives $\alpha$-MMS. If $E_i\le\varepsilon_i\MMS_i(G)$, then $v_i(A_i)\ge(\alpha-\varepsilon_i)\MMS_i(G)$.
\end{theorem}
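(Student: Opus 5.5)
Recourse is immediate from Lemma~\ref{lem:serial-insertion}, since the induced sequence is a fixed infinite picking sequence. The same lemma says the final allocation is the picking allocation of $(\sigma_1,\dots,\sigma_m)$ on $G$. By the rank-value bound of Section~\ref{sec:preliminaries}, we therefore get $v_i(A_i)\ge R_i^\sigma(G)=\sum_{k\in P_i^\sigma(m)}x_k^i(G)$.

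For $k\le\widehat m$, the turn $k$ belongs to $i$ exactly when $e_k\in B_i$. For $k>\widehat m$ the extension is arbitrary, so I will simply drop those turns, which is valid because values are nonnegative. If $m<\widehat m$, the turns $k\in(m,\widehat m]$ with $e_k\in B_i$ are absent, but there $x_k^i(G)=0$ by padding. Hence, in all cases,
\[
v_i(A_i)\ge\sum_{k:e_k\in B_i}x_k^i(G)\ge\sum_{k:e_k\in B_i}\widehat x_k^i-\sum_{k:e_k\in B_i}[\widehat x_k^i-x_k^i(G)]_+ ,
\]
using $x\ge \widehat x-[\widehat x-x]_+$ termwise.

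Next, $\sum_{k:e_k\in B_i}\widehat x_k^i=v_i^{\mathrm{ord}}(B_i)\ge\alpha\,\widehat{\MMS}_i$, where $\widehat{\MMS}_i$ is $i$'s MMS in the predicted ordered instance. The key remaining step, and the only real obstacle, is the comparison
\[
\widehat{\MMS}_i\ge\MMS_i(G)-\sum_{k\ge1}[x_k^i(G)-\widehat x_k^i]_+ .
\]
To prove it, take an optimal $n$-partition of $G$ for $v_i$ and transport it to the ordered instance. Since ordering preserves MMS, I may view $G$ as goods of ranks $1,\dots,m$ with values $x_k^i(G)$. I assign the predicted good $e_k$, for $k\le\min(m,\widehat m)$, to the part containing the rank-$k$ true good. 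Predicted goods of rank $k>m$ can go anywhere, which only helps. Each part's value changes by $\sum_k(\widehat x_k^i-x_k^i)$ over its ranks; this is at least $-\sum_k[x_k^i-\widehat x_k^i]_+$ over its ranks, and hence at least minus the full sum. True goods of rank $k>\widehat m$ have $\widehat x_k^i=0$ and lose $x_k^i$, which is exactly counted by the positive part.

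Thus every part loses at most $\sum_k[x_k^i(G)-\widehat x_k^i]_+$, which gives the comparison. Multiplying by $\alpha$ and combining with the first display yields the middle expression of the theorem. The last inequality uses $\alpha\le1$ together with the fact that $[\cdot]_+$ terms over disjoint or general index sets sum to at most $E_i$. The corollary statements then follow: $E_i=0$ gives $\alpha$-MMS, and $E_i\le\varepsilon_i\MMS_i(G)$ gives $(\alpha-\varepsilon_i)\MMS_i(G)$.
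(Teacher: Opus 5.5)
Your proof is correct and follows essentially the same route as the paper. Recourse and the lower bound $v_i(A_i)\ge\sum_{k:e_k\in B_i}x_k^i(G)$ come from insertion and the rank-value bound, after dropping turns beyond $\widehat m$ and using zero padding; overestimates are handled termwise; and the predicted maximin share is bounded below by transporting an optimal true partition to the predicted ordered instance, where each part loses at most $\sum_k[x_k^i(G)-\widehat x_k^i]_+$, with the two positive-part sums combined into $E_i$ using $\alpha\le1$ exactly as in the paper.
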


The first inequality distinguishes two effects. Underestimated values can lower the predicted MMS and hence the value that $\mathcal B$ guarantees to an agent. Overestimated values matter directly only at ranks assigned to that agent, so counting all errors in $E_i$ can be pessimistic. Since underestimation and overestimation concern different ranks, their losses together are at most $E_i$, rather than twice that amount. The coefficient of $E_i$ cannot be reduced for arbitrary choices of $\mathcal B$, already for two agents and $\alpha=1$; Appendix~\ref{app:prediction-proofs} gives the proof and example.

Appendix~\ref{app:relative-prediction-error} gives a guarantee under pointwise relative errors, which exclude missing or additional goods, and Appendix~\ref{app:prediction-proofs} bounds the effect of changing early picking turns.

\subsection{Correct Predictions versus Arbitrary Predictions}
An error-dependent additive guarantee need not give a positive MMS factor for every prediction. The guarantee for correct predictions is usually called \emph{consistency}, and the guarantee for arbitrary predictions is called \emph{robustness}~\citep{lykouris2018caching,cohen2023general}. The next theorem bounds their tradeoff under item-level upgrades.

\begin{theorem}
\label{thm:prediction-mms-tradeoff}
    Fix $n\ge2$ and $\alpha,\beta\in[0,1]$. Suppose a deterministic algorithm with item-level upgrade recourse receives the correct horizon and predicted ranked values. If it guarantees $\alpha$-MMS whenever the ranked values are predicted correctly and $\beta$-MMS for every prediction, then $\beta\le n(1-\alpha)/(n-1)$.
    This holds even for identical additive valuations with strictly positive singleton values in $[0,1]$. In particular, guaranteeing exact MMS for correct predictions rules out every positive MMS guarantee for arbitrary predictions.
\end{theorem}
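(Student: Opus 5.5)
The plan is a two-branch adversary that shares a long prefix of small identical goods. Both branches use the same correct horizon and the same ranked-value prediction. In one branch the prediction is correct; in the other it is badly wrong. Item-level upgrade recourse makes the number of small goods each agent holds at the end of the prefix essentially binding in both branches. Comparing the two requirements then gives the inequality.

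\textbf{Instance.} Fix a large integer $N$ and set $s=1/N^2$ and $\eta=1/N^4$. The prediction announces horizon $m=N+n-1$ and, for every agent, ranked values consisting of $n-1$ copies of $1$ followed by $N$ copies of $s$. Valuations are identical. First $N$ goods of value $s$ arrive. Let $c_j$ be the number of these goods held by agent $j$ after arrival $N$; the algorithm is deterministic, so the $c_j$ are determined. Then the adversary chooses one of two branches.
\begin{itemize}
\item \emph{Branch A (correct prediction).} The remaining $n-1$ goods have value $1$.
\item \emph{Branch B (wrong prediction).} The remaining $n-1$ goods have value $\eta$.
\end{itemize}
All values are strictly positive and lie in $[0,1]$, as the theorem requires.

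\textbf{Key recourse fact.} In both branches a small good can leave its holder only if it is replaced by a distinct good of value at least $s$. Such a good must be another small good or a large good. Each arrival adds one good, so at most one small good can be freed per arrival.
\begin{itemize}
\item In Branch B the later goods are worth $\eta<s$, so every agent ends with at least $c_j$ small goods. Since $v_i(A_i)$ cannot exceed $c_js$ plus at most $m\eta$, we get $v_i(A_i)\le c_js+O(N\eta)$.
\item In Branch B we have $\MMS(G)\ge\lfloor N/n\rfloor s$. Hence $\beta$-MMS forces $c_j\ge\beta\lfloor N/n\rfloor-o(1)$ for every $j$.
\end{itemize}

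\textbf{Branch A.} Since $Ns<1$, an MMS partition puts each large good in its own bundle and all small goods together, so $\MMS(G)=Ns$. The $n-1$ large goods go to at most $n-1$ agents, so some agent $i$ receives no large good. Her value is $s$ times the number of small goods she holds, so $\alpha$-MMS requires her to hold at least $\alpha N$ small goods.

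The other agents start with $\sum_{j\ne i}c_j$ small goods. Only the $n-1$ large arrivals can free small goods by replacement, and each large good replaces at most one. Swapping small goods among themselves does not change how many the other agents hold in total, and they never lose small goods otherwise. Therefore the other agents end with at least $\sum_{j\ne i}c_j-(n-1)$ small goods. Since at most $N-\alpha N$ small goods remain for them,
\[
\sum_{j\ne i}c_j\le(1-\alpha)N+n-1 .
\]
This holds for whichever agent $i$ the algorithm leaves without a large good, so it suffices to bound the sum over some $n-1$ agents.

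\textbf{Combining the branches.} The prefix and the prediction are identical in both branches, so the same $c_j$ appear in both bounds. Summing the Branch B bound over $j\ne i$ gives
\[
(n-1)\bigl(\beta\lfloor N/n\rfloor-o(1)\bigr)\le(1-\alpha)N+n-1 .
\]
Dividing by $N$ and letting $N\to\infty$ yields $\beta\le n(1-\alpha)/(n-1)$. With $\alpha=1$ this gives $\beta=0$.

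\textbf{Main obstacle.} The step that needs the most care is the recourse counting in Branch A. I must check that each large arrival can release at most one small good, even through chains of insertions and replacements. The argument is an injection count: for any agent, the goods of value at least $s$ she holds never decrease in number, and only $n-1$ such goods are added after the prefix. A further issue is that ties among equal small goods allow swaps; these do not change any agent's count of small goods, so the bound is unaffected. Finally, the $\eta$ and floor terms must vanish in the limit, which follows from $N\eta\to0$ relative to $s$.
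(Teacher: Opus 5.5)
Your proof is correct and follows the paper's construction. It uses the same prefix of small identical goods, the same prediction of $n-1$ unit values plus the small values, a correct completion with $n-1$ unit goods and a wrong completion with $n-1$ small goods, and it combines the two branches through nondecreasing bundle cardinalities. The differences are cosmetic: the paper uses the correct branch to lower-bound the prefix count of the agent left without a unit good and then starves another agent in the wrong branch, whose extra goods have the same value $\varepsilon$. You instead use the wrong branch, with $\eta<s$, to lower-bound every agent's prefix count and then show the correct branch cannot accommodate these counts; this is the same double count read in the other direction.
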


\subsection{Bundle-Level Upgrades}
The item-level impossibility relies on nondecreasing bundle cardinalities, not on prediction errors alone. For two agents, Theorem~\ref{thm:two-agent-value-monotone-exact} achieves exact MMS without predictions, so consistency and robustness can both equal one.

For $n\ge3$ and identical valuations, correct ranked values permit exact MMS (Theorem~\ref{thm:ordered-instance-transfer}), whereas without predictions no factor above $10/11$ is possible (Theorem~\ref{thm:value-monotone-mms-upper}). Can knowing the horizon overcome this upper bound when predictions may be wrong? The next result shows that it cannot.
\begin{theorem}
\label{thm:bundle-prediction-upper}
Fix $n\ge3$. If a deterministic algorithm with bundle-level upgrade recourse receives the correct horizon and predicted ranked values and guarantees $\beta$-MMS for arbitrary predictions, then $\beta\le10/11$. This holds even for identical additive valuations with strictly positive singleton values in $[0,1]$.
\end{theorem}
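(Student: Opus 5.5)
The plan is to reduce Theorem~\ref{thm:bundle-prediction-upper} to the construction behind Theorem~\ref{thm:value-monotone-mms-upper}. The prediction is one fixed, deliberately misleading input given to the algorithm; the adversary then adapts the actual arrivals. Fix $n\ge3$ and take the integer-valued identical instance family from the no-prediction bound. Its structure is a fixed prefix of goods followed by a branching final arrival. The MMS requirement on the prefix forces some agent to value at least $72$. After one of the possible final goods, every $\alpha$-MMS allocation with $\alpha>10/11$ gives that agent exactly $66$, which bundle-level recourse forbids.

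The new difficulty is that the algorithm now knows the horizon $m$. So every branch the adversary might play must have the \emph{same} total number of goods. First I would pad each branch to a common length with goods of tiny positive value $\delta$. Positive values are required by the statement, and zero goods would be free anyway. The padding goods arrive only after the critical good, or are interleaved so that they do not disturb the prefix argument.

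Second, I would fix the prediction to be the ranked values of one particular branch; any fixed vector would do. For each other branch the prediction is wrong, and the guarantee must hold for arbitrary predictions. So a $\beta$-MMS algorithm must be $\beta$-MMS after every arrival of every branch. Strictly, MMS is demanded only for the final instance. Since the horizon is known, the intermediate requirement (forcing the value $72$) cannot come from ``arrivals could stop here.''

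Third, I would recover that requirement by branching: at the critical time, one continuation consists only of $\delta$-goods. Then the final MMS is essentially the MMS of the prefix, and any allocation at that time must already give values close to $\beta$ times the prefix MMS. Recourse forbids lowering values, except that the $\delta$-goods can add at most $m\delta$. So the algorithm's state at the critical time must already satisfy the prefix MMS requirement up to $O(m\delta)$. The other continuation delivers the final large good(s) plus padding, where the agent must drop to about $66$. Since the algorithm is deterministic and both continuations share the same history and the same prediction, its state at the branching point is the same in both. This yields $72-O(m\delta)\le$ the value allowed in the second branch. That value is $66+O(m\delta)$ for any $\alpha$-MMS allocation with $\alpha>10/11$, which is a contradiction as $\delta\to0$.

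Finally, I would rescale to singleton values in $[0,1]$. The main obstacle is the padding step. The no-prediction proof may itself use several nested ``stop here'' checkpoints. Each must become a padded branch with the same total length, and the bound $66$ on the final allocations must be robust to extra $\delta$-goods. That robustness should follow because the MMS values change continuously by at most $m\delta$ and the gap $72$ versus $66$ is strict. If the original argument relies on exact equality, such as ``exactly $66$,'' I would redo that step with slack $O(m\delta)$ and choose $\alpha>10/11$ with room, taking $\delta$ small relative to $\alpha-10/11$.
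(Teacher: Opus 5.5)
Your proposal is correct and follows the paper's proof. Both arguments branch only at the last arrival: one branch adds a tiny positive good, the other adds the predicted value-$16$ good. In the tiny-good branch, identical additive valuations make the agents' values sum to the total value, so with no value decreasing each agent gains at most the added value; every agent must therefore exceed $50$ before the last arrival. The value-$16$ branch then gives the $72$ versus $66$ contradiction from Theorem~\ref{thm:value-monotone-mms-upper}.

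The padding you worry about is unnecessary. That construction has a single checkpoint, so both branches already have the correct horizon $n+4$. Any padding would also have to come after the branch point, because a $\delta$-good in the prefix would let the owner of the value-$50$ good exceed $50$ without reaching $72$.
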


The construction needs only one incorrectly predicted value per agent, not widespread errors; Appendix~\ref{app:bundle-prediction-upper} gives the proof. Unlike the item-level result, it does not rule out positive robustness with exact consistency for $n\ge3$, so it does not determine the optimal tradeoff.

\section*{Declaration of Generative AI Use}

The authors used GPT-6 as a research assistant when exploring proof ideas and revising the exposition. They
checked the mathematical arguments and take full responsibility for the contents of the paper.

\bibliographystyle{plainnat}
\bibliography{arxiv}
\clearpage
\appendix
\section{Further Related Work} \label{app:further_related_work}

We expand here on the related work discussed in the introduction.

\paragraph{Online fair division.}
Online fair division allocates goods under uncertainty about future arrivals~\citep{AleksandrovEtAl2015,AleksandrovWalsh2020Survey}. When every assignment is irrevocable, share-based guarantees are difficult to obtain. For general additive valuations, no positive approximation to MMS or EF1 is possible~\citep{neoh2026closinggaps}. Even when each agent's valuation is normalized, no positive MMS guarantee is possible for three agents~\citep{zhou2023multi}. Randomization against an oblivious adversary, which fixes the input in advance, does not ensure these guarantees with high probability: for every $\alpha>0$, every randomized algorithm has an input on which it returns an $\alpha$-MMS allocation with probability at most $5/6$, and for every $\alpha\in(0,1]$ the corresponding probability for $\alpha$-EF1 is at most $e^{-\Omega(n)}$~\citep{cohen2026oblivious}. Under budget constraints, where fairness is evaluated only against budget-feasible subsets of each recipient's bundle, no deterministic online algorithm guarantees any fixed approximation to feasible envy-freeness without further structure~\citep{cohen2026budget}. In the related model in which agents arrive instead of goods, no positive MMS guarantee is possible even for two agents without suitable information about the valuations of future agents~\citep{kulkarni2025online}. Previous work avoids these impossibilities in several ways: by weakening the fairness requirement, for example to envy that grows sublinearly in the number of goods~\citep{benade2018envy,zeng2020fairness}; by providing information about future goods~\citep{neoh2026online}; or by delaying assignments using a reordering buffer~\citep{amanatidis2026buffers}. Temporal fair division gives the algorithm complete knowledge of future goods and evaluates the allocation after every round; with irrevocable assignments, an allocation of goods that is EF1 after every round still need not exist, and deciding whether one exists is NP-hard~\citep{elkind2025temporal}. \citet{choi2026temporal} identify tractable cases, including for mixed manna, and show that an allocation of goods that is EF1 after every round gives each agent at least a $1/n$ fraction of her MMS after every round, which is tight; under item-level upgrades, maintaining EF1 likewise limits the MMS factor to $1/n$ (Theorem~\ref{thm:goods-ef1-mms-tradeoff}). A related line of work minimizes the envy that accumulates as a sequence of items is allocated~\citep{goldberg2026cumulative}, or, when the same items are matched to the agents in every round, maximizes the minimum utility after every round~\citep{lim2026repeated}. We instead still assign each good on arrival, but allow earlier assignments to change as long as no agent is made worse off.

\paragraph{Reassignment in dynamic fair division.}
Dynamic fair division usually measures the cost of reassignment by the number of changes. \citet{friedman2015dynamic} consider agents who arrive over time to share divisible resources, and bound the number of disruptions to earlier allocations. For indivisible goods arriving online, \citet{HeProcacciaPsomasZeng2019} maintain EF1 after every arrival by reassigning goods. Without information about future goods, $\Theta(m)$ reassignments are necessary and sufficient for two agents, and $O(m^{3/2})$ suffice for three or more agents. In online house allocation, each agent holds exactly one house, so accepting a newly arrived house may require reassigning earlier ones; envy-freeability with subsidies can always be maintained using bounded recourse, although reassignment chains of length linear in the number of agents are unavoidable in the worst case~\citep{teh2026house}. Such bounds limit how often earlier assignments change, but not how much a single change lowers an agent's value. Upgrade recourse does the reverse: it places no bound on the number of reassignments, but no agent may lose value, and under item-level upgrades no agent may lose a good without receiving a distinct weakly better replacement. Insertion reassigns at most $m(m-1)/2$ goods in total (Section~\ref{sec:goods-item}); we leave bounds on the number of reassignments under upgrade recourse to future work.

\paragraph{Recourse in online algorithms.}
Beyond fair division, online algorithms with \emph{recourse} may revise a limited number of earlier decisions. This model goes back to the dynamic Steiner tree problem~\citep{imase1991dynamic}; for online Steiner tree and minimum spanning tree, a constant competitive ratio can be maintained while changing only a few edges per arrival~\citep{megow2016power,gu2016power}. For online matching and related assignment problems, recourse counts the number of reassignments~\citep{gupta2014maintaining,bernstein2019online,angelopoulos2020online}. In scheduling and bin packing, the \emph{migration factor} bounds the total size of the jobs or items moved after an arrival relative to the size of the arriving one~\citep{sanders2004online,epstein2009robust,galvez2020symmetry}. \citet{boyar2022relaxing} instead restrict the type of change: an algorithm may accept an item after its arrival, or later reject an accepted item, but each such decision is final. In these problems, a solution is evaluated by a single objective, such as its total cost or the size of a matching. In fair division, each reassignment moves value between particular agents, so a limit on the number or total size of changes does not prevent one agent from losing most of her value. Upgrade recourse therefore restricts the type of change, as \citet{boyar2022relaxing} do, but separately for each agent.

\paragraph{Resource monotonicity and pairwise dominance.}
Bundle-level upgrades impose \emph{resource monotonicity} along the arrival sequence: adding resources must leave every agent weakly better off. This desideratum also appears in multi-resource cluster sharing~\citep{ghodsi2011drf}, although its divisible-resource model differs from ours. For indivisible goods, maximum weighted Nash welfare is resource-monotone for binary valuations~\citep{suksompong2022binary}, and weighted additive welfarist rules with concave functions remain so for matroid-rank valuations~\citep{suksompong2023weighted}; for two agents with binary valuations, resource monotonicity also appears in an axiomatic characterization of maximum Nash welfare~\citep{brandl2026binary}. Resource monotonicity can, however, conflict with other desiderata: in connected cake cutting, no Pareto-optimal proportional rule is resource-monotone~\citep{segalhalevi2018resource}. Item-level upgrades use the \emph{pairwise dominance} comparison of bundles studied in allocation by drafts~\citep{coreno2026draft}. Monotonicity of fixed picking sequences~\citep{chakraborty2021picking} provides one way to satisfy this requirement, whereas algorithms in our model may also depend on the arrival history. In particular, they need to preserve values only along the realized arrival order, rather than between every pair of nested sets of goods.

\paragraph{Offline MMS and simultaneous guarantees.}
The maximin share was introduced by \citet{budish2011combinatorial}; \citet{amanatidis2023fair} survey MMS, EF1, EFX, and related notions. Offline, MMS allocations always exist for two agents and for identical valuations, but they may fail to exist for three or more agents~\citep{kurokawa2018fair,feige2021tight}; the best known approximation factor exceeds $3/4$~\citep{akrami2024breaking}. Exact MMS allocations do exist for any number of agents in restricted valuation domains, such as one that generalizes weakly lexicographic and several few-valued valuations~\citep{teh2026lexicographic}. Approximate MMS can be obtained together with EF1 or EFX~\citep{amanatidis2018relaxations,akrami2025simultaneous}, and $1/2$-EFX allocations exist even for subadditive valuations~\citep{plaut2020almost}. Our upper bounds of $1/H_n$ under item-level upgrades and, for $n\ge3$, $10/11$ under bundle-level upgrades hold for identical valuations, for which exact MMS allocations exist offline. These bounds therefore come from the requirement that every update leave each agent at least as well off, not from the offline nonexistence of MMS allocations. For two agents, bundle-level upgrades attain exact MMS, as is possible offline. For $n\ge3$, our factor $(n-1)/(2n-3)$ is below the best known offline factor, but it holds after every arrival, together with EF1 and $1/2$-EFX, without decreasing any agent's value.

\paragraph{Extending partial fair allocations.}
\citet{bu2025welfare} study offline welfare maximization within fair allocations, without online arrivals or recourse. For additive valuations, their analysis gives procedures that complete partial EF1 allocations, or partial EFX allocations for two agents, without decreasing any agent's value. Applying these procedures after each arrival yields corresponding bundle-level guarantees in our model; Appendix~\ref{app:partial-allocations} explains this deduction. \citet{lim2026cost} study generalized-mean welfare within EFX allocations when the goods outnumber the agents by at most three, including the complexity of also requiring Pareto optimality. Our contribution is to formulate upgrade recourse and determine how its two forms affect MMS, simultaneous fairness and efficiency, and the value of predictions.

\paragraph{Ordinal allocation and scheduling.}
With identical additive valuations, a fixed picking sequence assigns sorted job ranks to machines, as in ordinal machine covering~\citep{he2002ordinal,tan2004ordinal,epstein2026ordinal}. In fair division, ordinal MMS guarantees were studied by \citet{amanatidis2016truthfulMMS}; \citet{halpern2021partial} obtain $1/(2H_n)$-MMS using deadlines for picking turns. Our guarantees approach $1/H_n$ within every inverse polynomial, and we show that fixed picking is optimal among all ordinal algorithms under item-level upgrades. Several small-$n$ achievable factors have antecedents in ordinal scheduling. Our matching upper bounds, however, apply to algorithms that use revealed cardinal values, and our characterization concerns one sequence that must work for every horizon. For identical valuations, our bundle-level algorithm uses the approximation and insertion-monotonicity properties of LPT~\citep{csirik1992lpt,galvez2020symmetry}.

\paragraph{Learning-augmented allocation.}
Predictions have been used for online Nash social welfare and more general allocation objectives~\citep{banerjee2022online,cohen2023general}. In indivisible fair division, \citet{neoh2026online} transfer offline share guarantees using value-frequency information and also analyze inaccurate predictions; other work studies proportionality up to one good~\citep{choo2026approximate}, EFX~\citep{melissourgos2026onlineefx}, envy-freeness under budget constraints~\citep{cohen2026budget}, and subsidy minimization in online house allocation~\citep{teh2026house}, whose offline counterpart is studied by \citet{choo2024envy}. Under budget constraints, predictions concern joint value-size types, since separate predictions of the value and size marginals do not recover strong fairness guarantees~\citep{cohen2026budget}; for MMS without budgets, our predictions need not link different agents' values to the same good. We combine prediction-based MMS guarantees with upgrades, bound loss directly through ranked-value error, and prove a tradeoff between guarantees for correct and arbitrary predictions. Unlike bandit fair division~\citep{procaccia2024honor}, we observe the arriving good's values before allocation: the uncertainty concerns future goods. Fairness concerns realized allocations, without averaging over arrivals or predictions.

\paragraph{Scheduling and predictions.}
The MMS benchmark has an exact scheduling interpretation. When all agents share an additive valuation $v$, each agent is a machine and each good $g$ is a job of size $v(g)$. The common MMS is then the offline optimum for \emph{machine covering}, which maximizes the minimum load~\citep{galvez2020symmetry}. Thus $\alpha$-MMS is exactly an $\alpha$-approximation to machine covering. Advance information about the jobs gives a \emph{semi-online} model~\citep{tan2007semi}; predictions that can be wrong give a learning-augmented model. In the accelerator example, procurement plans and workload profiles could predict the mix and values of incoming devices. We use predictions of each agent's ranked values, without requiring arrival order or item identities. Predictions supply information about future opportunities; recourse determines how earlier commitments may change. The heterogeneous case is related to the Santa Claus problem, but replaces its common minimum-utility objective by each agent's own MMS target~\citep{bansal2006santa,hajiaghayi2022santa}.
\paragraph{Other sequential collective decision-making work.}
Fairness across a sequence of collective decisions is also the focus of temporal (or perpetual) voting, in which a fixed group of voters makes a decision in each of several rounds and rules may take earlier decisions into account~\citep{lackner2020perpetual}; \citet{elkind2024temporal} propose a unified framework for these models. Earlier dynamic models choose an alternative in each round under evolving preferences or to maximize long-term Nash welfare~\citep{parkes2013dynamic,freeman2017dynamic}. In public decision making, several independent issues are decided at once~\citep{conitzer2017fair,skowron2022proportional,masarik2024generalised}, a model that generalizes the allocation of indivisible goods~\citep{conitzer2017fair}. Much of the temporal literature concerns proportionality. Perpetual voting rules have been analyzed axiomatically~\citep{lackner2020perpetual,lackner2023proportional}, and justified representation and its variants have been adapted to the temporal setting~\citep{bulteau2021justified,chandak2024proportional} and strengthened beyond EJR, for example to EJR+ and FJR, which remain satisfiable~\citep{phillips2026strengthening}. Verifying such axioms is harder than in multiwinner voting~\citep{elkind2025verifying}, although a strengthening of FJR can be both computed and verified in polynomial time~\citep{teh2026fjr}. Other work considers ranked preferences~\citep{hazon2026ranked}, bounds how often a voter is dissatisfied~\citep{kozachinskiy2025optimal}, and adapts the maximin share to binary perpetual decisions~\citep{gafni2024beyond}. For welfare, maximizing utilitarian welfare is easy, whereas optimizing egalitarian welfare is intractable even in restricted cases, including over public chores~\citep{elkind2024welfare,elkind2025chores}, and enforcing proportionality can incur a welfare loss that grows sublinearly with the number of voters or rounds~\citep{teh2026price}. Related models assign projects to time slots~\citep{elkind2022slot} or elect sequences of committees that bound how often candidates serve~\citep{bredereck2020successive}, how much consecutive committees differ~\citep{bredereck2022votes}, or how often each agent is satisfied~\citep{deltl2023algorithmics}; \citet{zech2024aversion} seek a second-stage winning committee that overlaps as much as possible with the first. As in our setting, it matters how much of the future is known. 
Among proportional rules for sequential decisions, Sequential Phragm\'en decides online, whereas the Method of Equal Shares and Proportional Approval Voting obtain stronger guarantees semi-online and offline, respectively~\citep{chandak2024proportional}; with ranked preferences, some guarantees require all future preferences, while others need only the number of rounds or no information about the future~\citep{hazon2026ranked}. Two features distinguish temporal voting from our model. Each decision there is a public outcome shared by all voters, whereas each good here belongs to one agent. Moreover, earlier decisions are typically not revised, or changes between consecutive outcomes are measured by their size, whereas upgrade recourse permits any change that leaves each agent at least as well off.

\section{Omitted Proofs in Section~\ref{sec:goods-item}}
\label{app:item-proofs}
\subsection{Proof of Lemma~\ref{lem:mms-deletion} (MMS Deletion)}
\label{app:mms-deletion}
An agent may take her first picking turn after several goods have been taken, so her MMS comparison must account for goods that are no longer available. The following lemma handles this, and it also lets us set aside singleton bundles.
\begin{lemma}
\label{lem:mms-deletion}
    Let $v_i$ be normalized and monotone. For every finite set $S$, every $g\in S$, and every $k\ge2$, $\MMS_i^{k-1}(S\setminus\{g\}) \ge \MMS_i^k(S)$.
\end{lemma}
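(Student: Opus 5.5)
Take an optimal $k$-partition $(P_1,\dots,P_k)$ of $S$ that attains $\MMS_i^k(S)$. Every part then satisfies $v_i(P_\ell)\ge\MMS_i^k(S)$. The good $g$ sits in exactly one part; relabel so that $g\in P_k$. Next, form the $(k-1)$-partition $(P_1,\dots,P_{k-2},P_{k-1}\cup(P_k\setminus\{g\}))$ of $S\setminus\{g\}$. This is a valid ordered partition into $k-1$ parts, possibly with empty parts. Note that $k\ge2$ guarantees $k-1\ge1$, so the index $k-1$ exists; when $k=2$ the partition is the single part $P_1\cup(P_2\setminus\{g\})$.

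It remains to bound the value of each part of the new partition. The parts $P_1,\dots,P_{k-2}$ are unchanged, so their values are at least $\MMS_i^k(S)$. The merged part contains $P_{k-1}$. Monotonicity therefore gives
\[
v_i\bigl(P_{k-1}\cup(P_k\setminus\{g\})\bigr)\ge v_i(P_{k-1})\ge\MMS_i^k(S).
\]
So the minimum over the new partition is at least $\MMS_i^k(S)$. Since $\MMS_i^{k-1}(S\setminus\{g\})$ is a maximum over all such partitions, the inequality follows.

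The argument uses monotonicity only for the merged part, and it never needs additivity. Normalization matters only in that empty parts are allowed and valued at $v_i(\varnothing)=0$, which is consistent with the definition of $\Pi_k$. The one point to handle carefully, and I would not call it a real obstacle, is the choice of which part absorbs the leftover goods of $P_k\setminus\{g\}$. Any part other than the one containing $g$ works. We merge instead of discarding because discarding would require $S\setminus\{g\}$ to be partitioned exactly, whereas merging keeps every good. The existence of a maximizing partition is immediate because $\Pi_k(S)$ is finite.
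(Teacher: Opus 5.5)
Your proof is correct and takes the same approach as the paper: start from an optimal $k$-partition, drop $g$ from its part, place the remaining goods of that part into the other parts, and use monotonicity. The paper distributes the leftover goods arbitrarily among $P_1,\dots,P_{k-1}$, whereas you put them all into $P_{k-1}$, which is one particular way of doing this.
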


\begin{proof}
Take a $k$-partition $(P_1,\dots,P_k)$ attaining $\MMS_i^k(S)$, and suppose $g\in P_k$. Remove $P_k$, and distribute the goods in $P_k\setminus\{g\}$ arbitrarily among $P_1,\dots,P_{k-1}$. Monotonicity implies that each resulting bundle has value at least $\MMS_i^k(S)$, proving the claim.
\end{proof}

\subsection{Pseudocode of Insertion and Proof of Lemma~\ref{lem:serial-insertion}}
\label{app:serial-insertion}
The following procedure maintains the picking allocation without repeating every choice from scratch. Position $k$ holds the good currently chosen at turn $k$.
\begin{algorithm}[H]
\caption{Insertion with a Fixed Picking Sequence}\label{alg:serial-insertion}
\begin{algorithmic}[1]
\Require An infinite picking sequence $\sigma$; initially no positions contain goods.
\For{each arriving good $g_t$}
  \State Observe $v_i(g_t)$ for all $i\in[n]$ and set $b\gets g_t$.
  \For{$k=1,\dots,t-1$}
    \State Let $o$ be the good at position $k$.
    \If{$b\succ^{\mathrm{tb}}_{\sigma_k}o$}
      \State Put $b$ at position $k$ and set $b\gets o$.
    \EndIf
  \EndFor
  \State Put $b$ at the new position $t$.
  \State For each $i$, let $A_i^{(t)}$ contain the goods at positions in $P_i^\sigma(t)$.
\EndFor
\end{algorithmic}
\end{algorithm}

Fix an arbitrary finite arrival sequence. We first prove that every update satisfies item-level upgrade recourse.
Consider the update from $\mathcal A^{(t-1)}$ to $\mathcal A^{(t)}$. During this update, each step leaves a position unchanged or exchanges its good with $b$. The final good $b$ is placed in the new position $t$. Hence the positions contain
distinct goods and partition $G^{(t)}$ at the end of the update.

Consider an old position $k<t$, and let $i=\sigma_k$. When position $k$ is
processed, it still contains the good it contained after arrival $t-1$; call this
good $o$. After position $k$ is processed, the good in that position is either $o$
itself or the unallocated good $b$ that was compared with $o$. In the second
case, the replacement occurs only because
$b \succ^{\mathrm{tb}}_i o$, which implies $v_i(b)\ge v_i(o)$. Therefore,
for each agent $i$, mapping the good formerly in each old position
$k<t$ with $\sigma_k=i$ to the good occupying that same position after the
update gives an injection from $A^{(t-1)}_i$ to $A^{(t)}_i$, and every old
good is mapped to a good that agent $i$ values at least as much. Hence
$A^{(t)}_i \succeq^{\mathrm{up}}_i A^{(t-1)}_i$ for every $i\in[n]$.

It remains to prove the picking claim. We use induction on $t$. The
case $t=0$ is immediate. Assume the claim holds after $t-1$ arrivals, and
consider the arrival of $g_t$. At the start of the update, the currently unallocated good is
$g_t$.

While processing old positions $1,\dots,t-1$, maintain the following
invariant. Immediately before old position $k<t$ is processed, positions
$1,\dots,k-1$ contain exactly the first $k-1$ choices of the picking
process on $G^{(t)}$; the unprocessed old positions $k,\dots,t-1$ contain
exactly the goods that were remaining before position $k$ in the picking
process on $G^{(t-1)}$; and the only additional remaining good is the
current unallocated good.

The invariant is immediate before position $1$ is processed. Suppose it holds
before old position $k<t$ is processed. Let $o$ be the old good in position $k$, and
let $i=\sigma_k$. By the induction hypothesis, $o$ is the
$\succ^{\mathrm{tb}}_i$-maximal good among the old goods remaining at this
point. Since the tie-breaking rule is consistent under restriction, the
$\succ^{\mathrm{tb}}_i$-maximal good among all remaining goods in
$G^{(t)}$ is either $o$ or the current unallocated good $b$. Algorithm~\ref{alg:serial-insertion} puts exactly
this good in position $k$, and leaves the other one as the currently unallocated good. Thus the
invariant is preserved.

After positions $1,\dots,t-1$ have been processed, only the currently unallocated good remains, and
Algorithm~\ref{alg:serial-insertion} places it in position $t$. Therefore, after arrival $t$, the maintained
allocation is exactly the picking allocation induced by
$(\sigma_1,\dots,\sigma_t)$ on $G^{(t)}$.

\subsection{Proof of Theorem~\ref{thm:balanced-serial-insertion}}

By Lemma~\ref{lem:serial-insertion}, the algorithm satisfies item-level upgrade recourse, and the final allocation $\mathcal A$ is the picking allocation induced by $(\sigma_1,\dots,\sigma_m)$ on $G$. Fix distinct agents $i,j\in[n]$. Let
$g^i_1,\dots,g^i_a$ be the goods in $A_i$, listed in the order of
agent $i$'s turns in $(\sigma_1,\dots,\sigma_m)$, and let
$g^j_1,\dots,g^j_b$ be the goods in $A_j$, listed in the order of
agent $j$'s turns. If $b=0$, then $A_j=\varnothing$, so there is
nothing to prove. If $b=1$, then removing $g^j_1$ leaves the empty bundle,
and $v_i(A_i)\ge0=v_i(A_j\setminus\{g^j_1\})$.
Assume therefore that $b\ge 2$.

Since $\sigma$ is balanced, $b\le a+1$. For every $\ell=1,\dots,b-1$,
agent $i$'s $\ell$th turn occurs before agent $j$'s $(\ell+1)$st turn.
Otherwise, immediately after agent $j$'s $(\ell+1)$st turn, agent $j$
would have at least $\ell+1$ turns while agent $i$ would have at most
$\ell-1$ turns, contradicting balancedness.

Thus, for every $\ell=1,\dots,b-1$, when agent $i$ picked $g^i_\ell$, the
good $g^j_{\ell+1}$ was still available. Since picking chooses a
$\succ^{\mathrm{tb}}_i$-maximal remaining good,
$v_i(g^i_\ell)\ge v_i(g^j_{\ell+1})$.
Summing over $\ell=1,\dots,b-1$, and using $b\le a+1$, additivity, and
nonnegativity, gives
\[
v_i(A_i)
=
\sum_{\ell=1}^{a} v_i(g^i_\ell)
\ge
\sum_{\ell=1}^{b-1} v_i(g^i_\ell)
\ge
\sum_{\ell=1}^{b-1} v_i(g^j_{\ell+1})
=
v_i(A_j\setminus\{g^j_1\}).
\]
Therefore EF1 holds for the ordered pair $(i,j)$. Since $i$ and $j$ were arbitrary, $\mathcal A$ is EF1.

The standard implication from EF1 to $1/n$-MMS for additive valuations~\citep{amanatidis2018relaxations} gives the remaining guarantee.

\subsection{Proof of Theorem~\ref{thm:tight-two-agent-mms} (Exact Factor for Two Agents)}
\label{app:two-agent-tight}
\begin{theorem}\label{thm:tight-two-agent-mms}
For $n=2$, Algorithm~\ref{alg:serial-insertion} run with $\sigma=(1,2,2,1,2,2,\dots)$ satisfies item-level upgrade recourse, and its final allocation is $2/3$-MMS. Moreover, the factor $2/3$ is best possible for deterministic algorithms with item-level upgrade recourse, even for identical additive valuations with singleton values in $[0,1]$.
\end{theorem}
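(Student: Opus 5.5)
The plan has two parts: a positive part via the rank-value bound from Section~\ref{sec:preliminaries}, and a lower-bound part via an adversary exploiting the fact that item-level upgrades never decrease bundle cardinality.

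\textbf{Achievability.} By Lemma~\ref{lem:serial-insertion}, insertion with $\sigma=(1,2,2,1,2,2,\dots)$ satisfies item-level upgrade recourse, and the final allocation is the picking allocation of $\sigma$ on $G$. Hence $v_i(A_i)\ge R_i^\sigma(G)$, and it suffices to compare rank values with MMS. Write $x_k=x_k^i(G)$, padded with zeros.

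Agent~$1$ owns turns $1,4,7,\dots$. Group the ranks into blocks $\{3r+1,3r+2,3r+3\}$. In each block, $x_{3r+1}$ is at least the other two values, so $x_{3r+1}\ge\frac13(x_{3r+1}+x_{3r+2}+x_{3r+3})$. Summing over blocks gives $R_1^\sigma\ge v_1(G)/3\ge\frac23\MMS_1(G)$, since $\MMS_1(G)\le v_1(G)/2$.

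Agent~$2$ owns turns $2,3,5,6,\dots$. Group the ranks instead into blocks $\{3r+2,3r+3,3r+4\}$. She holds the two largest ranks in each block, so $R_2^\sigma\ge\frac23\sum_{k\ge2}x_k$. By MMS deletion (Lemma~\ref{lem:mms-deletion}) applied to a top-ranked good $g$, we get $\MMS_2(G)\le\MMS_2^1(G\setminus\{g\})=\sum_{k\ge2}x_k$. This gives $v_2(A_2)\ge\frac23\MMS_2(G)$. Truncated final blocks only help, because the padded values are nonnegative.

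\textbf{Upper bound.} Fix $\alpha>2/3$ and suppose a deterministic algorithm guarantees $\alpha$-MMS for identical valuations. The adversary first presents $k$ goods of value $\varepsilon$ each. Choose $\varepsilon\le1/k$, with $k$ large and even.

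After these arrivals, the MMS is $k\varepsilon/2$. Since the algorithm could be stopped here, each agent must hold at least $c:=\lceil\alpha k/2\rceil$ of the small goods.

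Next, one good of value $L=k\varepsilon\le1$ arrives, so the final MMS is $k\varepsilon$. Whichever agent receives $L$ previously held at least $c$ small goods. Under item-level recourse, her old goods must inject into her new bundle via distinct goods of value at least $\varepsilon$. Only one image can be $L$, so she still holds at least $c-1$ small goods.

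The other agent then holds at most $k-c+1\le(1-\alpha/2)k+1$ small goods. Her value is at most $((1-\alpha/2)+1/k)\,\MMS(G)$. Because $1-\alpha/2<\alpha$ exactly when $\alpha>2/3$, this is below $\alpha\MMS(G)$ once $k$ is large enough, a contradiction.

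The step I expect to need the most care is making this counting airtight. Specifically, I must justify that the $L$-holder retains at least $c-1$ small goods, which needs the injection argument together with the fact that all goods other than $L$ have value exactly $\varepsilon$. I must also handle the rounding of $\lceil\alpha k/2\rceil$ so that the strict inequality survives. Scaling keeps all singleton values in $[0,1]$.
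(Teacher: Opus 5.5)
Your proof is correct and matches the paper's argument. For achievability, the paper uses the same blocks of three ranks together with $\MMS_1(G)\le v_1(G)/2$ and $\MMS_2(G)\le v_2(G)-x_1^2(G)$; you obtain the latter via MMS deletion. For optimality, the paper cites Theorem~\ref{thm:mms-impossibility} at $n=2$, and your adversary is that cardinality-counting construction stopped after the first large good (worth $k\varepsilon$ instead of $1$), which yields the same bound $\alpha\le 2/3+2/(3k)$.
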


By Lemma~\ref{lem:serial-insertion}, the algorithm has item-level upgrade recourse and the final allocation $\mathcal A$ is the picking allocation induced by $(\sigma_1,\dots,\sigma_m)$ on $G$. It is therefore enough to prove the $2/3$-MMS bound for an arbitrary finite set of $m$ arrived goods under this picking allocation.

Recall the ranked values $x_k^i(G)$ from the preliminaries, and set $x_k^i(G):=0$ for $k>m$. For each agent $i\in[2]$, we use the two upper bounds $\MMS_i(G)\le v_i(G)/2$ and $\MMS_i(G)\le v_i(G)-x_1^i(G)$.
The first bound holds because no two-bundle partition can make both
bundles worth more than half the total value. The second bound holds
because, in any two-bundle partition, the bundle not containing the
highest-valued good has value at most $v_i(G)-x_1^i(G)$, and the minimum of the
two bundle values is at most the value of that bundle.

We first consider agent $1$. Her turns are $3r-2$ for $r\ge1$, so the rank-value bound in Section~\ref{sec:preliminaries} gives $v_1(A_1)
\ge
\sum_{r\ge 1} x_{3r-2}^1(G)$.
Since the sequence $(x_k^1(G))_{k\ge1}$ is nonincreasing, $x_{3r-2}^1(G)\ge\frac13(x_{3r-2}^1(G)+x_{3r-1}^1(G)+x_{3r}^1(G))$ for every $r\ge1$.
Summing over all $r$ gives
\[
v_1(A_1)
\ge
\sum_{r\ge1} x_{3r-2}^1(G)
\ge
\frac13\sum_{r\ge1}
\left(x_{3r-2}^1(G)+x_{3r-1}^1(G)+x_{3r}^1(G)\right)
=
\frac{v_1(G)}{3}.
\]
Using $\MMS_1(G)\le v_1(G)/2$, we obtain $v_1(A_1)\ge v_1(G)/3\ge\frac23\MMS_1(G)$.

We now consider agent $2$. Her turns are $3r-1$ and $3r$ for $r\ge1$, so the rank-value bound gives $v_2(A_2)\ge\sum_{r\ge1}(x_{3r-1}^2(G)+x_{3r}^2(G))$. For every $r\ge1$, monotonicity of the ranked values gives $x_{3r-1}^2(G)+x_{3r}^2(G)\ge\frac23(x_{3r-1}^2(G)+x_{3r}^2(G)+x_{3r+1}^2(G))$.
The triples on the right partition all ranks except rank $1$. Consequently,
\[
    v_2(A_2)
    \ge
    \sum_{r\ge1}\left(x_{3r-1}^2(G)+x_{3r}^2(G)\right)
    \ge
    \frac23(v_2(G)-x_1^2(G))
    \ge
    \frac23\MMS_2(G).
\]

We have shown that both agents receive value at least two thirds of their
maximin share. Thus the algorithm is $2/3$-MMS.

Finally, the best-possible part follows from Theorem~\ref{thm:mms-impossibility}: for
$n=2$, the upper bound there is $1/H_2=2/3$.
\subsection{Deadline Sequences and Proof of Theorem~\ref{thm:deadline-serial-insertion-mms}}
\label{app:deadline-analysis}
For $\alpha\in(0,1)$, agent $i$ has turn requests $T_{i,j}$, $j=0,1,\ldots$, where $T_{i,j}$ asks for her $(j+1)$st turn by the deadline $i+\lfloor j(n-i+1)/\alpha\rfloor$.
Sort all requests by deadline, breaking ties by $(i,j)$, and denote the resulting sequence by $\tau(\alpha)$. The first $n$ turns belong to agents $1,\ldots,n$ in that order. Maintaining the next request of each agent takes $O(\log n)$ time per turn. We first check when all deadlines can be met, then derive the MMS guarantee. This refines the picking-time argument of \citet{amanatidis2016truthfulMMS} and the deadline construction of \citet{halpern2021partial}.
\begin{lemma}
\label{lem:deadline-feasibility}
Every turn request in $\tau(\alpha)$ appears no later than its deadline if either $\alpha H_n+\alpha^2/(1-\alpha)\le1$, or $n\ge256$ and $\alpha H_n+54\alpha^2/n\le1$.
\end{lemma}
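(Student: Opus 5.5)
The plan is to reduce feasibility to a counting condition and then bound the count using the explicit deadlines. Since $\tau(\alpha)$ lists requests in nondecreasing deadline order, a request with deadline $d$ occupies a position at most $N(d)$. Here $N(t)$ is the number of requests with deadline at most $t$. Ties are harmless, because every request tied with or preceding it also has deadline at most $d$. So it suffices to show $N(t)\le t$ for every integer $t\ge1$.

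\textbf{Step 1: a formula for $N(t)$.} Agent $i$'s request $T_{i,j}$ has deadline at most $t$ iff $\lfloor j(n-i+1)/\alpha\rfloor\le t-i$, i.e.\ $j(n-i+1)/\alpha<t-i+1$. Hence, for $t\ge i$, agent $i$ contributes $\lceil \alpha(t-i+1)/(n-i+1)\rceil$ requests. Writing $r=n-i+1$,
\[
N(t)=\sum_{i=1}^{\min(t,n)}\Bigl\lceil \tfrac{\alpha(t-i+1)}{n-i+1}\Bigr\rceil .
\]
\begin{itemize}
\item For $t\le n$, every summand lies in $(0,\alpha]\subseteq(0,1)$, so each ceiling is $1$ and $N(t)=t$.
\item For $t=n+s$ with $s\ge1$, the summand for $r$ is $\lceil \alpha+\alpha s/r\rceil = 1+\lceil \alpha s/r+\alpha-1\rceil_+$.
\end{itemize}
So the target becomes
\[
D(s):=\sum_{r=1}^n\bigl\lceil \alpha s/r+\alpha-1\bigr\rceil_+\le s .
\]

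\textbf{Step 2: the first condition.} The summand for $r$ is nonzero only if $\alpha s/r+\alpha-1>0$, i.e.\ $r<\alpha s/(1-\alpha)$. For such $r$, I will bound $\lceil y\rceil\le y+1$ with $y=\alpha s/r+\alpha-1$, so the summand is at most $\alpha s/r+\alpha$. Summing the $\alpha s/r$ parts over $r\le n$ gives at most $\alpha sH_n$. The $\alpha$ parts contribute at most $\alpha\cdot\alpha s/(1-\alpha)$, since there are fewer than $\alpha s/(1-\alpha)$ such $r$. Thus
\[
D(s)\le s\bigl(\alpha H_n+\alpha^2/(1-\alpha)\bigr)\le s
\]
under the first hypothesis, which proves that case.

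\textbf{Step 3: the second condition, the main obstacle.} Here the crude rounding loss $\alpha^2/(1-\alpha)\approx 1/H_n^2$ is too large, and it must be replaced by $54\alpha^2/n$. I would first rewrite $D(s)$ by swapping the order of counting: $D(s)=\sum_{k\ge1}\#\{r\le n: \alpha s/r>k-\alpha\}$. I would then split into two regimes.

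\begin{itemize}
\item \emph{Large $s$} (roughly $s\ge n^2$ or so). The rounding loss is at most $n$ in absolute terms. The slack available is $s(1-\alpha H_n)\ge 54\alpha^2 s/n$, and this dominates once $s$ is a large multiple of $n^2/(\alpha^2\cdot 54)\cdot$const. Since $\alpha\approx1/\ln n$, I need to check this threshold carefully.
\item \emph{Small $s$}. Only $r<\alpha s/(1-\alpha)$ matter. Compare $\sum_k \min\{n,\lceil \alpha s/(k-\alpha)\rceil-1\}$ with $\alpha s\sum_{r\le n}1/r$. The loss from rounding each layer $k$ is at most one per layer, and there are about $\alpha s/n$ saturated layers plus unsaturated ones. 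The shift $1/(k-\alpha)-1/k=\alpha/(k(k-\alpha))$ sums to $O(\alpha^2)$ times the number of relevant terms divided by $n$.
\end{itemize}

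Making these two regimes meet with the explicit constant $54$ and the threshold $n\ge256$ is the delicate part. I would first try to prove $D(s)\le \alpha sH_n+54\alpha^2 s/n$ directly by pairing each layer $k$ with the block of harmonic terms $r\in(\alpha s/(k+1),\alpha s/k]$. I would fall back on a case split in $s$ relative to $n/\alpha$ if the direct pairing leaves an error not proportional to $s/n$.
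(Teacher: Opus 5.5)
Your Steps 1 and 2 are correct and follow the paper's own argument: the reduction to $D(n+x)\le n+x$, the identity $D(n+x)-n=\sum_{r=1}^n(\lceil\alpha+\alpha x/r\rceil-1)$, and the bound $\lceil y\rceil-1\le y$ applied only to $r<\alpha x/(1-\alpha)$. The second condition, however, is not proved, and the heuristic you give for it is wrong. Compare $D$ with the unshifted count $\sum_r(\lceil\alpha s/r\rceil-1)$: the excess is the number of $r$ for which $\alpha s/r$ is an integer or has fractional part above $1-\alpha$. The layer shifts $\alpha s\cdot\alpha/(k(k-\alpha))$, which you expect to total $O(\alpha^2 s/n)$, actually total order $\alpha n$ once $\alpha s\ge n$ (for $\alpha s=n$, the layers $k\ge2$ alone give at least $(\pi^2/6-1)\alpha n$), and order $\alpha^2 s$ below that. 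So in the range $\alpha s<n^2/54$ they are not covered by the slack $54\alpha^2 s/n$; near $\alpha s\approx n$ they exceed it by a factor of order $n$. Bounding rounding errors from above, at most one per layer, therefore cannot close the case. You need a \emph{lower} bound on what rounding down removes:
\[
\sum_{r=1}^n\Bigl\{\alpha+\frac{\alpha x}{r}\Bigr\}\ge\alpha n,
\quad\text{equivalently}\quad
\sum_{r=1}^n\Bigl\lfloor\alpha+\frac{\alpha x}{r}\Bigr\rfloor\le\alpha xH_n
\qquad(0\le\alpha x\le n^2/54).
\]
This is the paper's estimate~\eqref{eq:deadline-rounding}. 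Together with $\lceil y\rceil-1\le\lfloor y\rfloor$, it gives $D(n+x)-n\le x$ in that range.

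The paper proves this estimate in two ranges. For $\alpha x\le n/2$, every summand with $r>k:=\lfloor\alpha x/(1-\alpha)\rfloor$ vanishes. The forfeited harmonic terms $\alpha x(H_n-H_k)\ge2\alpha x/5$ then pay for the additive loss $\alpha k\le\alpha x/5$; this step uses $\alpha<1/H_n<1/6$, which is where $n\ge256$ enters. For $n/2<\alpha x\le n^2/54$, the paper bounds the terms with $r<R:=\lceil\sqrt{\alpha x/2}\rceil$ crudely. It compares the remaining terms with $\int_R^n\lfloor\alpha+\alpha x/t\rfloor\,dt$, using monotonicity. It then shows that the fractional parts integrate to at least $10n/27$, via $I_\alpha(c)=c\int_c^\infty\{u+\alpha\}u^{-2}\,du\ge10/27$ for $c\ge1/2$; this uses concavity between the points $j-\alpha$, Jensen's inequality, and trapezoid bounds. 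The margin $10/27-1/6$ absorbs the error terms, and the cap $\alpha x\le n^2/54$ is what gives $\sqrt{2\alpha x}\le7n/36$. Finally, in your large-$s$ regime the rounding loss is less than $\alpha n$, not $n$, because each summand exceeds $\alpha x/r$ by less than $\alpha$. This yields exactly the threshold $\alpha x\ge n^2/54$ at which the slack $54\alpha^2x/n$ takes over. Your cruder loss $n$ would raise that threshold to $\alpha x\ge n^2/(54\alpha)$ and leave a larger range for the hard estimate.
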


\begin{proof}
Let $D(t)$ be the number of turn requests with deadline at most $t$. It is enough to prove $D(t)\le t$ for every positive integer $t$, because a request with deadline $t$ then has position at most $D(t)\le t$.

For $t\le n$, the only requests due by time $t$ are $T_{1,0},\dots,T_{t,0}$, so $D(t)=t$. Now let $t=n+x$ for an integer $x>0$, and re-index agents by $r=n-i+1$. The request $T_{i,j}$ is due by time $n+x$ exactly when $n-r+1+ \lfloor jr/\alpha\rfloor\le n+x$, which is equivalent to $j<\alpha(1+x/r)$. Hence $D(n+x)-n=\sum_{r=1}^n(\lceil\alpha+\alpha x/r\rceil-1)$.
For the first condition, a summand is positive only when $r<\alpha x/(1-\alpha)$.
Using $\lceil y\rceil-1\le y$ for these summands, we obtain
\[
    D(n+x)-n
     \le
    \sum_{r\in[n]:\,r<\alpha x/(1-\alpha)}\left(\alpha+\frac{\alpha x}{r}\right) \le
    \frac{\alpha^2x}{1-\alpha}+\alpha xH_n
    \le x,
\]
where the last inequality is the first condition in the lemma.

For the second condition, $\alpha<1/H_n<1/6$, since $H_{256}>6$. Appendix~\ref{app:deadline-rounding} proves the counting estimate
\begin{equation}\label{eq:deadline-rounding}
    \sum_{r=1}^n\left\lfloor\alpha+\frac{\alpha x}{r}\right\rfloor
    \le\alpha xH_n
    \quad\left(0\le\alpha x\le\frac{n^2}{54}\right).
\end{equation}
Since $\lceil y\rceil-1\le\lfloor y\rfloor$, this gives $D(n+x)-n\le x$ whenever $\alpha x\le n^2/54$. For $\alpha x\ge n^2/54$, use $\lceil y\rceil-1<y$ to obtain $D(n+x)-n<\alpha xH_n+\alpha n\le x(\alpha H_n+54\alpha^2/n)\le x$.
Thus $D(n+x)\le n+x$ under either condition.
\end{proof}

Meeting the deadlines gives each agent sufficiently many turns in every prefix starting at her first turn. The next lemma converts this count into a value guarantee: for nonincreasing values, selecting a given fraction of every prefix captures at least that fraction of the total value, even when most value is concentrated in the first few goods.

\begin{lemma}
\label{lem:prefix-count}
Let $z_1\ge\cdots\ge z_s\ge0$, let $P\subseteq[s]$, and let $\lambda\in[0,1]$. If $|P\cap[k]|\ge\lambda k$ for every $k \in [s]$, then $\sum_{k\in P}z_k\ge\lambda\sum_{k=1}^s z_k$.
\end{lemma}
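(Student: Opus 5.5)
We prove Lemma~\ref{lem:prefix-count} by Abel summation (summation by parts), which reduces the claim to the prefix-count hypothesis applied termwise. Set $z_{s+1}:=0$ and write each value as a telescoping sum of nonnegative increments,
\[
z_k=\sum_{\ell=k}^{s}(z_\ell-z_{\ell+1}),
\]
where every difference $d_\ell:=z_\ell-z_{\ell+1}$ is nonnegative because the sequence is nonincreasing and $z_s\ge0$.

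First, exchange the order of summation to express both sides of the desired inequality in terms of the $d_\ell$. Since $k\le\ell$ exactly when $k\in[\ell]$, we obtain
\[
\sum_{k\in P}z_k=\sum_{\ell=1}^{s}d_\ell\,|P\cap[\ell]|.
\]
Applying the same identity with $P=[s]$ gives
\[
\sum_{k=1}^{s}z_k=\sum_{\ell=1}^{s}d_\ell\,\ell.
\]

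Second, compare these sums term by term. The hypothesis gives $|P\cap[\ell]|\ge\lambda\ell$ for every $\ell\in[s]$. Multiplying by $d_\ell\ge0$ preserves each inequality, and summing over $\ell$ yields
\[
\sum_{k\in P}z_k\ge\lambda\sum_{\ell=1}^{s}d_\ell\,\ell=\lambda\sum_{k=1}^{s}z_k,
\]
which is the claim.

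There is no real obstacle here. The only points needing care are using the convention $z_{s+1}=0$ so that the last increment $d_s=z_s$ is nonnegative, and noting that nonnegativity of every $d_\ell$ is exactly what allows the termwise comparison. This is where monotonicity of the values enters. Without it, a set $P$ could satisfy every prefix-count condition while missing the few positions that carry most of the value.
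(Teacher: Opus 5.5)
Your proof is correct and is essentially the paper's argument: set $z_{s+1}=0$, rewrite $\sum_{k\in P}z_k$ by summation by parts as $\sum_{\ell}|P\cap[\ell]|\,(z_\ell-z_{\ell+1})$, and compare termwise with $\lambda\ell$ using $z_\ell-z_{\ell+1}\ge0$. You spell out the order-of-summation exchange that the paper leaves implicit, and your identification of where monotonicity enters is accurate.
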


\begin{proof}
Set $z_{s+1}:=0$ and $N(k):=|P\cap[k]|$. Summation by parts, followed by $N(k)\ge\lambda k$ and $z_k-z_{k+1}\ge0$, gives
\[
    \sum_{k\in P}z_k
    =
    \sum_{k=1}^s N(k)(z_k-z_{k+1})
    \ge
    \lambda\sum_{k=1}^s k(z_k-z_{k+1})
    =
    \lambda\sum_{k=1}^s z_k. \qedhere
\]
\end{proof}

\begin{theorem}
\label{thm:deadline-serial-insertion-mms}
Suppose every request in $\tau(\alpha)$ appears no later than its deadline. Then Algorithm~\ref{alg:serial-insertion} run with $\tau(\alpha)$ satisfies item-level upgrade recourse, and its final allocation is $\alpha$-MMS.
\end{theorem}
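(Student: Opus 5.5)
Recourse and the identification of the final allocation come directly from Lemma~\ref{lem:serial-insertion}: the final allocation $\mathcal A$ is the picking allocation induced by $(\tau_1,\dots,\tau_m)$ on $G$, where $\tau=\tau(\alpha)$. It therefore suffices to fix an agent $i$ and show $v_i(A_i)\ge\alpha\MMS_i(G)$. Throughout, set $r=n-i+1$.

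\textbf{Step 1: reduce to the goods remaining at agent $i$'s first turn.} Because the first $n$ turns are agents $1,\dots,n$ in order, agent $i$ picks first at position $i$. If $m<i$, she receives nothing. In that case $m\le n-1$, so some bundle of any $n$-partition is empty and $\MMS_i(G)=0$, and the bound is trivial. Otherwise, let $S$ be the set of goods remaining just before turn $i$; it has $s=m-i+1$ goods. Exactly $i-1$ goods have been removed, so applying Lemma~\ref{lem:mms-deletion} $i-1$ times gives
\[
\MMS_i^{r}(S)\ge\MMS_i(G).
\]
Since agent $i$ cannot split $S$ into $r$ parts each worth more than $v_i(S)/r$, we get $\MMS_i(G)\le v_i(S)/r$. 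The goal becomes $v_i(A_i)\ge(\alpha/r)\,v_i(S)$.

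\textbf{Step 2: a rank-value bound relative to $S$.} Let $z_1\ge\dots\ge z_s$ be agent $i$'s values for the goods of $S$, and reindex turns so that position $i+k-1$ of $\tau$ becomes $k\in[s]$. Let $P$ be the set of $k$ at which agent $i$ picks. By the same argument as the rank-value bound in the preliminaries, the good she takes at relative turn $k$ is worth at least $z_k$. The reason is that the $k-1$ turns between positions $i$ and $i+k-2$ remove at most $k-1$ goods of $S$, so one of her $k$ best goods in $S$ remains. Hence $v_i(A_i)\ge\sum_{k\in P}z_k$.

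\textbf{Step 3: prefix counts from the deadlines.} I will verify $|P\cap[k]|\ge(\alpha/r)k$ for every $k\in[s]$ and then apply Lemma~\ref{lem:prefix-count} with $\lambda=\alpha/r\le1$. Fix $k$ and put $j=\lceil \alpha k/r\rceil-1$, which satisfies $j<\alpha k/r$. The request $T_{i,j}$ has deadline
\[
i+\lfloor jr/\alpha\rfloor\le i+k-1,
\]
since $jr/\alpha<k$ and $k$ is an integer. All deadlines are met, and the requests of agent $i$ appear in order of $j$, so her $(j+1)$st turn occurs at position at most $i+k-1$. Her turns from position $i$ onward are all of her turns, so
\[
|P\cap[k]|\ge j+1=\lceil\alpha k/r\rceil\ge\alpha k/r.
\]
Combining the three steps gives $v_i(A_i)\ge(\alpha/r)v_i(S)\ge\alpha\MMS_i(G)$.

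The main point needing care is the integer rounding in Step 3, specifically the strictness $jr/\alpha<k$ that makes the floor land at $k-1$. A second check is that the tie-breaking in $\tau$ does not reorder an agent's own requests. It does not, because both the deadlines and the tie-break are monotone in $j$.
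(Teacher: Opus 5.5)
Your proposal is correct and follows essentially the same proof as the paper. Both use MMS deletion to reduce to the $r=n-i+1$ parts of the goods remaining at agent $i$'s first turn, the rank-value bound relative to that set, the deadline of $T_{i,j}$ with $j=\lceil\alpha k/r\rceil-1$ to get $|P\cap[k]|\ge\alpha k/r$, and Lemma~\ref{lem:prefix-count} with $\lambda=\alpha/r$. Your remark that the tie-breaking is monotone in $j$, so that $T_{i,j}$ really is her $(j+1)$st turn, makes explicit a point the paper leaves implicit.
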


\begin{proof}
Item-level upgrade recourse follows from Lemma~\ref{lem:serial-insertion}. Fix an agent $i$. If $m<i$, then $m<n$, so $\MMS_i(G)=0$ and the claim is immediate. Assume $m\ge i$.

Before agent $i$'s first turn, exactly $i-1$ goods have been selected. Let $S_i$ be the set of goods remaining at that point. Repeatedly applying Lemma~\ref{lem:mms-deletion} to the $i-1$ selected goods gives $\MMS_i(G)\le\MMS_i^{n-i+1}(S_i)\le v_i(S_i)/(n-i+1)$.

Let $s:=|S_i|$, and list the singleton values in $S_i$ as $z_1\ge z_2\ge\cdots\ge z_s$.
Let $P\subseteq[s]$ be the set of relative positions $k$ such that turn $i-1+k$ belongs to agent $i$. For every $k\in[s]$, set $q:=\lceil\alpha k/(n-i+1)\rceil-1$.
Since $q<\alpha k/(n-i+1)$, the deadline of $T_{i,q}$ is at most $i+k-1$. Hence agent $i$ owns at least $q+1$ of the turns $i,\dots,i+k-1$, and therefore, $|P\cap[k]|\ge\lceil\alpha k/(n-i+1)\rceil\ge\alpha k/(n-i+1)$.

At relative position $k$, at most $k-1$ goods of $S_i$ have already been selected, so a turn of agent $i$ at that position gives her a good worth at least $z_k$. Lemma~\ref{lem:prefix-count}, with $\lambda=\alpha/(n-i+1)$, now gives $v_i(A_i)\ge\sum_{k\in P}z_k\ge\alpha v_i(S_i)/(n-i+1)\ge\alpha\MMS_i(G)$.
Since $i$ was arbitrary, the final allocation is $\alpha$-MMS.
\end{proof}
\subsection{Proof of Corollary~\ref{cor:general-deadline-mms} (Explicit Guarantee for Every \texorpdfstring{$n$}{n})}
\begin{corollary}
\label{cor:general-deadline-mms}
For every $n\ge2$, let
\[
\alpha_n^*:=
\begin{cases}
\dfrac{H_n+1-\sqrt{H_n^2-2H_n+5}}{2(H_n-1)},&2\le n<256,\\[6pt]
\dfrac{2}{H_n+\sqrt{H_n^2+216/n}},&n\ge256.
\end{cases}
\]
Algorithm~\ref{alg:serial-insertion} run with $\tau(\alpha_n^*)$ satisfies item-level upgrade recourse and returns an $\alpha_n^*$-MMS allocation. Moreover, $1/(H_n+1)\le\alpha_n^*<1/H_n$, and, for $n\ge256$, $1/H_n-\alpha_n^*=54(\alpha_n^*)^2/(nH_n)\le54/(nH_n^3)$.
\end{corollary}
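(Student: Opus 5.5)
The plan is to reduce everything to Lemma~\ref{lem:deadline-feasibility} and Theorem~\ref{thm:deadline-serial-insertion-mms}. If every request in $\tau(\alpha_n^*)$ appears by its deadline, then Theorem~\ref{thm:deadline-serial-insertion-mms} gives both the item-level upgrade recourse and the $\alpha_n^*$-MMS guarantee. So the work is to show that $\alpha_n^*$ is exactly the boundary value of the relevant feasibility condition in each range of $n$, and then to verify the stated bounds.

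For $2\le n<256$, I will use the first condition, $\alpha H_n+\alpha^2/(1-\alpha)\le1$, on $\alpha\in(0,1)$. Multiplying by $1-\alpha$ turns it into
\[
(H_n-1)\alpha^2-(H_n+1)\alpha+1\ge0 .
\]
The smaller root of this quadratic is the stated formula; since $H_n>1$, the expression is nonnegative exactly up to that root. I also need to check that the root lies in $(0,1)$. At $\alpha=0$ the quadratic equals $1>0$, and at $\alpha=1$ it equals $-1<0$, so the smaller root is indeed in $(0,1)$ and the condition holds with equality there.

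For $n\ge256$, I will use the second condition, $54\alpha^2/n+H_n\alpha-1\le0$. Its positive root, rationalized, is $2/(H_n+\sqrt{H_n^2+216/n})$, which is the stated formula, so again the condition holds at $\alpha_n^*$.

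Next come the comparisons. Plugging $\alpha=1/H_n$ into either condition gives a left side strictly greater than $1$, so the condition fails there. Both left sides are increasing in $\alpha$, so the boundary root satisfies $\alpha_n^*<1/H_n$.

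For the lower bound $\alpha_n^*\ge1/(H_n+1)$, I check that the condition holds at $\alpha=1/(H_n+1)$.
\begin{itemize}
\item In the first case, the left side there is $H_n/(H_n+1)+1/((H_n+1)H_n)$. This is at most $1$ iff $1/H_n\le1$, which is true.
\item In the second case, I need $54/(n(H_n+1)^2)\le 1/(H_n+1)$, i.e.\ $54\le n(H_n+1)$. This is immediate for $n\ge256$.
\end{itemize}
Monotonicity of the left sides then gives $\alpha_n^*\ge1/(H_n+1)$.

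Finally, for $n\ge256$, equality in the defining condition gives $1-\alpha_n^*H_n=54(\alpha_n^*)^2/n$. Dividing by $H_n$ yields $1/H_n-\alpha_n^*=54(\alpha_n^*)^2/(nH_n)$, and then $\alpha_n^*<1/H_n$ bounds this by $54/(nH_n^3)$.

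There is no real obstacle here. The only care needed is picking the correct root of the first quadratic, since the larger root is spurious, and confirming that it lies in $(0,1)$ so that Lemma~\ref{lem:deadline-feasibility} applies.
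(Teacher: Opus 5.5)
Your proposal is correct and follows essentially the paper's proof. Like the paper, you identify $\alpha_n^*$ as the root in $(0,1)$ of the equality in Lemma~\ref{lem:deadline-feasibility}, invoke Theorem~\ref{thm:deadline-serial-insertion-mms}, and obtain the bounds from monotonicity of the left-hand sides. The only cosmetic difference is how you get $\alpha_n^*\ge1/(H_n+1)$ for $n\ge256$: you substitute $1/(H_n+1)$ directly, whereas the paper observes that the second condition's left-hand side lies pointwise below the first's, which also shows that the second formula improves on the first.
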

\begin{proof}
In each case, $\alpha_n^*$ is the unique solution in $(0,1)$ to the corresponding equality in Lemma~\ref{lem:deadline-feasibility}. The lemma and Theorem~\ref{thm:deadline-serial-insertion-mms} therefore give the MMS guarantee.

The left-hand sides of both conditions are strictly increasing in $\alpha\in(0,1)$. At $\alpha=1/(H_n+1)$, the first equals $H_n/(H_n+1)+1/(H_n(H_n+1))\le1$.
For $n\ge256$, the second left-hand side is smaller than the first for every $\alpha\in(0,1)$, because $54/n<1<1/(1-\alpha)$. Hence $\alpha_n^*\ge1/(H_n+1)$ in both cases, and the second formula strictly improves on the first when $n\ge256$. Each defining equality also implies $\alpha_n^*H_n<1$. Finally, for $n\ge256$, rearranging $\alpha_n^*H_n+54(\alpha_n^*)^2/n=1$ gives the stated expression for $1/H_n-\alpha_n^*$, and $\alpha_n^*<1/H_n$ gives its upper bound.
\end{proof}

\subsection{Proof of the Counting Estimate\texorpdfstring{~\eqref{eq:deadline-rounding}}{}}
\label{app:deadline-rounding}

Fix $n\ge256$ and $0<\alpha\le1/6$. We prove \eqref{eq:deadline-rounding} for every real $x\ge0$ with $\alpha x\le n^2/54$. The estimate uses the total amount removed by rounding down. We first lower-bound a corresponding integral, and then compare it with the sum over agents.

Write $\{y\}:=y-\lfloor y\rfloor$ for the fractional part of $y$. For $c\ge1/2$, define
\[
    I_\alpha(c):=c\int_c^\infty\frac{\{u+\alpha\}}{u^2}\,du.
\]
We show that $I_\alpha(c)\ge10/27$. The integral converges since $0\le\{u+\alpha\}<1$. Between consecutive points $j-\alpha$, differentiation gives $I_\alpha''(c)=-1/c<0$. The function is continuous, so it suffices to check $c=1/2$ and the points $b=j-\alpha$ for positive integers $j$.

At such a point,
\[
    I_\alpha(b)
    =b\sum_{k=0}^\infty\int_0^1\frac{s}{(b+k+s)^2}\,ds
    \ge\frac b2\sum_{k=0}^\infty\frac1{(b+k+2/3)^2}.
\]
The inequality is Jensen's inequality for the convex function $s\mapsto(b+k+s)^{-2}$ under the probability density $2s$ on $[0,1]$, whose mean is $2/3$. For any $a>0$, summing the trapezoidal inequalities for the convex function $t\mapsto(a+t)^{-2}$ gives
\[
    \sum_{k=0}^\infty\frac1{(a+k)^2}
    \ge\int_0^\infty\frac{dt}{(a+t)^2}+\frac1{2a^2}
    =\frac1a+\frac1{2a^2}.
\]
Consequently,
\[
    I_\alpha(b)
    \ge\frac{b}{2(b+2/3)}+\frac{b}{4(b+2/3)^2}
    \ge\frac{10}{27}.
\]
For the last inequality, the middle expression has derivative $(b/12+7/18)/(b+2/3)^3>0$, and its value at $b=5/6$ is $10/27$; here $b\ge1-\alpha\ge5/6$.

For the remaining endpoint, put $b=1-\alpha\in[5/6,1]$. On $[1/2,5/6)$, $\{u+\alpha\}=u+\alpha$. Discarding a nonnegative part of the integral, we obtain
\[
    I_\alpha(1/2)
    \ge\frac12\int_{1/2}^{5/6}\frac{du}{u}
       +\frac{I_\alpha(b)}{2b}
    \ge\frac15+\frac5{27}
    >\frac{10}{27}.
\]
The integral bound uses $1/u\ge6/5$ on $[1/2,5/6]$. Concavity between consecutive checked points now proves $I_\alpha(c)\ge10/27$ for all $c\ge1/2$.

We next prove \eqref{eq:deadline-rounding}. If $\alpha x\le n/2$, set $k=\lfloor\alpha x/(1-\alpha)\rfloor$. All summands with $r>k$ vanish, and $k\le3n/5$. The result is immediate when $k=0$. Otherwise,
\[
    \sum_{r=1}^n\left\lfloor\alpha+\frac{\alpha x}{r}\right\rfloor
    \le\alpha xH_k+\alpha k
    \le\alpha x\left(H_k+\frac15\right)
    \le\alpha xH_n.
\]
The second inequality uses $\alpha k\le\alpha^2x/(1-\alpha)\le\alpha x/5$, and the last follows from $H_n-H_k\ge(n-k)/n\ge2/5$.

It remains to consider $n/2<\alpha x\le n^2/54$. Let $R=\lceil\sqrt{\alpha x/2}\rceil$, so $1\le R\le n$. Since $t\mapsto\lfloor\alpha+\alpha x/t\rfloor$ is nonincreasing,
\[
    \sum_{r=1}^n\left\lfloor\alpha+\frac{\alpha x}{r}\right\rfloor
    \le\alpha xH_R+\alpha R+
       \int_R^n\left\lfloor\alpha+\frac{\alpha x}{t}\right\rfloor\,dt.
\]
The change of variable $u=\alpha x/t$ gives $\int_0^n\{\alpha+\alpha x/t\}\,dt=nI_\alpha(\alpha x/n)\ge10n/27$, since $\alpha x/n>1/2$.
The same integral over $(0,R)$ is at most $R$. Expanding the floor in the preceding bound therefore gives
\[
    \sum_{r=1}^n\left\lfloor\alpha+\frac{\alpha x}{r}\right\rfloor
    \le\alpha x(H_R+\log(n/R))
       +\alpha n-\frac{10n}{27}+R.
\]
Convexity of $1/t$ and the trapezoidal inequalities give $H_R+\log(n/R)-H_n\le1/(2R)-1/(2n)$.
Using $\alpha\le1/6$ and omitting the nonpositive term $-\alpha x/(2n)$, we obtain
\begin{align*}
    \sum_{r=1}^n\left\lfloor\alpha+\frac{\alpha x}{r}\right\rfloor
    &\le\alpha xH_n-\frac{11n}{54}+R+\frac{\alpha x}{2R}\\
    &\le\alpha xH_n-\frac{11n}{54}+\sqrt{2\alpha x}+1\\
    &\le\alpha xH_n-\frac n{108}+1
    \le\alpha xH_n.
\end{align*}
Here $R\ge\sqrt{\alpha x/2}$ and $R<\sqrt{\alpha x/2}+1$ imply the second inequality. The third uses $\sqrt{2\alpha x}\le n/\sqrt{27}\le7n/36$, and the last uses $n\ge256$. This completes the proof of \eqref{eq:deadline-rounding}.
\subsection{Proof of Theorem~\ref{thm:ordinal-optimality} (Optimality of Fixed Picking Sequences)}
\label{app:ordinal-characterization}
\begin{theorem}\label{thm:ordinal-optimality}
For every deterministic ordinal algorithm with item-level upgrade recourse, insertion with some fixed picking sequence guarantees at least the same MMS factor. For every $\varepsilon>0$, a finite mixed-integer linear optimization over the first $n\lceil1/\varepsilon\rceil$ turns produces a sequence attaining at least $\alpha_n^{\mathrm{ord}}-\varepsilon$, whose turns repeat periodically after a finite initial segment.
\end{theorem}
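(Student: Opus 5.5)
The proof has two parts. First, show that every ordinal algorithm is dominated by insertion with a fixed sequence. Second, show that fixed sequences can be optimized to within $\varepsilon$ by a finite program.

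\textbf{Part 1: domination by a fixed sequence.} Fix a deterministic ordinal algorithm $\mathcal{ALG}$ with item-level upgrade recourse guaranteeing $\beta$-MMS. Consider the family of \emph{strictly decreasing} instances, where every agent strictly prefers $g_k$ to $g_{k+1}$ for every $k$. All these instances present the same ordinal information at every step, so $\mathcal{ALG}$ behaves identically on all of them.

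I will show by induction on $t$ that no good ever changes owner. Suppose the owner of $g_1$ must keep $g_1$ or replace it by a good she values at least as much. No other good qualifies, since all later goods are strictly worse for her. So she keeps $g_1$. Remove $g_1$ and its injection image. Every remaining old good of an agent must then be mapped to a weakly better good. Taking the goods in arrival order, $g_k$ can only be mapped to some $g_{k'}$ with $k'\le k$. Goods $g_1,\dots,g_{k-1}$ are already fixed as images of themselves, so $g_k$ must map to itself. Hence the algorithm is determined by a fixed sequence $\sigma$, with $\sigma_t$ the recipient of $g_t$. On these instances, agent $i$ receives exactly $R_i^\sigma(G)$.

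Any ranked-value profile can be realized by a strictly decreasing instance up to arbitrarily small perturbations. Both $R_i^\sigma$ and $\MMS_i$ are continuous in the values, so $R_i^\sigma(H)\ge\beta\MMS_i(H)$ holds for every finite instance $H$. The one wrinkle is ties, which are excluded by strictness; continuity is exactly what bridges this.

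Insertion with $\sigma$ has item-level recourse by Lemma~\ref{lem:serial-insertion}. It gives each agent at least her rank value by the rank-value bound in Section~\ref{sec:preliminaries}, and $\MMS$ depends only on the multiset of singleton values. So insertion with $\sigma$ guarantees $\beta$-MMS. This also shows that $\alpha_n^{\mathrm{ord}}$ equals the supremum over sequences $\sigma$ of $\inf R_i^\sigma(H)/\MMS_i(H)$.

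\textbf{Part 2: the finite optimization.} The plan is to truncate. Take $L=n\lceil1/\varepsilon\rceil$ and encode $\sigma_1,\dots,\sigma_L$ by binary variables $y_{k,i}$ indicating $\sigma_k=i$. The guarantee of $\sigma$ over all horizons must then be certified by finitely many linear constraints.

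For horizons $m\le L$, I will use the description of a nonincreasing ranked vector as a conic combination of "staircase" vectors $\mathbf 1_{[j]}$. Both sides of the condition need care here.
\begin{itemize}
\item $R_i^\sigma$ is linear in the ranked values, and on a staircase it equals $N_i^\sigma(j)$.
\item $\MMS_i$ is superadditive over such combinations, not linear. Superadditivity is the wrong direction: the guarantee is not implied by checking staircases alone. I therefore expect to enumerate finitely many extreme configurations, namely MMS partitions of ordered instances with at most $L$ goods, rationalized by an LP dual.
\end{itemize}
An alternative is to restrict attention to the deadline-type constraints $N_i^\sigma(k)\ge\dots$ from Theorem~\ref{thm:deadline-serial-insertion-mms} that are tight near the optimum.

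For horizons $m>L$, I will extend the sequence periodically after the prefix: append blocks, each a permutation of all agents, or more generally a periodic pattern with per-agent density matching the limiting fractions. Lemma~\ref{lem:prefix-count} together with MMS deletion (Lemma~\ref{lem:mms-deletion}) should then bound the loss on long horizons. Goods beyond rank $L$ contribute a fraction controlled by roughly $1/\lceil1/\varepsilon\rceil$ per agent, so the total loss is $O(\varepsilon)$. Rescaling $\varepsilon$ gives a sequence attaining $\alpha_n^{\mathrm{ord}}-\varepsilon$.

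\textbf{Main obstacle.} I expect the main difficulty to be the tail control: proving that an optimal, or near-optimal, sequence can be modified after $L$ turns into a periodic one while losing only $\varepsilon$ in MMS factor uniformly over all horizons. Making the finite MILP exactly capture the MMS side is the other delicate point.
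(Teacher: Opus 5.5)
Your Part~1 is correct and is essentially the paper's argument. When every agent strictly ranks goods in arrival order, the combined upgrade injection satisfies $\phi(k)\le k$, hence $\phi(k)=k$. The algorithm therefore follows a fixed sequence that gives each agent exactly $R_i^\sigma$, and continuity plus the rank-value bound for insertion finish the proof.

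Part~2 has a genuine gap, and it is exactly the tail control you flag. Your program certifies only horizons $m\le L$, and you choose the tail afterwards. So the program contains no information about instances with more than $L$ goods, where MMS partitions mix prefix ranks with arbitrarily many tail ranks.

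Your estimate that goods beyond rank $L$ cost only about $1/\lceil1/\varepsilon\rceil$ is false. Take $n-1$ huge goods followed by very many equal small goods. An agent without a huge good has MMS equal to the small total, and she receives roughly her tail density times it. The tail thus carries essentially all the relevant value; with permutation blocks she gets about a $1/n$ fraction whatever the prefix is. Tail densities must therefore depend on each agent's position and be traded off against each other, since they sum to at most one. A finite prefix has no ``limiting fractions'' to copy.

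Nor can you fall back on MMS deletion and Lemma~\ref{lem:prefix-count}. These certify only deadline-type conditions, which do not reach $\alpha_n^{\mathrm{ord}}$ in general. For $n=4$, the ordinary deadlines at $a=\alpha_4=10/21$ already have $21$ requests due by time $20$, which is why the paper needs delayed turns there. Finally, the optimum of your truncated program does decrease to $\alpha_n^{\mathrm{ord}}$ by compactness, but with no rate. Nothing in your plan therefore yields the prefix length $n\lceil1/\varepsilon\rceil$.

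The missing idea is to put the tail into the finite program. The paper adds rate variables $\lambda_i\ge0$ with $\sum_i\lambda_i\le1$. It tests the prefix against $\widetilde{\MMS}^{\,n}(z,T)$, in which an arbitrary extra value $T$ (the limit of many equal small goods) may be split among the parts. LP duality (Lemma~\ref{lem:partition-weights}) turns the condition ``prefix counts plus $\lambda_iT$ beat $a\widetilde{\MMS}^{\,n}(z,T)$ for all $z,T$'' into linear constraints. For each $i$ and each $P\in\Pi_n([m])$, there must be weights $0\le w_j\le\lambda_i$ with $\sum_jw_j=a$ and $\sum_j|P_j\cap[k]|w_j\le N_i^\sigma(k)$ for all $k\le m$. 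The cap $w_j\le\lambda_i$ is what couples the prefix to the tail.

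\emph{Necessity.} For any infinite sequence guaranteeing $a$, take $\lambda_i=\liminf_tN_i^\sigma(t)/t$ and append many equal goods. Hence the program's value $\beta_{n,m}$ is at least $\alpha_n^{\mathrm{ord}}$.

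\emph{Sufficiency.} All-equal values force $c_i=N_i^\sigma(m)\ge a\lfloor m/n\rfloor$. Append a deadline tail that gives agent $i$ at least $\lfloor\theta\lambda_is\rfloor$ of the first $s$ extra turns, where $\theta=1-1/\min_ic_i$. Then her count is at least $\theta$ times the required count at every position. The loss is therefore at most $a/\min_ic_i\le1/\lfloor m/n\rfloor\le\varepsilon$ for $m=n\lceil1/\varepsilon\rceil$, and rational optimal rates make the tail periodic. The rounding loss is measured against the prefix counts, not against the value carried by the tail.
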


The theorem reduces the search to fixed picking sequences. To approximate their optimal MMS factor, we compare an agent's early picks with weighted averages of the parts of any MMS partition. A bound on each part's weight determines the fraction of later turns the agent needs; these fractions must sum to at most one. This gives a finite linear optimization with integer turn assignments. We show below that its values decrease to $\alpha_n^{\mathrm{ord}}$, with error at most $1/\lfloor m/n\rfloor$ when the optimization covers the first $m$ turns. The computation need not be polynomial-time.

\paragraph{A partition comparison.}
Fix $r\ge1$ and values $z_1\ge\cdots\ge z_m\ge0$. For this proof only, allow an additional total value $T\ge0$ to be divided among the $r$ parts of a partition, and let
\[
 \widetilde{\MMS}^{\,r}(z,T)
 =\max_{(P_1,\ldots,P_r)\in\Pi_r([m])}
   \max_{\substack{u_j\ge0\\\sum_j u_j=T}}
       \min_{j\in[r]}\left(\sum_{k\in P_j}z_k+u_j\right).
\]
This is an upper bound on MMS when the additional goods are indivisible and have total value $T$. It is also the limit obtained by adding increasingly many equal goods. Indeed, replacing the divisible value by $M$ goods of value $T/M$ changes the optimum by at most $T/M$: round down the number assigned to each part and distribute any remaining goods arbitrarily. At $T=0$, the expression is the usual $r$-maximin share of the first $m$ goods. The allocation model itself remains indivisible.

Let $C(k)\ge0$ be a proposed lower bound on the number of selected ranks through position $k$, and set $z_{m+1}=0$. Summation by parts makes $\sum_{k=1}^m C(k)(z_k-z_{k+1})$ a lower bound on their value. The following lemma characterizes when those early ranks and a fraction $\lambda$ of the additional value suffice.
\begin{lemma}\label{lem:partition-weights}
For $a,\lambda\ge0$, the inequality $\sum_{k=1}^m C(k)(z_k-z_{k+1})+\lambda T\ge a\widetilde{\MMS}^{\,r}(z,T)$ holds for every nonincreasing nonnegative $z$ and every $T\ge0$ if and only if, for every $P\in\Pi_r([m])$, there are weights $w_1,\ldots,w_r$ satisfying $\sum_{j=1}^r w_j=a$, $0\le w_j\le\lambda$, and $\sum_{j=1}^r |P_j\cap[k]|w_j\le C(k)$ for every $k\in[m]$.
The same weights must satisfy the inequalities for all $k$.
\end{lemma}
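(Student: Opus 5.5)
The plan is to treat the inequality as a statement about a convex cone of pairs $(z,T)$ and dualize. Write $\Delta_k:=z_k-z_{k+1}\ge0$ for $k\in[m]$ (with $z_{m+1}=0$). Nonincreasing nonnegative vectors $z$ correspond exactly to nonnegative vectors $\Delta$. In these coordinates, each part of a fixed partition has value
\[
\sum_{k\in P_j}z_k=\sum_{k=1}^m |P_j\cap[k]|\,\Delta_k .
\]
So for a fixed partition $P$, the quantity $\max_u\min_j(\cdot)$ is the optimum of a linear program in $(u,t)$: maximize $t$ subject to $t\le\sum_k|P_j\cap[k]|\Delta_k+u_j$ for all $j$, $\sum_j u_j=T$, and $u\ge0$. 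Since $\widetilde{\MMS}^{\,r}(z,T)$ is the maximum over $P$, the claimed inequality holds for all $(z,T)$ if and only if, for every fixed $P$,
\[
\sum_k C(k)\Delta_k+\lambda T\ \ge\ a\,\max_{u}\min_j\Big(\sum_k|P_j\cap[k]|\Delta_k+u_j\Big)\qquad\text{for all }\Delta\ge0,\ T\ge0.
\]
This reduction to a single partition is immediate.

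\emph{Sufficiency.} Given weights $w$ for $P$, bound the minimum by a weighted average. Because $\sum_j w_j=a$ and $w\ge0$,
\[
a\min_j(\cdot)\le\sum_j w_j\Big(\sum_k|P_j\cap[k]|\Delta_k+u_j\Big)\le\sum_k C(k)\Delta_k+\lambda\sum_j u_j.
\]
The second inequality uses the constraint $\sum_j|P_j\cap[k]|w_j\le C(k)$ together with $\Delta_k\ge0$, and uses $w_j\le\lambda$ together with $u_j\ge0$. Since $\sum_j u_j=T$, the right-hand side is $\sum_k C(k)\Delta_k+\lambda T$. This works for every $u$, so it bounds the maximum.

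\emph{Necessity.} This is the main step, and I would use LP duality (Farkas) on a single LP. Fix $P$. The inequality for all $\Delta,T\ge0$ is equivalent to the following LP in variables $(\Delta,T,u,t)$ being bounded with optimum $\le0$ (it is homogeneous):
\[
\max\; a t-\sum_k C(k)\Delta_k-\lambda T
\]
subject to $t-u_j-\sum_k|P_j\cap[k]|\Delta_k\le0$ for each $j$, $\sum_j u_j-T=0$, and $\Delta,u,T\ge0$, with $t$ free. The dual has multipliers $w_j\ge0$ for the $j$-constraints and a free multiplier $\mu$ for the equality. The dual constraints read as follows:
\begin{itemize}
\item from the free variable $t$: $\sum_j w_j=a$;
\item from $u_j$: $-w_j+\mu\ge0$;
\item from $T$: $-\mu\ge-\lambda$;
\item from $\Delta_k$: $-\sum_j|P_j\cap[k]|w_j\ge-C(k)$.
\end{itemize}
Hence $w_j\le\mu\le\lambda$, together with the stated constraints. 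The primal is feasible (take zero) and bounded (optimum $0$) exactly when the dual is feasible, by strong duality. This yields the weights.

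The only care needed is getting the signs and the free/equality variables right, and confirming that one common $w$ serves all $k$. That holds automatically, because all $\Delta_k$ enter a single LP. If one prefers to avoid the LP formalism, the same conclusion follows from a minimax or separating-hyperplane argument over the polytope of admissible $w$.
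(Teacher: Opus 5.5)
Your proposal is correct and follows essentially the paper's route: fix a partition, pass to the differences $z_k-z_{k+1}$, and obtain the weights as dual multipliers of a linear program for that partition. The paper instead normalizes every part to value at least one, minimizes the left-hand side, scales the dual vector down to total $a$, and treats zero MMS separately; your homogeneous formulation with a free variable $t$ gives $\sum_j w_j=a$ directly and covers the zero case automatically.
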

\begin{proof}
Fix $P$ and put $d_k=z_k-z_{k+1}\ge0$. Part $P_j$ has value $\sum_k|P_j\cap[k]|d_k$ before receiving additional value. The minimum left-hand side of the value inequality in Lemma~\ref{lem:partition-weights} over choices making every part worth at least one is
\[
 \min_{d,u\ge0}\left\{
 \sum_{k=1}^m C(k)d_k+\lambda\sum_{j=1}^r u_j:
 \sum_{k=1}^m |P_j\cap[k]|d_k+u_j\ge1\quad\text{for }j\in[r] 
 \right\}.
\]
Its linear-programming dual maximizes $\sum_j w_j$ subject to the last two weight conditions in Lemma~\ref{lem:partition-weights}. Strong duality applies: the primal is feasible and its objective is nonnegative. Thus its value is at least $a$ exactly when the dual has a feasible vector of sum at least $a$, which can be scaled down to sum $a$. Scaling the part values proves the assertion for this partition. Taking all partitions proves the lemma; when MMS is zero, the value inequality is automatic.
\end{proof}

The upper bound on each weight expresses the needed fraction of later value. The other inequalities require the agent's number of early picks to be at least the weighted count of the same parts at every rank. This is why checking only the total number of early picks would not suffice.

\paragraph{The finite optimization.}
For $m\ge n$, define $\beta_{n,m}$ as the maximum $a\in[0,1]$ over a finite picking sequence $\sigma\in[n]^m$ and numbers $\lambda_i\ge0$ with $\sum_i\lambda_i\le1$, subject to the following condition: for every $i\in[n]$ and $P\in\Pi_n([m])$, there are weights $w_1,\ldots,w_n$ satisfying $\sum_{j=1}^n w_j=a$, $0\le w_j\le\lambda_i$, and $\sum_{j=1}^n |P_j\cap[k]|w_j\le N_i^\sigma(k)$ for every $k\in[m]$.
The weights may depend on $i$ and $P$. Binary variables specify who owns each turn; all remaining constraints are linear. The sum of the $\lambda_i$ is at most one because one good is assigned at each later turn. This optimization uses only $n,m$, not the values of any particular instance.

\begin{theorem}\label{thm:finite-ordinal-formula}
For every $n\ge2$ and $m\ge n$, $\beta_{n,m}$ is attained, $\beta_{n,n}=1/H_n$, and $\beta_{n,m+1}\le\beta_{n,m}$. Moreover, $\max\{0,\beta_{n,m}-1/\lfloor m/n\rfloor\}\le\alpha_n^{\mathrm{ord}}\le\beta_{n,m}$.
This lower bound can be attained by insertion with a sequence that repeats periodically after its first $m$ turns.
Consequently, $\alpha_n^{\mathrm{ord}}=\inf_{m\ge n}\beta_{n,m}$.
\end{theorem}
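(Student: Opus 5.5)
The plan is to prove the five claims in the order attainment, the two inequalities $\beta_{n,m+1}\le\beta_{n,m}$ and $\beta_{n,n}=1/H_n$, the upper bound $\alpha_n^{\mathrm{ord}}\le\beta_{n,m}$, and finally the lower bound $\alpha_n^{\mathrm{ord}}\ge\beta_{n,m}-1/\lfloor m/n\rfloor$. Throughout, the translation between value inequalities and weights is Lemma~\ref{lem:partition-weights}, and the reduction from arbitrary ordinal algorithms to fixed sequences is the first part of Theorem~\ref{thm:ordinal-optimality}.

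\textbf{Attainment and monotonicity.} There are finitely many $\sigma\in[n]^m$. For each fixed $\sigma$, the set of feasible $(a,\lambda,(w^{i,P}))$ is a closed, bounded polyhedron, so $a$ attains its maximum on it. For monotonicity, take an optimal solution for $m+1$ and truncate $\sigma$ to its first $m$ turns. Any $P\in\Pi_n([m])$ can be extended to a partition of $[m+1]$ by placing $m+1$ in $P_1$. The weights for this extension satisfy every constraint with $k\le m$ verbatim, since $N_i^\sigma(k)$ and $|P_j\cap[k]|$ are unchanged for $k\le m$. Hence the truncated tuple is feasible for $m$, and $\beta_{n,m+1}\le\beta_{n,m}$.

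\textbf{The value $\beta_{n,n}=1/H_n$.}
\begin{itemize}
\item[] \emph{Lower bound.} Take $\sigma=(1,\dots,n)$ and $\lambda_i=1/(H_n(n-i+1))$, so $\sum_i\lambda_i=1$. For agent $i$ and a partition $P$, at least $n-i+1$ parts avoid ranks $1,\dots,i-1$. Give weight $\lambda_i$ to $n-i+1$ of these parts and weight $0$ to the rest, so the weights sum to $1/H_n$. For $k<i$ the weighted count is $0=N_i(k)$. For $k\ge i$ it is at most $\lambda_i(n-i+1)=1/H_n\le1=N_i(k)$.
\item[] \emph{Upper bound.} Use the singleton partition. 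If agent $i$'s first turn is at position $p_i$, the prefix constraints force $w_j=0$ for $j<p_i$. The remaining $n-p_i+1$ weights are each at most $\lambda_i$, so $a\le(n-p_i+1)\lambda_i$. If some agent has no turn, then $a=0$. Otherwise the $p_i$ form a permutation of $[n]$, and summing $\lambda_i\ge a/(n-p_i+1)$ over $i$ gives $1\ge aH_n$.
\end{itemize}

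\textbf{Upper bound $\alpha_n^{\mathrm{ord}}\le\beta_{n,m}$.} By Theorem~\ref{thm:ordinal-optimality}, it suffices to consider insertion with a fixed sequence $\sigma$ guaranteeing some factor $\alpha$. Use identical ordered instances: goods with values $z_1\ge\cdots\ge z_m$, followed by $M$ goods of value $T/M$ each, all slightly below $z_m$ or tie-broken consistently. In these instances agent $i$ receives exactly $\sum_{k\le m}N_i^\sigma(k)(z_k-z_{k+1})$ from the first $m$ ranks, plus $(\mu_i^M+o(1))T$ from the tail. Here $\mu_i^M$ is her fraction of the turns $m+1,\dots,m+M$, and $\sum_i\mu_i^M=1$. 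By compactness, pass to a subsequence along which $\mu^M\to\lambda$. The MMS of these instances converges to $\widetilde{\MMS}^{\,n}(z,T)$ with error at most $T/M$. Taking limits gives the value inequality of Lemma~\ref{lem:partition-weights} with $C=N_i^\sigma$, $a=\alpha$, $\lambda_i$ and $r=n$. The lemma then yields the required weights, so $(\sigma_{1..m},\lambda,\alpha)$ is feasible and $\alpha\le\beta_{n,m}$.

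\textbf{Lower bound (main obstacle).} Take an optimal $(\sigma,\lambda,a)$ for $m$ and extend $\sigma$ after position $m$ by a periodic tail. For any instance, let $z$ be agent $i$'s top $m$ ranked values and let $T$ be the sum of her remaining values. Then $\MMS_i(G)\le\widetilde{\MMS}^{\,n}(z,T)$, because replacing the tail goods by divisible value can only raise the partition benchmark. The rank-value bound, combined with Lemma~\ref{lem:prefix-count} applied to the tail, gives agent $i$ at least $\sum_{k\le m}N_i(k)(z_k-z_{k+1})+\lambda_i'T$. This requires her tail turns to form a $\lambda_i'$ fraction of every tail prefix, counted from position $m+1$, where ranks $m+1,\dots$ are the relevant ones. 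The difficulty is meeting these prefix fractions exactly after rounding. I would schedule tail turns by deadlines as in Lemma~\ref{lem:deadline-feasibility}, using the reduced fractions $\lambda_i'=\lambda_i(1-1/q)$ with $q=\lfloor m/n\rfloor$, and make the schedule periodic when the $\lambda_i$ are rational (which may be assumed in the MILP). The rounding loss on the tail term is then absorbed by scaling $a$ to $a(1-1/q)\ge a-1/q$. This scaling is valid by Lemma~\ref{lem:partition-weights}, since scaled-down weights still satisfy the prefix constraints. The slack for the deadline rounding should come from the early segment: some agent holds at least $q$ of the first $m$ turns, and that surplus is the resource that pays for one turn of rounding per agent.

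I expect the main obstacle to be justifying exactly this $1/q$ bookkeeping. Combining the upper and lower bounds then gives $\alpha_n^{\mathrm{ord}}=\inf_m\beta_{n,m}$.
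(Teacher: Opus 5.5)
\paragraph{Agreement with the paper.} Your arguments for attainment, monotonicity, $\beta_{n,n}=1/H_n$, and $\alpha_n^{\mathrm{ord}}\le\beta_{n,m}$ are essentially the paper's. Taking one common subsequential limit of the tail fractions works as well as the paper's per-agent $\liminf$, since those fractions do not depend on $z$ or $T$. One caveat: in the paper, the reduction to fixed sequences that you cite from Theorem~\ref{thm:ordinal-optimality} is proved inside this very proof, so you should include it rather than cite it. On instances where every agent strictly prefers earlier arrivals, the upgrade injections satisfy $\phi(k)\le k$ and hence $\phi(k)=k$. An ordinal algorithm therefore follows a fixed sequence there, and the rank-value bound transfers its guarantee to insertion.

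\paragraph{The gap in the lower bound.} The genuine gap is the lower bound $\alpha_n^{\mathrm{ord}}\ge\beta_{n,m}-1/\lfloor m/n\rfloor$, and the bookkeeping you sketch cannot be completed as stated. There are two problems.

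\emph{First}, Lemma~\ref{lem:prefix-count} cannot be applied to the tail in isolation. It would need $|P\cap[s]|\ge\lambda_i's$ for all $s\ge1$, so at $s=1$ every agent with $\lambda_i'>0$ would need the first tail turn. Any deadline schedule leaves each agent up to one turn short. That deficit must be charged to her early turns by summation by parts across rank $m$, where the term $N_i^\sigma(m)\,x_{m+1}^i(G)$ appears.

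\emph{Second}, absorbing one missing turn through the factor $1-1/q$ requires $c_i/q\ge1$ for \emph{every} agent, where $c_i:=N_i^\sigma(m)$. Each agent pays for her own rounding, so a surplus held by ``some agent'' does not help. The constraints only force $c_i\ge a\lfloor m/n\rfloor$, which can be far below $q$. For $n=2$, the first picker in $(1,2,2,1,2,2,\dots)$ holds only about $m/3$ of the first $m$ turns.

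\paragraph{The missing idea.} First derive $c_i\ge a\lfloor m/n\rfloor$ from feasibility, using Lemma~\ref{lem:partition-weights} with $z_1=\dots=z_m=1$ and $T=0$. For $a>0$ this gives $c:=\min_ic_i\ge1$; the case $c=1$ is trivial, since then $a\le1/\lfloor m/n\rfloor$. Then scale by $\theta=1-1/c$ rather than $1-1/q$, and use tail deadlines $\lceil j/(\theta\lambda_i)\rceil$. Agent $i$ then holds at least $c_i+\lfloor\theta\lambda_i s\rfloor\ge\theta(c_i+\lambda_i s)$ of the first $m+s$ turns, and trivially at least $\theta N_i^\sigma(k)$ of the first $k\le m$. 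Summation by parts over all ranks gives $\theta$ times the left side of the value inequality, so the loss is $a/c\le1/\lfloor m/n\rfloor$.

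Alternatively, you can keep $\theta=1$ and accept a loss of at most $x_{m+1}^i(G)$. This is paid for by $x_{m+1}^i(G)\le\MMS_i(G)/\lfloor(m+1)/n\rfloor$, because the top $m+1$ goods already yield an $n$-partition with $\lfloor(m+1)/n\rfloor$ of them in each part. Periodicity then follows as you suggest: add a dummy rate $1-\sum_i\theta\lambda_i$ and use a common denominator.
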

\begin{proof}
For a fixed $\sigma$, the feasible set is a compact polytope: $a,\lambda_i,w_j\in[0,1]$. There are finitely many such sequences, so the maximum is attained. Restricting a feasible length-$(m+1)$ sequence to $m$ turns preserves feasibility: extend each partition of $[m]$ by putting rank $m+1$ in any part and use its weights. This proves monotonicity.

We first compare ordinal algorithms with fixed sequences. On any instance in which all agents strictly rank goods in arrival order, combine the upgrade injections after arrival $t$ into one injection $\phi:[t-1]\to[t]$. Strict ranking implies $\phi(k)\le k$. Inductively, $\phi(k)=k$ for every $k<t$, so no old good changes owners. Every ordinal algorithm sees the same ranking history on these instances and therefore follows a fixed infinite assignment sequence. Its guarantee must hold for every strictly decreasing positive value list; continuity extends it to all nonincreasing nonnegative lists. The rank-value comparison in the preliminaries then shows that insertion with the same sequence guarantees at least as much on arbitrary instances.

It therefore suffices to study a fixed sequence with guarantee $a$. We show that $a\le\beta_{n,m}$ for every such sequence, which proves the upper inequality after taking the supremum.

Set $\lambda_i=\liminf_{t\to\infty}N_i^\sigma(t)/t$. These rates sum to at most one. Fix prefix values $z_1\ge\cdots\ge z_m>0$, and append $M$ equal goods of total value $T$. For large $M$, they follow the prefix in the ranked order. Along a subsequence of values of $M$ for which agent $i$'s fraction of the appended turns tends to $\lambda_i$, her selected value tends to $\sum_{k=1}^m N_i^\sigma(k)(z_k-z_{k+1})+\lambda_iT$.
The MMS tends to $\widetilde{\MMS}^{\,n}(z,T)$. Hence the value inequality in Lemma~\ref{lem:partition-weights} holds; zero prefix values follow by continuity. Lemma~\ref{lem:partition-weights} then provides the weights required in the definition of $\beta_{n,m}$. Thus $a\le\beta_{n,m}$, and hence $\alpha_n^{\mathrm{ord}}\le\beta_{n,m}$.

For the constructive direction, let a maximizing solution have value $a>0$, and write $c_i=N_i^\sigma(m)$ and $c=\min_i c_i$. Taking all $m$ prefix values equal to one and $T=0$ in Lemma~\ref{lem:partition-weights} gives us $c_i\ge a\lfloor m/n\rfloor$.
In particular $c\ge1$. Put $\theta=1-1/c$. Append a fixed sequence in which agent $i$ receives at least $\lfloor\theta\lambda_i s\rfloor$ of the first $s$ additional turns. Such a sequence is obtained by giving her $j$th additional request the deadline $\lceil j/(\theta\lambda_i)\rceil$ and sorting all requests by deadline, omitting agents with $\lambda_i=0$ and filling unused turns arbitrarily. By time $s$, the number of requests due is at most $\sum_i\lfloor\theta\lambda_i s\rfloor\le s$.

For every $s\ge0$, the total count of agent $i$ is at least $c_i+\lfloor\theta\lambda_i s\rfloor
 \ge \theta c_i+\theta\lambda_i s$,
where for $s>0$ we use $(1-\theta)c_i\ge1$, and $s=0$ is immediate. Before turn $m$ her counts are at least $\theta N_i^\sigma(k)$. Summation by parts therefore bounds her selected value below by $\theta$ times the left-hand side of the value inequality in Lemma~\ref{lem:partition-weights}, with $T$ the total value after rank $m$. That benchmark is at least her MMS. Horizons below $m$ are covered by padding with zeros. Insertion consequently guarantees $\theta a$, losing at most $a-\theta a=a/c\le1/\lfloor m/n\rfloor$.
A maximizing solution can be chosen with rational rates. Include an extra rate equal to $1-\sum_i\theta\lambda_i$, assigning its requests to any agent. With a common denominator, the resulting deadline order repeats periodically. The case $\theta=0$ is immediate. This proves the bounds in Theorem~\ref{thm:finite-ordinal-formula} and the asserted limit.

It remains to verify $\beta_{n,n}=1/H_n$. A positive feasible factor requires each agent to appear once in the first $n$ turns, by the bound $c_i\ge a\lfloor m/n\rfloor$ above with $m=n$. Relabel them in that order. For agent $i$, take a partition whose first $i-1$ ranks are singleton parts. The zero count before her first turn forces the weights of these parts to vanish; hence $a\le(n-i+1)\lambda_i$. Summing gives $aH_n\le1$. Conversely, set $a=1/H_n$ and $\lambda_i=a/(n-i+1)$. For any partition, at least $n-i+1$ parts avoid the first $i-1$ ranks. Put weight $a/(n-i+1)$ on that many parts and zero on the others. Before rank $i$ the weighted count is zero; afterwards it is at most $a\le1$. These weights prove feasibility.
\end{proof}

Taking $m=n\lceil1/\varepsilon\rceil$ proves the computational assertion of Theorem~\ref{thm:ordinal-optimality}. The formula does not assume that an optimal infinite sequence is periodic, and it does not identify the optimum for algorithms using cardinal values. Its precise implication is that adapting to the observed rankings cannot improve the MMS guarantee, and a fixed periodic continuation suffices for any desired additive approximation.

\subsection{Proof of Theorem~\ref{thm:finite-exact} (Exact Factors for \texorpdfstring{$n\le33$}{n <= 33})}
\label{app:exact-factors}
Theorem~\ref{thm:tight-two-agent-mms} treats two agents. We first give a short proof for three agents. For the other cases, we use the same two arguments throughout: weighted MMS partitions determine sufficient picking turns, and, for the upper bounds, later goods worth less than all earlier goods prevent any earlier good from being replaced, so the earlier assignments limit the attainable MMS factor. Tables~\ref{tab:exact-factors}--\ref{tab:two-prefix-upper} specify all parameters. The remaining arithmetic is finite and is reproduced by the accompanying program \texttt{verify\_exact\_factors.py}, listed in Appendix~\ref{app:program}; the reductions below justify the guarantees at every horizon.

\begin{table}[t]
\caption{Optimal MMS factors under item-level upgrades. In every entry, $\alpha_n=\alpha_n^{\mathrm{ord}}$: insertion with a fixed sequence attains the factor, and algorithms using cardinal values cannot improve it.}
\label{tab:exact-factors}
\centering\small
\setlength{\tabcolsep}{9pt}\renewcommand{\arraystretch}{1.65}
\begin{tabular}{@{}cr cr cr cr@{}}
\toprule
$n$ & $\alpha_n$ & $n$ & $\alpha_n$ & $n$ & $\alpha_n$ & $n$ & $\alpha_n$ \\
\midrule
2 & $\dfrac{2}{3}$ & 10 & $\dfrac{439}{1286}$ & 18 & $\dfrac{102255661}{357395274}$ & 26 & $\dfrac{735}{2833}$ \\[4pt]
3 & $\dfrac{1}{2}$ & 11 & $\dfrac{540}{1631}$ & 19 & $\dfrac{4095}{14528}$ & 27 & $\dfrac{493920}{1922069}$ \\[4pt]
4 & $\dfrac{10}{21}$ & 12 & $\dfrac{20159}{62560}$ & 20 & $\dfrac{16380}{58931}$ & 28 & $\dfrac{493920}{1939709}$ \\[4pt]
5 & $\dfrac{24}{55}$ & 13 & $\dfrac{72}{229}$ & 21 & $\dfrac{16380}{59711}$ & 29 & $\dfrac{1024380}{4058239}$ \\[4pt]
6 & $\dfrac{15}{37}$ & 14 & $\dfrac{504}{1639}$ & 22 & $\dfrac{7371}{27205}$ & 30 & $\dfrac{204876}{818477}$ \\[4pt]
7 & $\dfrac{42}{109}$ & 15 & $\dfrac{5171009}{17158995}$ & 23 & $\dfrac{27440}{102469}$ & 31 & $\dfrac{493920}{1989137}$ \\[4pt]
8 & $\dfrac{18}{49}$ & 16 & $\dfrac{812054744}{2745351561}$ & 24 & $\dfrac{82320}{310837}$ & 32 & $\dfrac{123480}{501143}$ \\[4pt]
9 & $\dfrac{6}{17}$ & 17 & $\dfrac{145645483}{500957470}$ & 25 & $\dfrac{493920}{1884779}$ & 33 & $\dfrac{8350020}{34141547}$ \\[4pt]
\bottomrule
\end{tabular}
\end{table}

\subsubsection{Proof for Three Agents}
\label{app:three-agent-tight}
\begin{proof}[Proof of Theorem~\ref{thm:finite-exact} for $n=3$]
For the lower bound, use the periodic sequence $(1,2,3,3,2,3)^\infty$. Consider the goods remaining before agent $i$'s first turn, and put $r=4-i$. MMS deletion bounds her original MMS by the $r$-maximin share of these goods, and hence by their total value divided by $r$. The first agent selects relative ranks $1,7,13,\ldots$, obtaining at least one sixth of their total value. The second selects relative ranks $1,4,7,\ldots$, obtaining at least one third. The third selects relative ranks $1,2,4$ in each successive block of six. For nonincreasing values in a block, $z_1+z_2+z_4\ge z_3+z_5+z_6$.
Padding a final block with zeros proves a one-half fraction for her. The resulting MMS factors are at least $1/2,2/3,1/2$, respectively. Insertion gives the required recourse.

For the upper bound, suppose an algorithm guarantees $a>1/2$, and choose $1/(2a)<q<1$. Present two goods worth one to everyone, and let $S$ be their set of owners after the second arrival. Present four further goods, each worth $q$ to agents in $S$ and $1/4$ to agents outside $S$. No revealed value changes. Since every later good is worth less than one, each agent's number of unit-valued goods cannot decrease. Their total is two, so these numbers remain fixed.

An agent outside $S$ has MMS exactly one: put the unit goods in separate parts and all four later goods in the third. Every three-partition has a part containing no unit good, proving the upper bound. An agent in $S$ has MMS exactly $2q$, attained by two unit-$q$ pairs and a $q$-$q$ pair. A partition with minimum above $2q>1$ would require two goods in each part, and one of these pairs would consist of two $q$-valued goods, a contradiction.

If $|S|=1$, each of the two other agents needs at least three later goods to obtain more than half of MMS, but there are only four. If $|S|=2$, each unit-good owner needs at least one later good because $2aq>1$, and the remaining agent needs at least three. This requires five later goods. Both cases are impossible. The construction even allows the algorithm to know in advance that exactly six goods will arrive.
\end{proof}

\subsubsection{Lower Bounds for \texorpdfstring{$4\le n\le33$}{4 <= n <= 33}: Picking Turns from Weighted Partitions}
An agent whose first turn is $i$ needs only a guarantee for $r=n-i+1$ parts of the goods remaining then, by MMS deletion. If she picks at a relative rank $k$, her chosen good is worth at least the $k$th largest value among those remaining goods. Thus it suffices to give relative picking positions that guarantee an $a$ fraction of this $r$-maximin share.

The following lemma permits some later turns to be delayed. Such a delay is useful when the numbers of goods in an MMS partition cannot all be equal.
\begin{lemma}\label{lem:delayed-turns}
Let $r,h,d,c$ be positive integers with $1\le d<r$, and put $L=rh-d$. Suppose $\lambda=(ah-c)/d$, $a/r\le\lambda\le a/d$, and $c\le\lambda L$.
A set of relative picking positions gives $a$-MMS for $r$ parts if its counts satisfy
\begin{equation}\label{eq:delayed-counts}
 C(k)\ge
 \begin{cases}
 \min\{\lambda k,c\},&0\le k\le L,\\
 c+\lambda(k-L),&k\ge L.
 \end{cases}
\end{equation}
One choice of deadlines enforcing these counts consists of $1+\lfloor j/\lambda\rfloor$ for $0\le j<c$ and $L+1+\lfloor j/\lambda\rfloor$ for $j\ge0$.
\end{lemma}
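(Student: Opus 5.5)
The plan is to apply Lemma~\ref{lem:partition-weights} with $T=0$, where $C(k)$ is the count bound on the right-hand side of \eqref{eq:delayed-counts}. Fix a horizon $m$ and pad with zero values if necessary. By MMS deletion, the agent's target is $\MMS^r$ of the goods remaining at her first turn. Her pick at relative rank $k$ is worth at least $z_k$, so summation by parts shows that her value is at least $\sum_k C(k)(z_k-z_{k+1})$. The statement then reduces to the following weight problem. For every $r$-partition $P$ of $[m]$, we need weights $w_j\in[0,\lambda]$ with $\sum_j w_j=a$ whose weighted count $f(k):=\sum_j|P_j\cap[k]|w_j$ satisfies $f(k)\le C(k)$ for all $k$. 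The lemma does not impose $\lambda\ge w_j$ through a later-value term, since $T=0$. The cap $w_j\le\lambda$ is nevertheless what makes the slope argument below work.

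The first step observes that the constraints collapse to a single inequality at $k=L$. Each rank lies in exactly one part and every weight is at most $\lambda$. Hence $f$ is nondecreasing, and $f(k)-f(k-1)\le\lambda$. This gives $f(k)\le\lambda k$ for all $k$. If $f(L)\le c$, then $f(k)\le\min\{\lambda k,c\}$ for all $k\le L$ by monotonicity, and $f(k)\le c+\lambda(k-L)$ for all $k\ge L$ by the slope bound. The hypothesis $c\le\lambda L$ makes the two branches of \eqref{eq:delayed-counts} agree at $L$. The hypotheses $a/r\le\lambda\le a/d$ are what make the weights below feasible: a total of $a$ can be split into $r$ weights each at most $\lambda$, with at least $d$ of them equal to $\lambda$. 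So it remains, for each partition, to choose weights with $f(L)\le c=ah-\lambda d$.

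The second step, which I expect to be the main obstacle, is choosing these weights. The counts $s_j:=|P_j\cap[L]|$ are nonnegative integers summing to $L=rh-d$. The quantity $f(L)=\sum_j s_jw_j$ is linear in $w$. We therefore minimize it over the polytope $\{0\le w_j\le\lambda,\ \sum_j w_j=a\}$. The minimum is attained by water-filling: put weight $\lambda$ on the parts with the smallest $s_j$ until the budget $a$ is exhausted. Since $\lambda\le a/d$, at least $d$ parts receive the full weight $\lambda$. The adversary maximizes this minimum over integer vectors $s$ with sum $L$.

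To bound it, I would use an exchange argument. Moving a unit of count from a larger part to a smaller one can only increase the water-filling minimum. Hence the worst case is the most balanced vector, and the total $rh-d$ forces exactly $d$ parts of size $h-1$ and $r-d$ parts of size $h$. On this vector, the $d$ small parts take weight $\lambda$, and the remaining weight $a-\lambda d$ goes to parts of size $h$. This gives
\[
f(L)\le\lambda d(h-1)+(a-\lambda d)h=ah-\lambda d=c,
\]
exactly as required. This also explains the defining relation $\lambda=(ah-c)/d$.

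I would check the exchange step carefully. It holds because the water-filling value is a sum of the smallest order statistics with concave coefficients. If it resists, I would fall back on LP duality and verify directly that the balanced vector is extremal.

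Finally, for the deadline claim, I would count requests exactly as in Lemma~\ref{lem:deadline-feasibility} and Theorem~\ref{thm:deadline-serial-insertion-mms}. The deadlines $1+\lfloor j/\lambda\rfloor$ for $j<c$ ensure that at least $\min\{\lceil\lambda k\rceil,c\}$ of these requests are due by time $k$. The deadlines $L+1+\lfloor j/\lambda\rfloor$ ensure that at least $\lceil\lambda(k-L)\rceil$ of these requests are due by time $k\ge L$. Since each request is met by its deadline, $C(k)$ is at least these counts, which gives \eqref{eq:delayed-counts}.
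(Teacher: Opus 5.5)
Your proposal is correct and follows essentially the same route as the paper. You check the weighted count only at rank $L$, since monotonicity and the per-rank increment of at most $\lambda$ handle every other $k$. You then show that balancing the counts through $L$ can only raise the minimum of the linear form over the capped simplex, so the worst case has $d$ parts of size $h-1$ and $r-d$ parts of size $h$, giving weighted count $ah-\lambda d=c$. Finally, you count the requests due by relative position $k$ as $\min\{c,\lceil\lambda k\rceil\}+\lceil\lambda[k-L]_+\rceil$, as the paper does.

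The differences are cosmetic. You justify the exchange step through the order-statistic (Schur-concave) form of the water-filling value; the precise condition there is that the coefficients are nonincreasing, not concave. The paper instead writes the balanced vector as a convex combination of the original vector and the coordinate-swapped vector. You also scale the weights to sum to $a$ with cap $\lambda$, where the paper uses probability vectors capped at $\lambda/a$.
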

\begin{proof}
If fewer than $L$ goods are present, pad the value list with zeros; neither MMS nor the selected value changes. Fix an $r$-part MMS partition and let $b_j$ count the first $L$ goods in its $j$th part. Among probability vectors with coordinates at most $\lambda/a$, minimize $\sum_j w_jb_j$. This minimum is a symmetric concave function of $b$: it is the minimum of linear functions and is unchanged by permutations. Replacing two integer counts differing by at least two by counts one unit closer together cannot decrease the minimum. For completeness, the new vector is a convex combination of the old vector and the vector with these two coordinates exchanged; concavity and symmetry give the comparison.

Repeating gives $d$ counts of $h-1$ and $r-d$ counts of $h$. Put weight $\lambda/a$ on each of the $d$ shorter parts. The conditions $d\lambda/a\le1\le r\lambda/a$ allow the remaining weight to be distributed among the longer parts with the same upper bound. The weighted count is $h-d\lambda/a=c/a$.
Thus, the original partition admits capped weights of total one and weighted count through $L$ at most $c/a$. Through rank $k\le L$ the weighted count is at most both $\lambda k/a$ and $c/a$; each later good increases it by at most $\lambda/a$. Multiplying the weights by $a$ and applying summation by parts, as in Lemma~\ref{lem:partition-weights}, proves the MMS guarantee. To verify the deadlines, the number of requests due by relative position $k$ is $\min\{c,\lceil\lambda k\rceil\}
 +\lceil\lambda[k-L]_+\rceil$, which is at least the right-hand side of~\eqref{eq:delayed-counts}.
\end{proof}

For sixteen agents, we improve this count by using the locations of goods in a largest part of the MMS partition. The next refinement uses the same weighted-partition argument, rather than a different allocation procedure.
\begin{lemma}\label{lem:refined-early-turns}
Let $r,h,d$ be integers with $h\ge2$ and $1\le d<r$, and let $L=rh-d$. Suppose $a/r\le\lambda\le a/(r-1)$ and $c=ah-d\lambda>0$. For $0\le k\le L$, let
\[
 R(k)=\min\left\{
 \max\left\{\frac{k}{r+d},h-\frac{(d+1)(L-k)}{r+d}\right\},
 \max\left\{0,h-\frac{L-k}{r-d}\right\}
 \right\}.
\]
The relative count requirement
\begin{equation}\label{eq:refined-early-counts}
 C(k)\ge
 \begin{cases}
 \lambda k-(r\lambda-a)R(k),&0\le k\le L,\\
 c+\lambda(k-L),&k\ge L
 \end{cases}
\end{equation}
suffices for $a$-MMS for $r$ parts. Its right-hand side is nondecreasing and no larger than the requirement in~\eqref{eq:delayed-counts}.
\end{lemma}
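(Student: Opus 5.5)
The plan is to mirror the proof of Lemma~\ref{lem:delayed-turns}. Fix an $r$-part MMS partition of the (zero-padded) remaining goods, and produce capped weights $w_j\in[0,\lambda/a]$ of total one. These weights should have weighted count through rank $k$ at most $C(k)/a$ for every $k$; summation by parts, as in Lemma~\ref{lem:partition-weights}, then gives the $a$-MMS guarantee. Beyond rank $L$, each good raises the weighted count by at most $\lambda/a$, so it suffices to control ranks $k\le L$. There the target weighted count is $\lambda k/a-(r\lambda/a-1)R(k)$, and at $k=L$ it must be at most $c/a=h-d\lambda/a$.

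The weight choice will be as follows. Put the full cap $\lambda/a$ on every part except one designated part $j^*$. That part receives the leftover weight $1-(r-1)\lambda/a\in[0,\lambda/a]$, which is legitimate because $a/r\le\lambda\le a/(r-1)$. With $b_j(k)=|P_j\cap[k]|$, the weighted count at rank $k$ equals
\[
\frac{\lambda}{a}k-\Bigl(\frac{r\lambda}{a}-1\Bigr)b_{j^*}(k),
\]
using $\sum_j b_j(k)=k$. So I must choose a single part $j^*$ with $b_{j^*}(k)\ge R(k)$ for all $k\le L$ simultaneously. I will take $j^*$ to be a part with the most goods among ranks $1,\dots,L$, as the text preceding the lemma suggests.

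This reduces the lemma to a combinatorial claim: the largest part (with respect to $[L]$) contains at least $R(k)$ goods among the top $k$ ranks, for each $k$. I will prove the two arguments of the outer minimum as two separate lower bounds, and then argue that the smaller of them always suffices.

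\textbf{First bound.} This is the maximum of two terms.
\begin{itemize}
\item The term $k/(r+d)$ follows from averaging, provided the largest part is also (nearly) largest within every prefix. If it is not, I expect that one should instead switch to a part that is largest in the prefix. Since the weights do not depend on $k$, I would first try to show that it suffices to choose the largest part; otherwise I would modify the weights.
\item The term $h-(d+1)(L-k)/(r+d)$ comes from counting the goods of the largest part in the tail $[L]\setminus[k]$. The largest part has at least $h$ goods within $[L]$. Moreover, the tail cannot be dominated by that part more than proportionally, since the other parts also need value at least the MMS.
\end{itemize}

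\textbf{Second bound.} The term $h-(L-k)/(r-d)$ should follow from pigeonhole over the $r-d$ parts of size at least $h$ within $[L]$: the tail of length $L-k$ cannot remove more than a $1/(r-d)$ share of goods from the fullest one.

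\textbf{Main obstacle.} The hardest step is making both bounds hold for one fixed part $j^*$ simultaneously. It may be necessary to pass, as in Lemma~\ref{lem:delayed-turns}, to a concave symmetric minimization over count profiles, replacing "the largest part" by the worst-case profile. I would first attempt that smoothing.

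Finally, monotonicity of the right-hand side of \eqref{eq:refined-early-counts} and its comparison with \eqref{eq:delayed-counts} are direct piecewise-linear checks. The slopes satisfy $\lambda-(r\lambda-a)R'\ge0$ because $R'\le1/(r-d)$ and $r\lambda-a\le\lambda$-type bounds hold. At $k=L$ we have $R=h$, which gives the value $c$.
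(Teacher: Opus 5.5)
There is a genuine gap, at exactly the step you flag as the main obstacle, and your primary plan cannot be completed. Your weights are the point-mass case $p=e_{j^*}$ of the family $w_j=(\lambda-(r\lambda-a)p_j)/a$, with $p$ a probability vector. That family is the right one. With a point mass, however, they reduce the lemma (when $r\lambda>a$) to $N_{j^*}(k)\ge R(k)$ for all $k\le L$, for one fixed part, and no part need satisfy this.

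Take $r=3$, $d=1$, $h=2$, so $L=5$ and $R(k)=\max\{0,(k-1)/2\}$. Consider the partition $\{1,5\},\{3,4\},\{2\}$, which is an MMS partition for the values $1,1,\tfrac12,\tfrac12,0$. The two largest parts fail at different ranks: $\{1,5\}$ has count $1<3/2=R(4)$ through rank $4$, and $\{3,4\}$ has count $0<1/2=R(2)$ through rank $2$. The part $\{2\}$ also fails, since it has count $1<h=R(5)$. The same pattern works for any even $h$: put one part on the first and last $h/2$ ranks of $[L]$, one on ranks $3h/2,\dots,5h/2-1$, and the third on the rest. Both largest parts then have deficits of $h/4$, so rounding $C(k)$ up to an integer does not rescue your choice either.

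Your heuristics point to the problem. The pigeonhole bound $h-(L-k)/(r-d)$ only says that \emph{some} count-$h$ part loses few goods in the tail, and which part that is depends on $k$. The bound $k/(r+d)$ is likewise an average. The term $h-(d+1)(L-k)/(r+d)$ does not come from other parts needing value, since the partition is arbitrary; it comes from a cap on the coordinates of $p$. The smoothing fallback from the delayed-turns lemma controls only the counts through $L$, not where goods sit inside $[L]$, so it does not help here.

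The missing idea is a fixed mixture $p$, chosen by a case split on the counts $b_j$ through $L$. This case split is also why $R$ is a minimum of two expressions.
If $\max_jb_j=h$, at least $r-d$ parts have count $h$. The uniform vector on them has $\sum_jp_jb_j=h$ and coordinates at most $1/(r-d)$. Deleting the last $L-k$ ranks therefore gives $\sum_jp_jN_j(k)\ge\max\{0,h-(L-k)/(r-d)\}$.
If some part has count $B\ge h+1$, put weight $\theta=(B-h)/(B-L/r)\ge r/(r+d)$ on the uniform vector over all $r$ parts and the rest on that part. Then $\sum_jp_jb_j=h$, and the uniform component contributes at least $\theta k/r\ge k/(r+d)$. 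Every coordinate is at most $(d+1)/(r+d)$, which gives $h-(d+1)(L-k)/(r+d)$.
In both cases $\sum_jp_jN_j(k)\ge R(k)$ for $k\le L$ and $\sum_jp_jN_j(k)\ge h$ for $k\ge L$, which yields both branches of the requirement.

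Separately, your monotonicity check uses $R'\le1/(r-d)$, which is false when $r-d\ge3$. For example, with $r=11$ and $d=1$, the slope $(d+1)/(r+d)=1/6$ is active near $L$. Use instead that $R$ is $1$-Lipschitz together with $0\le r\lambda-a\le\lambda$.
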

\begin{proof}
Again pad shorter value lists with zeros. Fix a partition and write $b_j$ for its counts through $L$ and $N_j(k)$ for its counts through $k$. Some $b_j$ is at least $h$, since $\sum_jb_j=rh-d$. We construct a probability vector $p$ with $\sum_jp_jb_j=h$ and $\sum_jp_jN_j(k)\ge R(k)$ for $0\le k\le L$.
If $\max_jb_j=h$, at least $r-d$ parts have count $h$. Put equal weight on all these parts. Deleting the last $L-k$ goods reduces the weighted count by at most $(L-k)/(r-d)$, giving the second expression in the definition of $R(k)$.

Otherwise choose a part with count $B\ge h+1$. Mix the uniform probability vector with all weight on that part, assigning weight $\theta=(B-h)/(B-L/r)\ge r/(r+d)$ to the uniform vector. The weighted count through $L$ is $h$, and each coordinate of $p$ is at most $(d+1)/(r+d)$. The uniform component gives a count at least $k/(r+d)$; deleting the last $L-k$ goods gives a count at least $h-(d+1)(L-k)/(r+d)$. These are the two terms in the first expression for $R(k)$.

Now set $w_j=(\lambda-(r\lambda-a)p_j)/a$. The weights are nonnegative, sum to one, and satisfy $a\sum_jw_jN_j(k)=\lambda k-(r\lambda-a)\sum_jp_jN_j(k)$.
This is at most the right-hand side of~\eqref{eq:refined-early-counts}; after $L$, use $\sum_jp_jN_j(k)\ge h$. Summation by parts proves the guarantee. Finally, $R$ is nondecreasing and $1$-Lipschitz, $R(0)=0$, and $R(L)=h$. Since $0\le r\lambda-a\le\lambda$, the displayed requirement is nondecreasing, begins at zero, ends at $c$, and is at most $\lambda k$ before $L$.
\end{proof}

\paragraph{Checking every deadline.}
All agents except the one with $n-i+1=r$, for the value of $r$ in Table~\ref{tab:special-turns}, use the ordinary deadlines in Appendix~\ref{app:deadline-analysis}. For that agent, add $i-1=n-r$ to the relative deadlines in Lemma~\ref{lem:delayed-turns}. The number of requests due by time $t$ is
\[
 D(t)=\sum_{s\ne r}\left\lceil\frac{a[t-n+s]_+}{s}\right\rceil
 +\min\{c,\lceil\lambda[t-n+r]_+\rceil\}
 +\lceil\lambda[t-n+r-L]_+\rceil.
\]
For the refinement, replace the two last terms, before relative rank $L$, by the ceiling of the right-hand side of~\eqref{eq:refined-early-counts}. These integer counts increase by at most one per relative position. Sort all requests by deadline. The resulting sequence meets every request if $D(t)\le t$ for every $t$.

It suffices to check finitely many values of $t$. Let $\rho=a(H_n-1/r)+\lambda$ and $B=n+c-\lambda(n-r+L)-\sum_{s\ne r}a(n-s)/s$.
For $t\ge n-r+1+L$, the count $D(t)$ is strictly less than $\rho t+B$. Thus, when $\rho<1$, it is enough to check $1\le t\le\max\{n,\,n-r+1+L,\,\lceil[B]_+/(1-\rho)\rceil\}$.
The same bound applies to the refinement because its additional delay occurs before $L$. Table~\ref{tab:special-turns} gives the parameters and the checked ranges. For $n=4,5,6$, a shorter check follows by periodicity: respectively, $(\ell,t_0,\Delta)=(42,15,42),(55,13,55),(444,15,443)$ satisfy $D(t+\ell)=D(t)+\Delta$ for $t\ge t_0$, with $\Delta\le\ell$. Checking through $t_0+\ell-1$ therefore suffices.

\begin{table}[t]
\caption{Parameters for the agent whose turns follow Lemma~\ref{lem:delayed-turns} or~\ref{lem:refined-early-turns} in the exact lower bounds. Use the factor $a=\alpha_n$ in Table~\ref{tab:exact-factors}. All other agents use ordinary deadlines. The last column is the final global time checked; all later inequalities follow from periodicity for $n\le6$ and from $D(t)<\rho t+B\le t$ for $n\ge11$.}
\label{tab:special-turns}
\centering\small
\renewcommand{\arraystretch}{1.25}
\begin{tabular}{@{}rrrrcl r@{}}
\toprule
$n$ & $r$ & $L$ & $c$ & $\lambda$ & Count requirement & Last $t$ \\
\midrule
4 & 3 & 19 & 3 & $\dfrac{1}{6}$ & \eqref{eq:delayed-counts} & 56 \\[4pt]
5 & 5 & 23 & 2 & $\dfrac{1}{11}$ & \eqref{eq:delayed-counts} & 67 \\[4pt]
6 & 6 & 15 & 1 & $\dfrac{8}{111}$ & \eqref{eq:delayed-counts} & 458 \\[4pt]
11 & 7 & 1332 & 63 & $\dfrac{387}{8155}$ & \eqref{eq:delayed-counts} & 615791 \\[4pt]
12 & 7 & 4475 & 206 & $\dfrac{18}{391}$ & \eqref{eq:delayed-counts} & 108056 \\[4pt]
15 & 7 & 4785 & 206 & $\dfrac{4322}{100345}$ & \eqref{eq:delayed-counts} & 1341555 \\[4pt]
16 & 11 & 68166 & 1833 & $\dfrac{11915}{443013}$ & \eqref{eq:refined-early-counts} & 26892509 \\[4pt]
17 & 11 & 69352 & 1833 & $\dfrac{6301}{238362}$ & \eqref{eq:delayed-counts} & 29749879 \\[4pt]
18 & 11 & 70472 & 1833 & $\dfrac{1451}{55782}$ & \eqref{eq:delayed-counts} & 18104177 \\[4pt]
\bottomrule
\end{tabular}
\end{table}

For the other values of $n$, ordinary deadlines suffice. Writing $a=p/q$, their exact count beyond the first $n$ turns is $D(n+x)-n=\sum_{r=1}^n\lfloor(p(x+r)-1)/(qr)\rfloor$.
Since $aH_n<1$, it is enough to check $0\le x\le K$, where $K=\lceil an/(1-aH_n)\rceil$.
For $x\ge K$, the strict bound $D(n+x)-n<an+axH_n\le x$ proves all remaining inequalities. Table~\ref{tab:ordinary-turns} lists the finite ranges. All inequalities in Tables~\ref{tab:special-turns} and~\ref{tab:ordinary-turns} were checked with integer arithmetic, including equality cases. Together with the lemmas above, this proves the achievable factors in Table~\ref{tab:exact-factors} at every horizon.

\begin{table}[t]
\caption{Cases with ordinary picking deadlines and the corresponding direct upper bounds. The lower bound checks $D(n+x)\le n+x$ for $0\le x\le K$. The upper bound uses $x$ in Lemma~\ref{lem:one-prefix-upper}, with $\beta=\alpha_n$.}
\label{tab:ordinary-turns}
\centering\small
\setlength{\tabcolsep}{8pt}\renewcommand{\arraystretch}{1.15}
\begin{tabular}{@{}rrr rrr@{}}
\toprule
$n$ & $K$ & $x$ & $n$ & $K$ & $x$ \\
\midrule
7 & 2940 & 216 & 24 & 16087639 & 1865019 \\
8 & 1833 & 146 & 25 & 13180423 & 1884778 \\
9 & 2062 & 152 & 26 & 12667526 & 1903773 \\
10 & 23998 & 1285 & 27 & 19599029 & 1922068 \\
13 & 30821 & 1602 & 28 & 20324919 & 1939708 \\
14 & 33192 & 1638 & 29 & 358660146 & 16232954 \\
19 & 12746567 & 523007 & 30 & 371027738 & 16369538 \\
20 & 13417439 & 530378 & 31 & 288515579 & 1989136 \\
21 & 14088311 & 537398 & 32 & 297822533 & 2004570 \\
22 & 10440075 & 544099 & 33 & 7045486420 & 273132374 \\
23 & 15417321 & 1844439 &  &  &  \\
\bottomrule
\end{tabular}
\end{table}

\subsubsection{Upper Bounds for \texorpdfstring{$4\le n\le33$}{4 <= n <= 33}: Algorithms Using Cardinal Values}
In our constructions, all agents rank the early goods in arrival order, but different agents may value later goods differently. The key observation is that a good worth less than every old good cannot replace any old good; once all later goods have this property, each agent's number of old goods remains fixed.

Fix a proposed guarantee $a>\beta>0$. We first present $n+x$ goods, as follows. Choose $q<1$ close to one and a large number $C$, with the required inequalities specified below. Before an agent first receives a good, she values the $j$th arrival at $C^{-(j-1)}$. An agent who first receives a good at arrival $i$ values every later good among the first $n+x$ at $qC^{-(i-1)}$. The first $n$ goods must go to different agents. Otherwise some agent has an empty bundle after arrival $n$, since cardinalities cannot decrease, whereas everyone's MMS is positive at that point. Relabel the agents so that agent $i$ first receives a good at arrival $i$, and write $b_i=C^{-(i-1)}$ for the value of that good to her.

Each agent keeps the first good she receives. Indeed, the owner of $g_1$ cannot replace it by a later good. Inductively, the first $i-1$ goods remain with their owners, so agent $i$ receives $g_i$ at arrival $i$ and values it at $b_i$. Earlier first goods are unavailable to her, and all subsequent goods are worth less than $b_i$ to her, so she too retains her first good. Continue until $n+x$ goods have arrived. For agent $i$, let $r=n-i+1$. In her MMS comparison, the $i-1$ earlier first goods can be singleton parts, and all $x+r$ remaining goods are worth at least $qb_i$. Her own utility is at most $c_rb_i$ if she holds $c_r$ goods, since she cannot receive an earlier first good. Choosing $aq>\beta$, the guarantee after arrival $n+x$ requires $c_r\ge B_r:=\lfloor\beta(1+\lfloor x/r\rfloor)\rfloor+1$.

\paragraph{The fractions needed from additional goods.}
Next present $M$ further goods; each agent values all of them equally, but different agents may value them differently. Take $M$ large enough that each of these goods is worth less to every agent than every earlier good. Then no earlier good can be replaced, so each agent's number of earlier goods is fixed. There are two useful choices for their total value to agent $i$; we refer to the resulting instances as the first and second completions.

First, make it $rZb_i$, where $Z$ is large and $M$ is divisible by $r$. The new goods form $r$ equal parts worth $Zb_i$. Choosing $C>Z+1$ lets each of the $i-1$ earlier first goods form one of the other singleton parts. MMS is at least $Zb_i$, so the agent requires a fraction at least $(aZ-c_r)/(rZ)>\beta/r$ of the new goods.

Second, when $d_r:=(-x)\bmod r>0$, let $h_r=\lceil(x+r)/r\rceil$ and make the total new value $d_rqb_i$. The $x+r$ goods $g_i,\dots,g_{n+x}$ can be split into $d_r$ parts of size $h_r-1$ and $r-d_r$ parts of size $h_r$. Adding equal new value to the shorter parts gives MMS at least $h_rqb_i$. The required fraction is therefore at least $(ah_r-c_r/q)/d_r>(\beta h_r-c_r)/d_r$.
For all these strict inequalities, it suffices to choose $Z>(n+x)/(a-\beta)$ and $q$ so close to one that $(n+x)(1/q-1)<a-\beta$, and then choose $C$ larger than $Z+1$ and all relevant prefix lengths. A sufficiently large common $M$, divisible by the finitely many required integers, implements each agent's chosen alternative simultaneously. Values are fixed on revelation; the construction never revises an old value.

It follows that, given counts $c_r$, the total fractions required can be made strictly greater than
\[
 \beta H_n+
 \sum_{r:d_r>0}
 \left[\frac{\beta(x+r)-rc_r}{rd_r}\right]_+.
\]
This applies to any deterministic algorithm, not just a picking sequence.

\begin{lemma}\label{lem:one-prefix-upper}
For $\beta\in(0,1]$ and an integer $x\ge0$, let $B_r:=\lfloor\beta(1+\lfloor x/r\rfloor)\rfloor+1$ for $r\in[n]$, let $f=n+x-\sum_r B_r$, and let
\[
 E_r=\begin{cases}
 \displaystyle\left[\frac{\beta(x+r)-rB_r}{r((-x)\bmod r)}\right]_+,
       &r\nmid x,\\
 0,&r\mid x.
 \end{cases}
\]
Write $E_{(1)}\le\cdots\le E_{(n)}$ for these numbers in increasing order. If $f<0$, or if $f\ge0$ and $\beta H_n+\sum_{j=1}^{\max\{0,n-f\}}E_{(j)}\ge1$, then $\alpha_n\le\beta$.
\end{lemma}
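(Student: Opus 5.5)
We argue by contradiction: suppose a deterministic algorithm with item-level upgrade recourse guarantees some $a>\beta$. We run the two-phase construction described just before the lemma, with parameters $q,Z,C,M$ chosen as there for this $a$, $\beta$ and $x$.

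After the first $n+x$ arrivals, the first-good argument gives the following. Each agent $i$ keeps $g_i$, and the agent with $r=n-i+1$ must hold some number $c_r\ge B_r$ of goods. There are only $n+x$ goods in total, so $\sum_r c_r=n+x$. If $f<0$ this is already impossible, and we are done.

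Otherwise, the surplus is $\sum_r(c_r-B_r)=f$. Hence at most $f$ agents have $c_r>B_r$, and at least $n-f$ agents have $c_r=B_r$ exactly.

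Next present the $M$ further goods, which are each worth less than every earlier good. No earlier good can be replaced, so the counts $c_r$ are frozen. Each agent independently receives either the first or the second completion.

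Every agent needs a fraction strictly greater than $\beta/r$ of the new goods under the first completion. The agents with $c_r=B_r$ and $d_r>0$ also have the second completion available. Under it they need a fraction strictly greater than $(\beta(x+r)-rB_r)/(rd_r)$.

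Since $h_r=(x+r+d_r)/r$, we have $(\beta h_r-c_r)/d_r=\beta/r+(\beta(x+r)-rc_r)/(rd_r)$. So the second completion adds exactly $E_r$ over $\beta/r$ when $c_r=B_r$; when $E_r=0$ we simply use the first completion.

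For each agent we choose whichever completion gives the larger requirement. The total required fraction then strictly exceeds $\beta H_n+\sum_{r:c_r=B_r}E_r$. This sum is at least $\beta H_n+\sum_{j=1}^{\max\{0,n-f\}}E_{(j)}$, because at least $n-f$ indices satisfy $c_r=B_r$ and every $E_r\ge0$.

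By hypothesis this is at least $1$, so the total required fraction strictly exceeds $1$. But the new goods are shared among all agents, so the fractions must sum to at most $1$, a contradiction.

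The point needing care is choosing the completion after observing the counts, which the adversary may do because the algorithm is deterministic. Two further details need checking:
\begin{itemize}
\item A single common $M$ must realize all the chosen completions simultaneously, with the divisibility and smallness conditions holding.
\item The requirement $c_r\ge B_r$ must hold uniformly for all agents, including those with $d_r=0$, for whom $E_r=0$ and nothing more is needed.
\end{itemize}
With both in place, the assumed $a>\beta$ is impossible, and hence $\alpha_n\le\beta$.
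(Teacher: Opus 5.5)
Your proposal is correct and follows the paper's proof. The first-prefix counts $c_r\ge B_r$ sum to $n+x$, leaving a surplus of $f$, so at least $n-f$ agents sit exactly at $B_r$. Giving those agents the second completion (when $E_r>0$) and everyone else the first forces a total required fraction strictly above $\beta H_n+\sum_{j\le \max\{0,n-f\}}E_{(j)}\ge1$, which is the contradiction.

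The only difference is cosmetic. You drop the extra terms of the surplus agents directly, whereas the paper phrases the same step through $\beta(x+r)/r-B_r<1$ (one extra good removes any positive $E_r$). The common-$M$ and smallness details you flag are exactly those settled in the construction preceding the lemma.
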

\begin{proof}
If $f<0$, the guarantee after arrival $n+x$ already requires more goods than have arrived. Otherwise $f$ goods remain beyond the minimum counts $B_r$. Since $\beta(x+r)/r-B_r<1$, one additional good removes any positive $E_r$, while further goods cannot reduce the fraction $\beta/r$ required by the first completion. Hence the $f$ additional goods can remove at most the $f$ largest terms. By the lower bound on the total required fraction stated before the lemma, the required fractions then sum to more than one whenever the sum condition in Lemma~\ref{lem:one-prefix-upper} holds, a contradiction.
\end{proof}

For $n=4,5$, take $x=16,18$, respectively; then $f=0$, and the sum condition in Lemma~\ref{lem:one-prefix-upper} holds with equality. For the other direct applications, the values of $x$ appear in Table~\ref{tab:ordinary-turns}, and the inequality is strict.

Seven cases use one more group of arrivals before the final small goods. The following extension treats them all. It lets one agent, indexed by $r$ below and called the \emph{special agent}, value this middle group much less than the other agents do, while preserving everyone's initial count.
\begin{lemma}\label{lem:two-prefix-upper}
With the notation of Lemma~\ref{lem:one-prefix-upper}, and with $h_r,d_r$ as in the second completion, suppose exactly $f\ge0$ indices satisfy $\beta H_n+E_s\ge1$. Set $c_s=B_s+1$ at these indices and $c_s=B_s$ at every other index.
Choose an index $r$ with $d_r>0$ and $h_r\ge d_r$, and let $\lambda=(\beta h_r-c_r)/d_r>0$. For any integer $M>d_r$, set $x'=x+M$ and
\[
 B'_s=\begin{cases}
 c_r+\lfloor M\lambda\rfloor+1,&s=r,\\
 \max\{c_s,\lfloor\beta(1+\lfloor x'/s\rfloor)\rfloor+1\},&s\ne r.
 \end{cases}
\]
Let $f'=n+x'-\sum_s B'_s$. Set $E'_r=0$ and, for $s\ne r$, use the formula for $E_s$ with $x',B'_s$ in place of $x,B_s$. If $f'<0$, or if $f'\ge0$ and $\beta H_n+\sum_{j=1}^{\max\{0,n-f'\}}E'_{(j)}\ge1$, then $\alpha_n\le\beta$.
\end{lemma}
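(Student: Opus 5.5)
We extend the adversary construction used for Lemma~\ref{lem:one-prefix-upper}, inserting a middle group of $M$ goods before the final small goods. Suppose an algorithm guarantees $a>\beta$.

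\emph{First phase.} Present the $n+x$ goods exactly as before. Every agent keeps her first good, and the agent with $r=n-i+1$ must hold at least $B_s$ goods after arrival $n+x$. The $f$ surplus goods are the only ones available beyond these minima. At each index with $\beta H_n+E_s\ge1$, the agent must receive an extra good. Otherwise, completing with the small goods (choosing, for each agent, the worse of the two completions) already forces the required fractions to sum to more than one, just as in the proof of Lemma~\ref{lem:one-prefix-upper}. Exactly $f$ indices need such an extra good, so the counts after arrival $n+x$ are exactly $c_s$.

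\emph{Middle group.} Next present $M$ goods. For every agent other than the special agent $r$, each is worth $qb_s$ (the same as her earlier non-first goods), but still less than every first good. For the special agent, each is worth much less: a tiny $\epsilon b_r$, small enough that her MMS with respect to the earlier goods is essentially unchanged. Since every middle good is worth less than each agent's first good, first goods stay fixed. Item-level recourse also keeps each agent's count of goods from the first $n+x$ arrivals at least $c_s$. Two cases follow.

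For $s\ne r$, the agent now sees $x'+s$ goods worth $qb_s$ after the earlier first goods. Her MMS requirement forces $\lfloor\beta(1+\lfloor x'/s\rfloor)\rfloor+1$ goods, and she also still holds $c_s$. Hence her count is at least $B'_s$.

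For the special agent $r$, the goods from the first phase form $d_r$ parts of size $h_r-1$ and $r-d_r$ of size $h_r$. Split the $M$ tiny middle goods, in the limit, as extra divisible value, to raise the short parts toward $h_r$. Her $c_r$ old goods provide value $c_r b_r$, while her target is about $\beta h_r q b_r$. Choose the middle goods' value so that $M$ of them total roughly $d_r q b_r$ to her. The deficit $\beta h_r-c_r$ then must be covered by middle goods, each contributing about $1/M$ of $d_r$. This forces at least $M\lambda$ middle goods, hence $\lfloor M\lambda\rfloor+1$ by strictness, giving $B'_r$. The condition $h_r\ge d_r$ and $M>d_r$ is what I expect to use to show that the middle goods can be spread over the $d_r$ short parts consistently. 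Pinning down this value scaling, so that $\epsilon$ is both tiny relative to first goods and yields exactly the required $\lambda$, is the main obstacle. I would try giving the middle goods total value $d_r(q-\delta)b_r$ to agent $r$ and taking limits.

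\emph{Final phase.} Present the final small goods with the two completions as before. For agent $r$ we only use the first completion, consistent with $E'_r=0. $ The counts $B'_s$ are now fixed, $f'$ surplus goods remain, and the same accounting as in Lemma~\ref{lem:one-prefix-upper} gives a contradiction when $f'<0$ or when the displayed sum is at least one.
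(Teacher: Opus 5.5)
Your three-phase outline is the paper's, but the step that forces $B'_r$ is not established. The lemma fixes an arbitrary integer $M>d_r$, not necessarily divisible by $d_r$, so no ``divisible limit'' is available. Spreading the $M$ middle goods over the $d_r$ short parts of size $h_r-1$ leaves some part with only $\lfloor M/d_r\rfloor$ of them. The forced count then drops to about $M\lambda-\beta\{M/d_r\}$, where $\{\cdot\}$ denotes the fractional part, and this can lose a good. For $n=12$ in Table~\ref{tab:two-prefix-upper}, $d_r=5$, $M=1564$ and $M\lambda=72$, and your partition forces only $72$ middle goods rather than $73$.

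The missing idea is to re-partition the $x+r=rh_r-d_r$ goods from her first good onward into $r-1$ parts of size $h_r$ and one part of size $h_r-d_r\ge0$. Then put all $M$ middle goods, each worth $d_rqb_r/M$ to her, into that one part. This, not spreading, is what $h_r\ge d_r$ is for. Every part is then worth at least $h_rqb_r$ with no divisibility condition. The inequality $c_rb_r+k\,d_rqb_r/M\ge a\,h_rqb_r$, with $a>\beta$ and $q$ close to one, gives $k>M\lambda$. The hypothesis $M>d_r$ only ensures that each middle good is worth less to her than every earlier good.

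So the value scaling you worried about was already right. Your opening description, in which her MMS stays ``essentially unchanged'', has the mechanism backwards: the middle goods raise her MMS by a full $qb_r$ while totalling only $d_rqb_r$.

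Second, you value the middle goods at $qb_s$ for each non-special agent, the same as her earlier non-first goods. Item-level upgrades permit replacement by an equally valued good. Such an agent can therefore hand a first-phase good to the special agent and take a middle good in exchange. The special agent then holds more than $c_r$ first-phase goods, each worth about $M/d_r$ middle goods to her, and her forced total count falls below $B'_r$. Your claim that recourse keeps each agent's first-phase count at least $c_s$ is false under equal values, since only total cardinality is protected.

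The paper gives the middle group a parameter $q_2<q_1$, so every middle good is strictly below every earlier value for every agent. First-phase counts then cannot decrease, and since they sum to $n+x$ they stay exactly $c_s$. This is what bounds the special agent's old utility by $c_rb_r$.
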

\begin{proof}
The agent corresponding to an index $s$ with $\beta H_n+E_s\ge1$ must hold more than $B_s$ of the first $n+x$ goods: otherwise its second completion and everyone else's first completion already contradict the proposed guarantee. There are exactly $f$ such indices, so all initial counts must equal $c_s$.

Choose $q_2<q_1<1$ sufficiently close to one, using $q_1$ in the first prefix. Give each of the next $M$ goods value $d_rq_2b_i/M$ to the special agent and $q_2b_j$ to every other agent $j$. These are below all previous values. The special agent's $x+r=rh_r-d_r$ goods $g_i,\dots,g_{n+x}$ can be partitioned into $r-1$ parts of size $h_r$ and one of size $h_r-d_r$. Put all $M$ new goods in the last part. Every part has value at least $h_rq_2b_i$, so her MMS is at least this amount. Since her old utility is at most $c_rb_i$, she needs more than $(M/d_r)(\beta h_r-c_r)=M\lambda$ new goods when the proposed factor exceeds $\beta$ and $q_2$ is sufficiently close to one. This proves the special count $B'_r$. There is no divisibility requirement on $M$.

For each other agent $j$, all goods from $g_j$ onward have value at least $q_2b_j$, and the first-prefix count cannot decrease. This proves the other bounds $B'_s$. If their sum exceeds $n+x'$, we are done. Otherwise, repeat the final completion argument, using only the first completion for the special agent. Each additional old good can remove at most one positive $E'_s$, and further goods do not reduce a fraction $\beta/s$. This gives the sum condition in Lemma~\ref{lem:two-prefix-upper}.

All scale choices can be made at the outset for the largest of the finitely many prefix and completion values used in this construction. The inequalities are strict for every proposed factor above $\beta$, so choosing $q_1$ and $q_2$ sufficiently close to one preserves them. No condition on a revealed value is imposed retrospectively.
\end{proof}

\begin{table}[t]
\caption{Upper bounds using two prefixes (Lemma~\ref{lem:two-prefix-upper}). Use $\beta=\alpha_n$ from Table~\ref{tab:exact-factors}. For $n=17$, the second prefix already requires one more good than is available. In every other row, the minimum total fraction required from the final goods exceeds one.}
\label{tab:two-prefix-upper}
\centering\small
\renewcommand{\arraystretch}{1.5}
\begin{tabular}{@{}rrrrrc@{}}
\toprule
$n$ & $x$ & $r$ & $M$ & $f^{\prime}$ & Required fraction / count \\
\midrule
6 & 9 & 6 & 99 & 0 & $\dfrac{159}{148}$ \\[4pt]
11 & 1325 & 7 & 11722 & 0 & $\dfrac{57237}{57085}$ \\[4pt]
12 & 4468 & 7 & 1564 & 1 & $\dfrac{22597967}{22521600}$ \\[4pt]
15 & 4778 & 7 & 100345 & 0 & $\dfrac{6467213665753}{6183415438200}$ \\[4pt]
16 & 68155 & 11 & 443013 & 0 & $\dfrac{19191933566131}{19025286317730}$ \\[4pt]
17 & 69341 & 11 & 953448 & -1 & $\sum_s B_s^{\prime}=n+x^{\prime}+1$ \\[4pt]
18 & 70461 & 11 & 334692 & 0 & $\dfrac{86909440158353}{85860640625760}$ \\[4pt]
\bottomrule
\end{tabular}
\end{table}

Substituting the exact factors and parameters in Tables~\ref{tab:ordinary-turns} and~\ref{tab:two-prefix-upper} verifies every upper bound. The accompanying program evaluates these expressions as fractions and also checks the forced counts at both prefixes. Combined with the lower bounds proved above, this completes the proof of Theorem~\ref{thm:finite-exact}. The exact equalities use finite integer calculations, whereas Theorem~\ref{thm:finite-ordinal-formula} and the uniform bounds below do not rely on case-by-case computation.

\subsection{Proof of Theorem~\ref{thm:near-harmonic}}
\label{app:uniform-mms}
We prove Theorem~\ref{thm:near-harmonic}. As explained in Section~\ref{sec:goods-item}, it suffices to give agent $i$ an $\alpha/(n-i+1)$ fraction of the turns in every prefix starting at her first turn, and these fractions sum to $\alpha H_n$. Since each agent receives a whole number of turns in every prefix, these fractions cannot all be attained exactly at $\alpha=1/H_n$. For the lower bound, we show that the rounding errors cannot all be large simultaneously. For the upper bound, we present later goods that are worth less than all earlier goods, so each agent must retain her number of earlier goods; the resulting MMS requirements then demand too large a fraction of the later goods. The arguments below give explicit bounds for every $n$; Corollary~\ref{cor:general-deadline-mms} gives a simpler explicit factor of at least $1/(H_n+1)$.

The upper bound strengthens the harmonic bound strictly for each fixed $n\ge3$. The lower bound shows that the necessary loss is nonetheless smaller than every inverse polynomial. Both statements concern all horizons, rather than a finite range of arrivals.

\subsubsection{Proof of the Upper Bound (Theorem~\ref{thm:strict-harmonic-upper})}
\begin{theorem}\label{thm:strict-harmonic-upper}
For every $n\ge5$, let $D_n=\operatorname{lcm}(1,\ldots,n)$. Then
\[
 \alpha_n\le
 \left(H_n+\frac1{(n-1)(D_n+n^2)}\right)^{-1}<\frac1{H_n}.
\]
Together with Theorem~\ref{thm:finite-exact}, this gives $\alpha_n<1/H_n$ for every $n\ge3$.
\end{theorem}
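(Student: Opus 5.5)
We use the two-phase adversary from the upper bounds of Lemma~\ref{lem:one-prefix-upper}, with one well-chosen prefix length. Suppose some deterministic algorithm guarantees $a>\beta:=\bigl(H_n+1/((n-1)(D_n+n^2))\bigr)^{-1}$.

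\textbf{Prefix phase.} Present $n+x$ goods using the geometric construction from that subsection. The first $n$ goods must go to distinct agents, and each agent keeps her first good. Relabel agents so that agent $i$ first receives a good at arrival $i$, and write $r=n-i+1$. After arrival $n+x$, her count must satisfy $c_r\ge B_r=\lfloor\beta(1+\lfloor x/r\rfloor)\rfloor+1$.

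\textbf{Completion phase.} Then present many tiny goods, which freeze all counts. Under the first completion, each agent needs a fraction exceeding $\beta/r$ of them; these sum to $\beta H_n$. Under the second completion, an agent with $d_r=(-x)\bmod r>0$ needs a fraction exceeding $\beta/r$ by $E_r$ in the sense of Lemma~\ref{lem:one-prefix-upper}. The algorithm has $f=n+x-\sum_rB_r$ spare prefix goods. Each spare good can cancel at most one positive $E_r$. So it suffices to find $x$ with either $f<0$, or $\beta H_n+\sum_{j\le n-f}E_{(j)}\ge1$, i.e.\ the uncancelled excess is at least $1-\beta H_n=\beta/((n-1)(D_n+n^2))$.

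\textbf{Choice of $x$.} I would take $x$ slightly below a multiple of $D_n$, say $x=kD_n-1$, so that $d_r=1$ for every $r\ge2$ (and $d_1=0$). Write $x+r=r\lfloor x/r\rfloor+(r-1)$, so $h_r=\lfloor x/r\rfloor+1$. Then
\[
E_r=\Bigl[\beta(x+r)/r-B_r\Bigr]_+=\Bigl[\beta h_r-\beta/r-\lfloor\beta h_r\rfloor-1\Bigr]_+ ,
\]
which is zero unless rounding is tight. That suggests the opposite regime: the count constraint $f$ itself should fail or be nearly tight. Since $\sum_r\beta(1+x/r)\approx\beta H_n x+\beta n$, while $\sum_r B_r\ge\sum_r\beta(1+\lfloor x/r\rfloor)$ plus the total rounding-up, the deficit $n+x-\sum B_r$ is about $x(1-\beta H_n)-(\text{fractional gains})$. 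So I would instead pick $x$ so that every $\beta(1+\lfloor x/r\rfloor)$ has fractional part close to $1$. Then $B_r$ exceeds $\beta(1+x/r)$ by nearly $1$ for each $r$, forcing about $n$ extra goods. That beats the slack $x(1-\beta H_n)+O(n)$ once $x\lesssim (n-1)(D_n+n^2)$-ish.

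\textbf{Main obstacle.} The crux is this simultaneous-rounding choice: finding $x$, of size about $D_n$ times a small factor, for which $\sum_r\{\beta(1+\lfloor x/r\rfloor)\}$ is large enough. Periodicity modulo $D_n$ of all the floors $\lfloor x/r\rfloor$ lets us shift $x$ by multiples of $D_n$, which changes each $\beta\lfloor x/r\rfloor$ by $\beta D_n/r$. I would try an averaging argument: over the choices $x=kD_n+y$ with $0\le y<n^2$, average the combined quantity, namely the deficit $f$ together with the $E_r$ terms. The aim is to show that some $x$ gives total required fraction at least $1$. The precise constant $(n-1)(D_n+n^2)$ should come out of bounding $x\le (n-1)(D_n+n^2)$ in this averaging. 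The assumption $n\ge5$ is probably needed to make the $O(n)$ losses from $r=1$ and from cancelled terms fit. Finally, the claim $\alpha_n<1/H_n$ for $n=3,4$ is read off Table~\ref{tab:exact-factors} ($1/2<6/11$, $10/21<12/25$).
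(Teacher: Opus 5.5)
You have the right framework, and it is the one the paper uses: a geometric prefix that freezes every agent's count, tiny later goods, and first and second completions. But the proposal stops where the actual proof begins. You never exhibit an $x$ that creates the needed rounding excess, and the constant $(n-1)(D_n+n^2)$ is guessed rather than derived. Neither of your heuristics would supply such an $x$.

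The rounding direction is reversed. Since $B_r-\beta(1+\lfloor x/r\rfloor)=1-\{\beta(1+\lfloor x/r\rfloor)\}$, fractional parts near $1$ make $B_r$ barely exceed the requirement; you need them near $0$.

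A plain averaging argument also cannot work. For $f<0$ you need $\sum_r\lfloor\beta(1+\lfloor x/r\rfloor)\rfloor\ge x+1$, whereas $\sum_r\beta(1+\lfloor x/r\rfloor)<\beta n+x$. So the total rounding loss must stay below $\beta n-1<n/H_n$, while the average loss over $x$ is about $n/2$. Including the $E_r$ terms only replaces $\lfloor x/r\rfloor$ by $x/r$ in this count; it does not enlarge the budget. An essentially extremal, structured choice of $x$ is required.

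The missing idea is to decouple the rounding factor from the contradiction threshold and choose both arithmetically. The paper's construction is:
\begin{itemize}
\item Take a Bertrand prime $p\in(n/2,n]$ and $k\in\{1,\dots,p-1\}$ with $(D_n/p)k\equiv-1\pmod p$.
\item Set $A=D_nk/p$ and $J=AH_n+1/p$; $J$ is an integer because $D_nH_n\equiv D_n/p\pmod p$.
\item Set $b=A/J<1/H_n$ and $x=J-1$.
\end{itemize}
Then $b(1+x/r)=A/r+b(r-1)/r$. Its floor is $A/r$ for $r\ne p$, and $(A+1)/p$ for $r=p$ provided $b(p-1)\ge1$. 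This proviso is where $n\ge5$ enters, not in absorbing $O(n)$ losses. It is automatic for $n\ge8$, is checked with $p=5$ for $n=5,6,7$, and fails for $n=4$, where $p=3$ would need $b\ge1/2>1/H_4$.

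These floors sum to exactly $x+1$. By pigeonhole, any counts summing to $n+x$ leave some agent with $c_r\le b(x+r)/r$.
\begin{itemize}
\item If $r\mid x$, the MMS requirement after arrival $n+x$ already fails.
\item Otherwise, give her the second completion and everyone else the first. The required fractions total more than $\gamma H_n+(\gamma-b)(x+r)/(rd_r)\ge\gamma H_n+(\gamma-b)(x+1)/(n(n-1))=1$, where $\gamma=(n(n-1)+b(x+1))/(n(n-1)H_n+x+1)$.
\end{itemize}
Substituting gives $\gamma=(H_n+1/(D_nk+pn(n-1)))^{-1}$. Then $k\le p-1\le n-1$ and $p\le n$ yield $D_nk+pn(n-1)\le(n-1)(D_n+n^2)$, which is the source of the constant.

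Your reading of $n=3,4$ from Table~\ref{tab:exact-factors} is fine.
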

\begin{proof}
We first show how an excess in rounded counts gives an upper bound for arbitrary algorithms. Suppose $0<b<1/H_n$ and $x\ge0$ is an integer such that
\begin{equation}\label{eq:rounded-excess}
 \sum_{r=1}^n(\lfloor b(1+x/r)\rfloor+1)>n+x.
\end{equation}
We claim that
\begin{equation}\label{eq:rounded-general-upper}
 \alpha_n\le\gamma:=
 \frac{n(n-1)+b(x+1)}{n(n-1)H_n+x+1}<\frac1{H_n}.
\end{equation}
Notice that $b<\gamma$. Use the first-prefix construction from Appendix~\ref{app:exact-factors}, with $q$, $C$, and $Z$ chosen as there for a proposed guarantee $a>\gamma$. Whatever allocation the algorithm chooses, the counts $c_r$ sum to $n+x$, so~\eqref{eq:rounded-excess} gives some $r$ with $c_r\le\lfloor b(x+r)/r\rfloor\le b(x+r)/r$.
If $r\mid x$, the $a$-MMS requirement after arrival $n+x$ gives $c_r\ge aq(x+r)/r>b(x+r)/r$, since $aq>\gamma>b$, a contradiction. Otherwise $d_r>0$. Give the agent corresponding to $r$ the second completion and every other agent the first completion from Appendix~\ref{app:exact-factors}, with $\gamma$ in place of $\beta$. Their required fractions sum to more than
\[
 \gamma(H_n-1/r)+\frac{\gamma h_r-c_r}{d_r}
 =\gamma H_n+\frac{\gamma(x+r)/r-c_r}{d_r} \ge\gamma H_n+(\gamma-b)\frac{x+1}{n(n-1)}=1.
\]
Here $x+r\ge x+1$ and $rd_r\le n(n-1)$. This proves~\eqref{eq:rounded-general-upper}.

It remains to produce~\eqref{eq:rounded-excess}. Choose a prime $p$ with $n/2<p\le n$, which exists by Bertrand's postulate; see, for example, \citet{sondow2009bertrand}. Choose $k\in\{1,\ldots,p-1\}$ such that $(D_n/p)k\equiv-1\pmod p$.
This is possible because $p$ occurs to the first power in $D_n$. Let $A=D_nk/p$, $J=AH_n+1/p$, $b=A/J$, and $x=J-1$.
The integer $D_nH_n=\sum_{r=1}^nD_n/r$ is congruent to $D_n/p$ modulo $p$, so $J$ is an integer. Also $0<b<1/H_n$.

For $r\ne p$, $A/r$ is an integer, and $b(1+x/r)=A/r+b(r-1)/r$ has floor $A/r$. For $r=p$, the congruence $A\equiv-1\pmod p$ gives floor $(A+1)/p$ provided $b(p-1)\ge1$. Under this condition, $\sum_{r=1}^n\lfloor b(1+x/r)\rfloor=AH_n+1/p=J=x+1$, which is exactly the excess required in~\eqref{eq:rounded-excess}.

The condition holds automatically for $n\ge8$. Indeed, $n/2-1-H_n\ge79/280>1/D_n$ and $1/b=H_n+1/(D_nk)\le H_n+1/D_n$.
The first inequality holds with equality at $n=8$, and its left-hand side increases with $n$; moreover, $D_n\ge840$. Since $p>n/2$, it follows that $p-1>1/b$. For $n=5,6,7$, take $p=5$; the construction gives $b=24/55,24/59,42/109$, respectively, each satisfying $4b\ge1$.

Substituting these values into the definition of $\gamma$ in~\eqref{eq:rounded-general-upper} gives $\gamma=(H_n+1/(D_nk+p\,n(n-1)))^{-1}$.
Finally, $k\le p-1\le n-1$ and $p\le n$ imply
$D_nk+p\,n(n-1)\le(n-1)(D_n+n^2)$. This proves the theorem.
\end{proof}

The strict bound uses heterogeneous values, although the agents agree on the order of the arriving goods. The harmonic upper bound for identical valuations remains useful and is proved in Theorem~\ref{thm:mms-impossibility} below.

\subsubsection{Proof of the Lower Bound}
We next analyze the ordinary deadline sequence. The only new issue is the range of $x$ over which the number of requests due by time $n+x$ remains at most $n+x$. With $\{z\}$ denoting the fractional part as in Appendix~\ref{app:deadline-rounding}, if $\sum_{r=1}^n\{a+y/r\}\ge an$, then $\sum_{r=1}^n\lfloor a+y/r\rfloor\le yH_n$. Thus enough total rounding down offsets the additive term $an$ in the request count.

\begin{lemma}\label{lem:longer-counting-range}
For $Y>0$, suppose $\sum_{r=1}^n\lfloor a+y/r\rfloor\le yH_n$ holds uniformly for $0<a\le1/H_n$ and $0\le y\le Y$.
Then insertion with ordinary deadlines attains the factor $a=2/(H_n+\sqrt{H_n^2+4n/Y})$, which satisfies $1/H_n-a\le n/(YH_n^3)$.
\end{lemma}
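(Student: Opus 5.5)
The plan is to repeat the argument of Lemma~\ref{lem:deadline-feasibility}, second case, with the counting estimate~\eqref{eq:deadline-rounding} replaced by the hypothesis on the range $0\le y\le Y$. Then Theorem~\ref{thm:deadline-serial-insertion-mms} converts deadline feasibility into the MMS guarantee.

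First, I will identify $a$ as the unique root in $(0,1)$ of
\[
aH_n+\frac{a^2n}{Y}=1 .
\]
Solving the quadratic gives the stated formula. Since $a^2n/Y>0$, the root satisfies $aH_n<1$, so $0<a<1/H_n$. In particular, the hypothesis applies with this value of $a$. Rearranging the defining equality gives
\[
\frac1{H_n}-a=\frac{a^2n}{YH_n}\le\frac{n}{YH_n^3},
\]
because $a<1/H_n$. This disposes of the numerical claim.

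Next, I will show that the ordinary deadline sequence $\tau(a)$ meets every request, that is, $D(t)\le t$ for all $t$. For $t\le n$ this holds with $D(t)=t$, as before. For $t=n+x$ with $x>0$, the proof of Lemma~\ref{lem:deadline-feasibility} gives
\[
D(n+x)-n=\sum_{r=1}^n\bigl(\lceil a+ax/r\rceil-1\bigr).
\]
Put $y=ax$ and split into two cases.
\begin{itemize}
\item If $y\le Y$, use $\lceil z\rceil-1\le\lfloor z\rfloor$ and the hypothesis. This gives $D(n+x)-n\le yH_n=axH_n\le x$.
\item If $y>Y$, use $\lceil z\rceil-1<z$ on each summand. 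This gives $D(n+x)-n<axH_n+an$. Since $ax>Y$, we have $an<x\cdot a^2n/Y$. Hence $D(n+x)-n<x\,(aH_n+a^2n/Y)=x$.
\end{itemize}
In both cases $D(n+x)\le n+x$, so every request appears by its deadline. Theorem~\ref{thm:deadline-serial-insertion-mms} then shows that insertion with $\tau(a)$ has item-level upgrade recourse and is $a$-MMS.

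The only delicate point is the large-$y$ case: the additive term $an$ must be absorbed using exactly the threshold $Y$. This is what forces the defining quadratic for $a$. Everything else is a direct transcription of the earlier argument.
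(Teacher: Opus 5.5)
Your proof is correct and matches the paper's argument essentially step for step: you define $a$ as the root of $aH_n+(n/Y)a^2=1$ and split the deadline count at $ax=Y$. Below the threshold you use the hypothesis with $\lceil z\rceil-1\le\lfloor z\rfloor$, above it you use $\lceil z\rceil-1<z$ to absorb the $an$ term, and you read off $1/H_n-a=na^2/(YH_n)$. You are also slightly more explicit than the paper in two places: you check that $a<1/H_n$ so the hypothesis applies, and you invoke Theorem~\ref{thm:deadline-serial-insertion-mms} to pass from deadline feasibility to the MMS guarantee.
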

\begin{proof}
This factor satisfies $aH_n+(n/Y)a^2=1$. For $ax\le Y$, the counting inequality in Lemma~\ref{lem:longer-counting-range} and $\lceil z\rceil-1\le\lfloor z\rfloor$ bound the number of requests after the first $n$ turns by $axH_n\le x$. For $ax\ge Y$, use instead $D(n+x)-n<an+axH_n\le x((n/Y)a^2+aH_n)=x$.
Thus all deadlines are feasible. The difference from $1/H_n$ equals $na^2/(YH_n)$, which is at most the stated bound.
\end{proof}

Appendix~\ref{app:deadline-rounding} already proves the counting inequality in Lemma~\ref{lem:longer-counting-range} (with $a=\alpha$ and $y=\alpha x$) for $n\ge256$, $a\le1/6$, and $y\le n^2/54$. To extend this range, we first use an elementary finite-difference estimate, and then a bound on a reciprocal exponential sum. Both steps show that the fractional parts cannot all be small.

\begin{lemma}\label{lem:fifth-difference-count}
    There are absolute constants $A,C>0$ such that, for every $n\ge8$, every real $a$, and $An\le y\le n^6/C$, $\sum_{r=1}^n\{a+y/r\}\ge n/128$.
    One may take $A=300^5/240$ and $C=2400\cdot500^5$.
\end{lemma}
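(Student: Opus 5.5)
The approach is a finite-difference argument. The additive constant $a$ disappears from every difference of order at least one. A fifth difference of $r\mapsto y/r$ with a suitable step $h$ is bounded away from all integers. So within any six equally spaced indices, the fractional parts cannot all be tiny. Write $f(r)=a+y/r$ and
\[
\Delta_h^5f(r)=\sum_{j=0}^5(-1)^{5-j}\binom5j f(r+jh),
\]
which equals $-y\cdot 5!\,h^5\big/\prod_{j=0}^5(r+jh)$ exactly. This holds because the $h$-step divided difference of $1/r$ has a closed product form. For $r+5h\le n$ and $r\ge n/2$, its absolute value lies between $120h^5y/n^6$ and $120\cdot 2^6h^5y/n^6$.

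Step 1 chooses the step. I take $h$ to be the largest positive integer with $120\cdot 64\,h^5y/n^6\le 1/2$. Then, since $(h+1)^5\le 32h^5$, every $r\in[n/2,\,n-5h]$ satisfies $|\Delta_h^5f(r)|\in[\eta,1/2]$ with $\eta$ an absolute constant, roughly $1/(128\cdot 64\cdot 32)$. Two checks are needed here. The upper constraint $y\le n^6/C$ guarantees that $h\ge1$ exists. The lower constraint $y\ge An$ gives $h^5\lesssim n^5/(240A)$, so $h\le n/300$ with the stated $A=300^5/240$. Thus $5h\le n/60$, and the window $[n/2,\,n-5h]$ still has length close to $n/2$. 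I expect the stated constants $A,C$ to be exactly what makes these two constraints close, and I would tune $\eta$ and the window accordingly.

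Step 2 is the local claim. Suppose the fractional parts $\{f(r+jh)\}$, $j=0,\dots,5$, were all less than $\delta$, and write $f(r+jh)=N_j+\epsilon_j$ with integers $N_j$ and $0\le\epsilon_j<\delta$. Then $\Delta_h^5f(r)$ equals an integer plus $\sum_j(-1)^{5-j}\binom5j\epsilon_j$. The positive coefficients sum to $16$ and the negative ones to $16$, so this error lies in $(-16\delta,16\delta)$. Hence $\Delta_h^5f(r)$ would lie within $16\delta$ of an integer. With $\delta=\eta/32$ this contradicts $|\Delta_h^5f(r)|\in[\eta,1/2]$. So each such sextuple contains an index with fractional part at least $\delta$.

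Step 3 counts disjoint sextuples. Split the indices in $[n/2,n]$ into residue classes modulo $h$. Within each class, cut consecutive elements into disjoint blocks $\{r,r+h,\dots,r+5h\}$. Discarding fewer than $6$ elements per class, at most $6h\le n/50$ indices, this yields at least roughly $(n/2-n/50)/6$ disjoint sextuples. Each contributes at least $\delta$ to $\sum_r\{a+y/r\}$, and all terms are nonnegative, so the sum is $\Omega(n)$.

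The main obstacle is bookkeeping the constants to reach exactly $n/128$ rather than merely $cn$. The crude choice above loses a factor $2^6$ from the variation of $r^{-6}$ over $[n/2,n]$. To recover it, I would first restrict to a shorter window $[\theta n,n]$ with $\theta$ close to $1$, which trades window length for a much smaller spread of $r^{-6}$. I would also use that the error in Step 2 is one-sided per coefficient sign, which lets $\Delta_h^5f$ be placed near $1/2$ rather than merely in $[\eta,1/2]$. Then I would optimize $\theta$, the target value of $120h^5y/n^6$, and $\delta$ jointly, given the stated $A$ and $C$ and the requirement $n\ge8$. For very small $n$ the window may contain no full sextuple. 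In that regime I would check that $An\le n^6/C$ fails, so the statement is vacuous there.
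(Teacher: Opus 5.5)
You correctly identify the fifth-difference mechanism, and Steps 1--3 are sound. However, they prove only $\sum_{r=1}^n\{a+y/r\}\ge cn$ with $c$ of order $10^{-6}$, not the stated $n/128$. Maximality of $h$ gives $\eta\ge 1/4096$; then $\delta=\eta/32$, and there are roughly $n/12$ disjoint blocks. This weaker form would still serve the paper's later application, which only needs the sum to exceed $n/H_n$ for large $n$. But it is not the lemma as stated.

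The tuning you sketch cannot close the gap, because the single step $h$ is the obstruction. Write $\|z\|$ for the distance from $z$ to the nearest integer. From a starting point $r$, the best a fifth-difference argument gives is $\sum_{j=0}^5\binom5j\{f(r+jh)\}\ge\|\Delta_h^5f(r)\|$, whose right side is at most $1/2$. Each index can absorb total weight $\sum_j\binom5j=32$. So a window $W$ of starting points yields at most $|W|/64$, and reaching $n/128$ needs $|W|\ge n/2$ with $\|\Delta_h^5f(r)\|$ close to $1/2$ on average. With one step, $\Delta_h^5f(r)\approx120h^5y/r^6$ is monotone in $r$ and changes by a factor of at least $2^6$ across any $n/2$ indices of $[n]$, so it cannot stay near $1/2$ there. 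Restricting to $[\theta n,n]$ with $\theta$ near $1$ reduces the spread only by shrinking $|W|$ proportionally, which moves you further from $n/128$. Taking a single fractional part $\ge\delta$ from disjoint blocks loses even more.

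The missing ideas are an $r$-dependent step and an overlapping weighted count. The paper takes $h_r=\lfloor t_r\rfloor$ with $t_r=(r^6/(240y))^{1/5}$, so that $120yt_r^5/r^6=1/2$ exactly. It uses the long window $r\in[\lceil n/10\rceil,\lfloor 9n/10\rfloor]$. The stated constants encode exactly two conditions:
\begin{itemize}
\item[(i)] $t_r\ge50$ for $r\ge n/10$, since $C=2400\cdot500^5=240\cdot10^6\cdot50^5$;
\item[(ii)] $h_r/r\le1/300$ for $r\le n$, since $A=300^5/240$.
\end{itemize}
Together these give $|\Delta_{h_r}^5f(r)|\in[171/400,1/2]$, hence $\sum_{j=0}^5\binom5j\{f(r+jh_r)\}\ge3/8$ for every $r$ in the window. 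One then sums this over all $r$, with the sextuples overlapping. This works because $h_r$ is nondecreasing in $r$, so $r\mapsto r+jh_r$ is injective for each fixed $j$, and each index of $[n]$ is counted with total weight at most $32$. With at least $2n/3$ values of $r$, this gives $32\sum_{s=1}^n\{f(s)\}\ge(2n/3)(3/8)=n/4$, that is, the bound $n/128$.
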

\begin{proof}
For each integer $r\in[\lceil n/10\rceil,\lfloor9n/10\rfloor]$, let $t_r=(r^6/(240y))^{1/5}$ and $h_r=\lfloor t_r\rfloor$.
The bounds on $y$ imply $t_r\ge50$ and $h_r/r\le1/300$. In particular, all $r+jh_r$, $0\le j\le5$, belong to $[n]$. The exact fifth difference satisfies
\[
 \left|\sum_{j=0}^5(-1)^{5-j}\binom5j
           \frac{y}{r+jh_r}\right|
 =\frac{120yh_r^5}{r(r+h_r)\cdots(r+5h_r)}.
\]
Its value is at most $1/2$. It is at least $171/400$: the numerator loses at most a factor $(49/50)^5\ge9/10$ from rounding $t_r$ down, while
$\prod_{j=0}^5(1+jh_r/r)\le\exp(1/20)\le20/19$.
Hence its distance from the nearest integer is at least $3/8$.

The fifth difference of the constant $a$ is zero. Subtracting integer floors therefore gives $\sum_{j=0}^5\binom5j\{a+y/(r+jh_r)\}\ge3/8$.
The integers $h_r$ are nondecreasing in $r$, so, for each fixed $j$, the map $r\mapsto r+jh_r$ is injective. There are at least $2n/3$ choices of $r$ for $n\ge8$ (directly for $8\le n\le14$, and by interval length thereafter). Summing the last inequality and using $\sum_{j=0}^5\binom5j=32$ proves the result.
\end{proof}

For the larger range we use the following specialization of \citet[Lemma~4.1]{matomaki2022singmaster}, with $\mathrm{i}^2=-1$. Its dependence on the parameters explains the exponent in our MMS guarantee. There is an absolute constant $C_0\ge1$ such that, for an integer $X\ge2$, $y\ge X^5$, and $(\log2)(\log(y/X))^2/(\log X)^3<10^{-3}$, we have
\[
 \left|\sum_{r=X}^{2X-1}\exp(2\pi\mathrm{i}(a+y/r))\right|
 \le2C_0X\exp\left[-2^{-18}\frac{(\log X)^3}{(\log(y/X))^2}\right].
\]
Indeed, their lemma applies to $f(t)=a+y/t$ with $F=y/X$ and parameter $2$: for every derivative order $j\ge1$ and $X\le t\le2X$, $t^j|f^{(j)}(t)|/j!=y/t\in[F/2,F]$, which satisfies its derivative conditions, and $F\ge X^4$.

\begin{proof}[Proof of the lower bound in Theorem~\ref{thm:near-harmonic}]
Choose a fixed $\eta\in(0,1/100]$ such that $2C_0\exp(-2^{-18}/\eta^2)\le1/2$.
Let $X=\lfloor n/2\rfloor$ and $Y=X\exp(\eta(\log X)^{3/2})$. For sufficiently large $n$, $Y\ge X^5$. The exponential-sum estimate above bounds the absolute value of the sum by $X/2$ whenever $X^5\le y\le Y$, uniformly in $a$. Since $1-\cos(2\pi u)\le2\pi u$ for $0\le u\le1$,
\[
 \sum_{r=X}^{2X-1}\{a+y/r\}
 \ge\frac1{2\pi}\left(X-\Re\sum_{r=X}^{2X-1}
               \exp(2\pi\mathrm{i}(a+y/r))\right)
 \ge\frac{X}{4\pi}.
\]
For sufficiently large $n$, this is at least $n/H_n$.

The three estimates cover all $0\le y\le Y$. The counting estimate~\eqref{eq:deadline-rounding} covers $[0,n^2/54]$. Lemma~\ref{lem:fifth-difference-count} covers $[An,n^6/C]$, with fractional-part sum at least $n/128\ge n/H_n$ once $H_n\ge128$. These intervals overlap for sufficiently large $n$, and $n^6/C\ge X^5$ eventually. The exponential-sum estimate covers the remaining interval $[X^5,Y]$. Consequently the counting inequality in Lemma~\ref{lem:longer-counting-range} holds for every $0<a\le1/H_n$ throughout $[0,Y]$.

Lemma~\ref{lem:longer-counting-range} now gives an achievable factor with difference from $1/H_n$ at most
\[
 \frac{n}{XH_n^3}\exp[-\eta(\log X)^{3/2}]
 \le\frac3{H_n^3}\exp[-\eta(\log\lfloor n/2\rfloor)^{3/2}].
\]
This is at most $\exp[-c(\log n)^{3/2}]$ for some fixed $c>0$ and all sufficiently large $n$. Since $\alpha_n^{\mathrm{ord}}\le\alpha_n$, the same bound applies to the unrestricted optimum and to their difference. Finally, $\exp[-c(\log n)^{3/2}]=o(n^{-k})$ for every fixed $k>0$. Together with Theorem~\ref{thm:strict-harmonic-upper} and the exact small-agent results, this proves Theorem~\ref{thm:near-harmonic}.
\end{proof}

The asymptotic argument does not provide a useful numerical threshold for moderate $n$. For every specific $n$, Corollary~\ref{cor:general-deadline-mms}, the finite optimization in Theorem~\ref{thm:finite-ordinal-formula}, and the strict upper bound above give explicit alternatives. We do not claim that $\alpha_n^{\mathrm{ord}}=\alpha_n$ for $n\ge34$.

\subsection{Proof of Theorem~\ref{thm:mms-impossibility} (Harmonic Upper Bound for Identical Valuations)}
\label{app:item-limits}
\begin{theorem} \label{thm:mms-impossibility}
Fix $n\ge2$ and $\alpha\in(1/H_n,1]$. No deterministic algorithm with item-level upgrade recourse for goods can guarantee $\alpha$-MMS. This holds even for identical additive valuations with singleton values in $[0,1]$. The final instance admits an envy-free exact MMS allocation.
\end{theorem}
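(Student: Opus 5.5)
We use an adversary with identical valuations. The idea is to make early goods pin down the owners and then flood the instance with tiny goods, so that item-level recourse freezes everyone's count of large goods. Fix a proposed guarantee $\alpha>1/H_n$ and a small $\varepsilon>0$.

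\textbf{Step 1: the first $n$ goods.} Present $n$ goods with strictly decreasing values $1=b_1>b_2>\dots>b_n$, where $b_{i+1}\le\varepsilon b_i$. After arrival $t\le n$, every agent's MMS is $0$ until $t=n$, so this phase needs more care. After arrival $n$, MMS equals $b_n>0$, so every agent must hold a good, and item-level recourse keeps cardinalities nondecreasing. Hence each agent holds exactly one of these goods. Relabel agents so that agent $i$ holds $g_i$, worth $b_i$.

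\textbf{Step 2: freezing.} Every later good is worth less than $b_n$. The argument in the proof of Theorem~\ref{thm:finite-ordinal-formula} (a strictly-better good never arrives) then shows that the injection maps $g_k$ to itself inductively. So agent $i$ keeps exactly the single large good $g_i$ forever and receives no other large good.

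\textbf{Step 3: the tail.} Present $M$ equal tiny goods of total value $T$, with $T$ chosen so that $b_n\ll T$. In a single final instance we need each agent $i$ to require a $\alpha/(n-i+1)$ fraction of the tail. That is the difficulty: with identical valuations, agent $i$'s MMS is the common MMS, not the MMS with $i-1$ goods deleted. I would therefore use an adaptive staged tail instead of one tail.

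\textbf{Staged tail.} In stage $s=n,n-1,\dots,1$, present tiny goods of total value $T_s$ with $T_n\ll T_{n-1}\ll\dots\ll T_1$, each stage's goods much smaller than the previous stage's. Just after stage $s$ is completed, choose $T_s$ so the MMS is about $T_s/s$. The reason is that the $n-s$ largest items, namely $g_1,\dots,g_{n-s}$ or suitable combinations, can serve as singleton parts, since we choose $b$ so that $b_{n-s}\ge T_s$. The agents $n-s+1,\dots,n$ must then each hold roughly $\alpha T_s/s$ from stage-$s$ goods, because their large goods are much smaller and earlier tiny stages are negligible.

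\textbf{Counting.} By item-level recourse, tiny goods from stage $s$ can only be replaced by later, smaller goods on a one-for-one basis, so their value never decreases. Thus the fraction of stage-$s$ value already committed persists, and stage-$s$ goods can be swapped only for larger goods, of which none arrive. Summing, agent $n-s+1$ absorbs $\alpha/s$ of stage $s$ goods in count terms. Goods are equal within a stage, so this is a count fraction, and the fractions $\sum_s\alpha/s=\alpha H_n>1$ of the final stage's goods are demanded. I would try making later-stage goods exactly equal in number to earlier stages so that cardinality counting across stages combines.

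\textbf{Main obstacle.} The interaction between stages is the hard part: keeping cardinalities and counting correct across scales, and ensuring the final instance admits an envy-free MMS allocation (for example, by equal splitting). I expect that to need careful choice of the scales, similar to the ordered-values argument in Theorem~\ref{thm:tight-two-agent-mms}.
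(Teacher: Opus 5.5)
\textbf{There is a genuine gap, and it lies in the design of the adversary, not in the choice of scales.} In your construction every good arrives in nonincreasing order of value: first $b_1>\dots>b_n$, then stages of equal goods, each stage's goods smaller than the previous stage's and all below $b_n$. Your own freezing argument therefore applies to every good, not just the first $n$, so nothing ever moves. You are really testing irrevocable allocation of a sorted stream, and for identical additive valuations that case is easy. The greedy rule that gives each arriving good to a currently least-valued bundle coincides, on every prefix, with LPT applied to that prefix. It never reassigns a good, so item-level upgrade recourse holds with the identity injection. By the LPT bound used in Theorem~\ref{thm:identical-value-monotone}, it is $(3n-1)/(4n-2)$-MMS after every arrival. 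Since $(3n-1)/(4n-2)>3/4>1/H_n$, no adversary of this shape, adaptive or not, can rule out any $\alpha\in(1/H_n,3/4]$.

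The individual steps show the same failure. For stage $n$ to force every agent to need tail value, you need $T_n\gg b_1$. Stage $n-1$ needs $g_1$ as a singleton part, so $T_{n-1}\lesssim(n-1)b_1$, which contradicts $T_{n-1}\gg T_n$. Also, the fraction $\alpha/s$ demanded at stage $s$ is a fraction of stage-$s$ goods only, so nothing sums to $\alpha H_n$ over a single pool. Your Steps 1--2 followed by one tail that each agent values relative to her own first good do give the harmonic bound for heterogeneous valuations; this is the first-completion construction in the paper. They do not give the identical-valuation statement.

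\textbf{The missing idea is to reverse the order}, so that the requirements at all stopping times fall on the same pool of goods.
\begin{itemize}
\item First present $M$ goods of value $\varepsilon$, with $M$ divisible by every $r\le n$ and $D:=M\varepsilon<\min\{1,\alpha\}$. Then present $n$ goods of value $1$.
\item After exactly $k<n$ unit goods, the common MMS is $D/(n-k)$. After all $n$, it is $1+D/n$, so at the end every agent must hold a unit good.
\item A unit good can be replaced only by a unit good, so each agent's number of unit goods is nondecreasing. Hence exactly one agent $p_k$ first obtains a unit good at the $k$th unit arrival.
\item Just before that arrival, $p_k$ holds only small goods, so she must already hold at least $\alpha M/(n-k+1)$ of them.
\item Cardinalities never decrease, so the final bundles of $p_1,\dots,p_n$ contain at least $\alpha MH_n$ goods. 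This exceeds $M+n$ for large $M$, a contradiction.
\end{itemize}
What does the work is that small goods committed early cannot be released when the large goods arrive, which a nonincreasing arrival order never tests. The final instance has the envy-free exact MMS allocation that gives each agent one unit good and $M/n$ small goods.
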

\begin{proof}
Suppose such an algorithm exists. Choose an integer $M$ divisible by every integer in $[n]$ and large enough that $\alpha MH_n>M+n$.
Choose $\varepsilon>0$ so that $D:=M\varepsilon<\min\{1,\alpha\}$. First present $M$ small goods of value $\varepsilon$ to every agent, followed by $n$ large goods of value $1$ to every agent.

After exactly $k<n$ large goods have arrived, every agent's maximin share is $\mu_k=D/(n-k)$.

For the upper bound, at least $n-k$ bundles in every $n$-partition contain no large good, so one has small-good value at most $D/(n-k)$. For the lower bound, place each large good alone and split the $M$ small goods equally among the remaining $n-k$ bundles. This is possible by the choice of $M$, and each large good is worth $1>D$.

Under item-level upgrade recourse, the number of large goods held by each agent cannot decrease: every old large good must be mapped to a distinct good worth at least $1$, and only large goods have that value. The number of agents holding a large good is therefore nondecreasing and can increase by at most one at each large-good arrival.

After all $n$ large goods arrive, the average-value upper bound and the allocation that gives every agent one large good and $M/n$ small goods show that the maximin share is $1+D/n$. Every agent must hold a large good; otherwise her value is at most $D<\alpha<\alpha(1+D/n)$. Since no agent initially holds a large good and all $n$ agents hold one at the end, exactly one new agent first receives a large good at each of the $n$ large-good arrivals. Let $p_k$ be the agent who first receives a large good when the $k$th large good arrives.

Immediately before that arrival, agent $p_k$ holds only small goods. If she holds $c_k$ of them, then on the instance ending here the $\alpha$-MMS guarantee gives $c_k\varepsilon\ge\alpha\mu_{k-1}=\alpha D/(n-k+1)$, and hence $c_k\ge\alpha M/(n-k+1)$.

Since bundle cardinalities never decrease under item-level upgrade recourse, the final number of goods is at least $\sum_{k=1}^nc_k\ge\alpha M\sum_{r=1}^n1/r=\alpha MH_n>M+n$, contradicting the actual number of goods.

Finally, giving every agent one large good and exactly $M/n$ small goods is envy-free and exact MMS, with every bundle worth $1+D/n$.
The argument uses the MMS requirement at every possible stopping time. The final envy-free allocation shows only that the final instance is easy offline; the contradiction comes from the MMS requirements after earlier arrivals.
\end{proof}
\subsection{Proof of Theorem~\ref{thm:goods-ef1-mms-tradeoff} (EF1 and MMS)}
\label{app:ef1-mms}
\begin{theorem}
\label{thm:goods-ef1-mms-tradeoff}
    Fix $n\ge2$ and $\gamma\in(0,1]$. If a deterministic online algorithm with item-level upgrade recourse guarantees both $\gamma$-EF1 and $\alpha$-MMS, then $\alpha\le1/(1+\gamma(n-1))$.
    In particular, among algorithms guaranteeing exact EF1, the optimal MMS factor is $1/n$.
\end{theorem}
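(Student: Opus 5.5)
The plan is an adversarial construction in the style of Theorem~\ref{thm:mms-impossibility}: many small goods first, then large goods. Under item-level upgrades, small goods that were handed out to satisfy $\gamma$-EF1 can never be reclaimed. The matching achievability for $\gamma=1$ is already given by Theorem~\ref{thm:balanced-serial-insertion}, since balanced insertion is EF1 and $1/n$-MMS. So only the upper bound $\alpha\le1/(1+\gamma(n-1))$ needs proof.

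\textbf{Construction.} Fix identical valuations. First present $M$ small goods, each of value $\varepsilon$, with $M$ large and divisible by $n$. Then present $n-1$ large goods, each of value $1$, where $1\gg M\varepsilon=:D$.

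After the small phase, $\gamma$-EF1 between any two agents bounds their counts. If agent $j$ holds $c_j$ small goods and agent $i$ holds $c_i$, then $c_i\ge\gamma(c_j-1)$. Since item-level recourse makes counts nondecreasing, and no small good can ever be replaced by anything except a good worth at least $\varepsilon$, these counts persist.

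Next, the large goods arrive. As in Theorem~\ref{thm:mms-impossibility}, the number of large goods each agent holds never decreases. One large good per arrival could go to an agent who already has many small goods.

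\textbf{Bounding the agent without a large good.} The key observation is that after all $n-1$ large goods arrive, some agent $i^\ast$ may hold none of them. The MMS is then $D/1$: place each large good alone, and give all the small goods to the last part. The bound is tight, because some part contains no large good. Agent $i^\ast$ holds only small goods, so her value is $c_{i^\ast}\varepsilon$, and she needs $c_{i^\ast}\ge\alpha M$.

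\textbf{Main obstacle: the adversary cannot force the large goods onto particular agents.} If every agent eventually gets a large good, the first-phase counts need not help the adversary. I would handle this with an adaptive tail. At each stage, the adversary keeps the current allocation of large goods. The EF1 constraint in the small phase ensures that every agent other than the poorest holds at least roughly $\gamma\cdot(\text{max count})$. The intended accounting is this. The agents holding large goods, call them $p_1,\dots,p_{n-1}$, each keep at least $\gamma c_{i^\ast}$ small goods, because EF1 from them toward $i^\ast$ must have held in the small phase; more precisely $c_{p}\ge\gamma(c_{i^\ast}-1)$. Summing gives
\[
M\ge c_{i^\ast}+\gamma(n-1)(c_{i^\ast}-1),
\]
so $\alpha M\le c_{i^\ast}\le (M+\gamma(n-1))/(1+\gamma(n-1))$. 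Letting $M\to\infty$ yields $\alpha\le1/(1+\gamma(n-1))$.

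\textbf{Steps to verify.} First, the counts are fixed at the end of the small phase, including any growth that happens later. Later goods can only add small goods to an agent's bundle if new small goods arrive, and none do, so each count stays fixed. Second, $i^\ast$ exists: there are only $n-1$ large goods, so some agent holds none, whichever agents the algorithm chooses. Third, MMS deletion (Lemma~\ref{lem:mms-deletion}) confirms that $\MMS=D$ at the final horizon.

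I would double-check the EF1 direction here. The inequality needed is $c_{p}\ge\gamma(c_{i^\ast}-1)$, which is EF1 of $p$ toward $i^\ast$ applied at the end of the small phase. This holds regardless of who is poorest, so the adaptive concern in fact disappears. The limit $\varepsilon\to0$ with $D<1$ fixed finishes the argument.
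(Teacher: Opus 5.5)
Your construction and counting are the same as the paper's: $M$ equal small goods, then $n-1$ unit goods, $\gamma$-EF1 at time $M$ from each other agent toward $i^\ast$, summed to give $c_{i^\ast}\le (M+\gamma(n-1))/(1+\gamma(n-1))$, and a final MMS equal to the total small-good value for the agent left without a large good. However, one step fails. The claim that the counts are fixed at the end of the small phase (``later goods can only add small goods to an agent's bundle if new small goods arrive'') is false. Item-level upgrade recourse makes each agent's \emph{total} cardinality nondecreasing. It does not freeze the number of small goods she holds.

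Here is how the claim breaks. When a unit good arrives, it may serve as the replacement for one of $p$'s small goods in her injection, since $1\ge\varepsilon$. The released small good can then be handed to $i^\ast$, whose old goods simply map to themselves. So $i^\ast$ can end with more than $c_{i^\ast}$ small goods, and your inequality $\alpha M\le c_{i^\ast}$ does not follow.

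The repair is the one the paper uses. Every other agent's cardinality is nondecreasing and exactly $M+n-1$ goods exist, so
\[
|A_{i^\ast}|\le M+n-1-\sum_{p\ne i^\ast}c_p=c_{i^\ast}+n-1 .
\]
Hence $\alpha M\le c_{i^\ast}+n-1\le (M+\gamma(n-1))/(1+\gamma(n-1))+n-1$. The additive $n-1$ vanishes after dividing by $M$ and letting $M\to\infty$. With this correction the argument is complete. The ``adaptive tail'' discussion and the appeal to MMS deletion are not needed, since the final MMS follows directly from placing the large goods alone and noting that some part contains none of them.
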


\begin{proof}
Present $M$ goods, each worth $1/(2M)$ to every agent. Write $a_i:=|A_i^{(M)}|$, and choose $p$ with maximum $a_p$. Approximate EF1 gives $a_i\ge\gamma(a_p-1)$ for every $i\ne p$, so $M\ge a_p+\gamma(n-1)(a_p-1)$, that is, $a_p\le(M+\gamma(n-1))/(1+\gamma(n-1))$.
Now present $n-1$ goods, each worth $1$ to every agent. In the final allocation, some agent $q$ holds none of these goods. Since no agent's cardinality can decrease under item-level recourse, $|A_q|\le M+n-1-\sum_{i\ne q}a_i=a_q+n-1\le a_p+n-1$.
Final MMS is $1/2$: put each large good in a separate part and all small goods in the last part; conversely, any $n$-partition has a part containing no large good. Hence
\[
    \frac{v_q(A_q)}{\MMS_q(G)}
    \le\frac{a_p+n-1}{M}
    \le\frac{M+\gamma(n-1)}{M(1+\gamma(n-1))}
       +\frac{n-1}{M}.
\]
Letting $M$ tend to infinity proves the necessary bound. For $\gamma=1$, Theorem~\ref{thm:balanced-serial-insertion} attains EF1 and $1/n$-MMS, proving optimality.
\end{proof}
\section{Omitted Proofs in Section~\ref{sec:value-monotone}}
\label{app:value-proofs}
\subsection{Proof of the EF1 and EFX Guarantees via Partial Allocations}
\label{app:partial-allocations}
For additive valuations, the procedures in \citet[Section~4.1 and Appendix~A]{bu2025welfare} allocate remaining goods while preserving each agent's value, although their Lemmas~1 and~2 state only a guarantee for total welfare. The EF1 procedure uses envy-cycle elimination~\citep{lipton2004approximately}; the two-agent EFX procedure uses Bu et al.'s Proposition~4. To apply either procedure online, view $\mathcal A^{(t-1)}$ as a partial allocation of $G^{(t)}$, with only $g_t$ unallocated, and extend it to a complete allocation. This gives $v_i(A_i^{(t)})\ge v_i(A_i^{(t-1)})$ for every $i\in[n]$.
Starting from the empty allocation yields polynomial-time online algorithms with bundle-level upgrades: EF1 for any number of additive agents, and EFX for two.\footnote{Their EFX definition also requires the comparison after removing a zero-valued good. Starting from the empty allocation preserves this stronger condition, which implies EFX as defined here.}

\subsection{Proof of Theorem~\ref{thm:two-agent-value-monotone-exact}}
\label{app:two-agent-exact}
\begin{proof}
Suppose $\mathcal A^{(t-1)}$ is exact MMS. For each agent $i\in[2]$, we have $\MMS_i(G^{(t-1)})\le\MMS_i(G^{(t)})\le\MMS_i(G^{(t-1)})+v_i(g_t)$.
The first inequality follows by adding $g_t$ to an optimal old partition. For the second, delete $g_t$ from an optimal two-partition of $G^{(t)}$. Both remaining parts have value at least $\MMS_i(G^{(t)})-v_i(g_t)$.

Consider the agents whose old bundles are worth less than their new MMS. If there are none, assign $g_t$ to either agent. If there is exactly one, assign $g_t$ to her; the bound $\MMS_i(G^{(t)})\le\MMS_i(G^{(t-1)})+v_i(g_t)$ and the old MMS guarantee show that her new value is at least her new MMS. If both agents have this property, take any exact-MMS allocation of $G^{(t)}$. Such an allocation exists: agent $1$ forms an MMS partition, agent $2$ chooses her preferred part, and agent $1$ receives the other part. Agent $2$ receives at least half her total value, which is at least her MMS. In this case, both agents' new MMS values strictly exceed their old allocated values, so both agents improve.

In every case we obtain an allocation $\mathcal B$ satisfying $v_i(B_i)\ge\max\{v_i(A_i^{(t-1)}),\MMS_i(G^{(t)})\}$ for $i\in[2]$.
Among all allocations of $G^{(t)}$ that weakly Pareto-dominate $\mathcal B$, choose $\mathcal A^{(t)}$ maximizing $v_1(A_1^{(t)})+v_2(A_2^{(t)})$, with fixed tie-breaking. Any Pareto improvement of $\mathcal A^{(t)}$ would also weakly dominate $\mathcal B$ and have strictly larger total value. Thus $\mathcal A^{(t)}$ is PO, remains exact MMS, and weakly improves both agents over $\mathcal A^{(t-1)}$.

Finally, exact MMS implies EFX for two additive agents under our definition. Otherwise, for some $i\ne j$ and $h\in A_j^{(t)}$ with $v_i(h)>0$, $v_i(A_j^{(t)}\setminus\{h\})>v_i(A_i^{(t)})$.
The partition $(A_i^{(t)}\cup\{h\},A_j^{(t)}\setminus\{h\})$ then has both parts worth strictly more than $v_i(A_i^{(t)})$, contradicting exact MMS. Starting from the empty allocation proves the theorem.
\end{proof}
\subsection{Proof of Theorem~\ref{thm:value-monotone-guarantees} (Simpler Algorithm and General Valuations)}
\label{app:general-valuations}
The \emph{envy graph} has an arc $i\to j$ when $v_i(A_i)<v_i(A_j)$. Rotating bundles along a directed cycle improves every participant. Repeating this leaves an acyclic graph and an \emph{unenvied} agent, whose bundle no other agent prefers to her own~\citep{lipton2004approximately}. Algorithm~\ref{alg:vm-half-mms} first gives an unallocated good to an agent in exchange for her whole bundle whenever she strictly prefers that good to her bundle. It then uses the standard EF1 completion procedure: assign unallocated goods to unenvied agents, eliminating envy cycles as needed. The initial exchanges ensure the additional inequality~\eqref{eq:singleton-invariant}, which gives the stronger guarantees below.

\begin{theorem}
\label{thm:value-monotone-guarantees}
For normalized monotone valuations, Algorithm~\ref{alg:vm-half-mms} has bundle-level upgrade recourse and maintains EF1 after every arrival. If the valuations are also subadditive, it simultaneously maintains $1/2$-EFX and $1/n$-MMS.
\end{theorem}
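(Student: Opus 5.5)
We need to propose a proof of Theorem value-monotone-guarantees for Algorithm vm-half-mms. The algorithm: on each arrival, the new good is unallocated; while some unallocated good $g$ and agent $i$ satisfy $v_i(\{g\})>v_i(A_i)$, give $g$ to $i$ and release her old bundle's goods to the unallocated pool. Then run envy-cycle elimination: give goods to unenvied agents, eliminating envy cycles. The invariant referenced is presumably
\begin{equation}\label{eq:singleton-invariant}
v_i(A_i)\ge v_i(\{g\})\quad\text{for every } i \text{ and every good } g \text{ not in } A_i,
\end{equation}
or at least for unallocated goods at the time of assignment; I will aim to maintain the strong form for all goods held by other agents.

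\emph{Recourse and termination.} Each exchange strictly increases the exchanging agent's value and leaves others unchanged; each envy-cycle rotation strictly increases all participants; each assignment of a good to an unenvied agent weakly increases her value by monotonicity. This gives bundle-level upgrade recourse. For termination, I would use a potential: the vector of agents' values takes finitely many values (bundles are subsets of the finite set $G^{(t)}$), and each exchange or rotation strictly increases some value without decreasing others, so the number of exchange/rotation steps is finite. Each assignment step strictly decreases the number of unallocated goods, but exchanges may release goods, so I would order phases: exchanges strictly raise the sum-vector in a well-founded finite order, and assignments happen only between them.

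\emph{EF1.} This is the standard argument: envy-cycle elimination preserves EF1 (rotations permute bundles, and the recipient's value only increases), and adding a good to an unenvied agent $j$ preserves EF1 because every other agent $i$ had $v_i(A_i)\ge v_i(A_j)$, so removing the new good removes the envy. The new step to check is the exchange: agent $i$ now holds a singleton, which is trivially EF1 for envious others, and $i$'s own value increased, so her EF1 toward others still holds; the other bundles are unchanged. Thus EF1 is maintained.

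\emph{$1/2$-EFX and $1/n$-MMS under subadditivity.} The key invariant is that once no exchange is possible, every agent values her bundle at least as much as any single good outside it. I need that this holds for goods in other agents' bundles too: a good entering $A_j$ was unallocated, hence satisfied the inequality for all $i\ne j$ when assigned, and $v_i(A_i)$ never decreases; rotations move whole bundles, and the new owner of a bundle values it more than her old one, so the inequality persists. Given EF1 toward $A_j$ via some good $h$, and a good $g\in A_j$: by subadditivity $v_i(A_j\setminus\{g\})\le v_i(A_j\setminus\{g,h\})+v_i(\{h\})\le v_i(A_i)+v_i(A_i)$, giving $1/2$-EFX. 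For $1/n$-MMS, I would use the EF1-to-MMS argument adapted to subadditive valuations: in an MMS partition $P_1,\dots,P_n$ of agent $i$, each other bundle $A_j$ has value at most $2v_i(A_i)$. Let $\mu=\MMS_i(G)$. I would instead argue directly: if $v_i(A_i)<\mu/n$, then each $A_j$ ($j\ne i$), minus one good, is worth less than $\mu/n$, and each single good is worth less than $\mu/n$. So $G$ is covered by $A_i$, $n-1$ remainders, and $n-1$ singletons: $2n-1$ sets, each worth less than $\mu/n$. Assigning these sets to the $n$ MMS parts, I find a part meeting at most one of them in enough weight. This pigeonhole step is the main obstacle, and I would first try an averaging bound via subadditivity: $n\mu\le\sum_\ell v_i(P_\ell)$ fails to be bounded by $v_i(G)$ for subadditive functions. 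So instead I would choose a part $P_\ell$ containing no good from some singleton class and bound $v_i(P_\ell)\le\sum$ of values of its intersections with the covering sets, choosing $\ell$ to intersect few of them. If this proves fragile, I would fall back to the known $1/(2n)$-style bound and tighten it through the singleton invariant.
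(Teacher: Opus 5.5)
Your recourse and EF1 arguments are correct and match the paper's. The subadditive part has two genuine gaps.

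\emph{The invariant.} The strong invariant you state (every agent values her bundle at least as much as any single good outside it) is false. Your justification, that ``a good entering $A_j$ was unallocated, hence satisfied the inequality,'' fails for goods that enter through the exchange loop. For example, suppose the first good is valued positively by two agents. The agent who takes it by exchange leaves the other with an empty bundle, and that agent values the taken good above her own bundle. What is true is the restricted invariant: if $|A_j|\ge 2$, then $v_i(h)\le v_i(A_i)$ for all $i$ and all $h\in A_j$. This is all your EFX computation needs, since singleton bundles are trivial for EFX.

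Your argument omits the step that maintains this restricted invariant. A bundle grows only when a pool good is added to an \emph{unenvied} owner $j$. So for every good $h$ already in $A_j$, including one obtained by exchange, monotonicity gives $v_i(h)\le v_i(A_j)\le v_i(A_i)$. The new pool good satisfies the bound because the exchange loop has ended, the pool only shrinks, and no value decreases afterwards. With this fix, your $1/2$-EFX calculation goes through.

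\emph{The MMS bound.} Your $1/n$-MMS argument is never completed. It also leans on the false claim that every single good is worth less than $\mu/n$. The step you call the main obstacle is immediate by pigeonhole. For each nonempty $A_j$ with $j\ne i$, let $h_j$ be a good whose removal eliminates $i$'s envy. These are at most $n-1$ goods, so some part $P_\ell$ of $i$'s $n$-part MMS partition contains none of them. Then $P_\ell\subseteq A_i\cup\bigcup_{j\ne i}(A_j\setminus\{h_j\})$. Monotonicity, subadditivity, and EF1 give
\[
\MMS_i(G)\le v_i(P_\ell)\le v_i(A_i)+\sum_{j\ne i}v_i(A_j\setminus\{h_j\})\le n\,v_i(A_i).
\]
No bound on singleton values is needed here. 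Note that your proposed fallback to a $1/(2n)$-type bound would not prove the stated $1/n$ guarantee.
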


The procedure maintains EF1 and ensures that no agent values a good in a bundle with at least two goods more than her own bundle:
\begin{equation}\label{eq:singleton-invariant}
 |A_j|\ge2\quad\Longrightarrow\quad v_i(h)\le v_i(A_i)
 \quad \text{for all } i,j\in[n],\ h\in A_j.
\end{equation}
For subadditive valuations, EF1 bounds the value left after removing one good, while~\eqref{eq:singleton-invariant} bounds that good. Hence, to each agent, every nonsingleton bundle is worth at most twice her own bundle, giving $1/2$-EFX. EF1 gives the $1/n$-MMS bound for subadditive valuations. For two subadditive agents, this MMS factor equals $1/2$, which is optimal even offline; an example follows the proof below.

For additive valuations, deleting singleton bundles before comparing total values shows that the same procedure gives $n/(2n-1)$-MMS, as proved below. This gives a polynomial-time alternative to Theorem~\ref{thm:three-agent-value-monotone}, and the same procedure also applies to nonadditive valuations.

The following algorithm uses only values of observed goods and bundles. The set $P$, called the pool, contains the currently unallocated goods during the computation; the allocation returned after each arrival is complete.
\begin{algorithm}[ht]
\caption{Fair Allocation with Bundle-Level Upgrades}
\label{alg:vm-half-mms}
\begin{algorithmic}[1]
\State Initially set $A_i^{(0)}\gets\varnothing$ for every $i\in[n]$.
\For{each arriving good $g_t$}
    \State Set $A_i\gets A_i^{(t-1)}$ for every $i\in[n]$ and $P\gets\{g_t\}$.
    \While{there are $i\in[n]$ and $g\in P$ with $v_i(g)>v_i(A_i)$}
        \State Choose such a pair $(i,g)$ by fixed tie-breaking.
        \State Set $P\gets(P\setminus\{g\})\cup A_i$ and $A_i\gets\{g\}$.
    \EndWhile
    \While{$P\ne\varnothing$}
        \While{the envy graph of the current bundles contains a directed cycle}
            \State Choose a directed cycle by fixed tie-breaking and rotate bundles so that each agent on the cycle receives the bundle she envies.
        \EndWhile
        \State Choose an unenvied agent $j$ and a good $g\in P$ by fixed tie-breaking.
        \State Set $A_j\gets A_j\cup\{g\}$ and $P\gets P\setminus\{g\}$.
    \EndWhile
    \For{$i\in[n]$}
        \State Set $A_i^{(t)}\gets A_i$.
    \EndFor
\EndFor
\end{algorithmic}
\end{algorithm}
After arrival $t$, the initial loop makes at most $nt$ exchanges: each exchange strictly increases the selected agent's value to one of the $t$ singleton values. During an envy-cycle loop, the current bundles are only permuted, so each agent can strictly improve at most $n-1$ times. Since each rotation improves at least two agents, there are at most $n(n-1)/2$ rotations before each of at most $t$ pool assignments. Thus the algorithm runs in polynomial time in the value-query model.
\paragraph{Proof of Theorem~\ref{thm:value-monotone-guarantees}.}
\begin{proof}
We prove the claims by induction over all operations of the algorithm.

\paragraph{Nondecreasing bundle values.}
An initial exchange replaces an agent's bundle by a singleton that she values strictly more. A rotation along a directed envy cycle gives every participant a strictly preferred bundle. An assignment from the pool adds a good to one bundle. Monotonicity therefore implies that no agent's value decreases during the update.

\paragraph{EF1.}
Suppose the current allocation is EF1. During an initial exchange, the selected agent's own value increases. Every other agent can remove the new singleton when comparing with her, and all other comparisons are unchanged. A cycle rotation permutes the current bundles and improves every participant's own value, so EF1 is preserved. Finally, suppose a good $g$ is added to an unenvied agent $j$. Before the addition, $v_i(A_i)\ge v_i(A_j)$ for every $i$. After the addition, every $i\ne j$ can remove $g$ from $j$'s bundle, while agent $j$'s own value only increases. Thus EF1 is preserved.

\paragraph{$1/2$-EFX.}
For normalized monotone valuations, the algorithm maintains~\eqref{eq:singleton-invariant}.
An initial exchange creates a singleton and increases its recipient's value. A cycle rotation permutes bundles and improves participants. Both preserve~\eqref{eq:singleton-invariant}. Before a pool good $g$ is added to an unenvied agent $j$, we have $v_i(g)\le v_i(A_i)$ from the initial exchange loop (the pool only shrinks afterwards and no agent's value decreases), and $v_i(A_j)\le v_i(A_i)$ because $j$ is unenvied. Monotonicity gives $v_i(h)\le v_i(A_j)$ for every good $h$ already in $A_j$, so the addition also preserves~\eqref{eq:singleton-invariant}.

Now assume subadditivity. For a nonsingleton $A_j$, EF1 gives a good $h\in A_j$ with $v_i(A_j\setminus\{h\})\le v_i(A_i)$, while~\eqref{eq:singleton-invariant} bounds $v_i(h)$ by the same quantity. Hence $v_i(A_j)\le2v_i(A_i)$ for all $i,j\in[n]$ with $|A_j|\ge2$.
Removing any good from a nonsingleton leaves value at most $2v_i(A_i)$; removing the only good from a singleton leaves the empty bundle. Thus the allocation is $1/2$-EFX.

\paragraph{MMS.}
Every EF1 allocation under normalized monotone subadditive valuations is $1/n$-MMS. To see this, fix $i$ and choose, from each nonempty bundle $A_j$ with $j\ne i$, a good whose removal eliminates $i$'s envy. Let $H$ contain these at most $n-1$ goods. Some part of an MMS partition avoids $H$. Monotonicity, subadditivity, and EF1 therefore give $\MMS_i(G)\le v_i(G\setminus H)\le v_i(A_i)+\sum_{j\ne i}v_i(A_j\setminus H)\le nv_i(A_i)$.

For the stronger additive guarantee, let $Q_i:=\{j\in[n]\setminus\{i\}:|A_j|=1\}$ and $k_i:=n-|Q_i|$, and let $R:=G\setminus\bigcup_{j\in Q_i}A_j$ be the set of goods left after removing the singleton goods held by agents in $Q_i$.
Repeatedly applying Lemma~\ref{lem:mms-deletion} gives us $\MMS_i^{k_i}(R)\ge\MMS_i(G)$.
Every bundle $A_j$ with $j\notin Q_i\cup\{i\}$ is empty or has at least two goods. By normalization and the bound $v_i(A_j)\le2v_i(A_i)$ for nonsingleton bundles shown above, its value to agent $i$ is at most $2v_i(A_i)$. Subadditivity therefore gives $v_i(R)\le\sum_{j\notin Q_i}v_i(A_j)\le(2k_i-1)v_i(A_i)$.
Additivity gives $\MMS_i^{k_i}(R)\le v_i(R)/k_i$, so
\[
    v_i(A_i)
    \ge
    \frac{k_i}{2k_i-1}\MMS_i^{k_i}(R)
    \ge
    \frac{n}{2n-1}\MMS_i(G),
\]
where the last inequality uses that $k/(2k-1)$ decreases with $k$.
\end{proof}
To see that the MMS factor $1/2$ is optimal for two subadditive agents even offline, consider four goods $a,b,c,d$, and let $v_1(S)=\max\{|S\cap\{a,b\}|,|S\cap\{c,d\}|\}$ and $v_2(S)=\max\{|S\cap\{a,c\}|,|S\cap\{b,d\}|\}$.
Both valuations are normalized, monotone, and subadditive, and both agents have MMS $2$. To receive value greater than $1$, the first agent needs $\{a,b\}$ or $\{c,d\}$, and the second needs $\{a,c\}$ or $\{b,d\}$. Every pair of the first agent intersects every pair of the second, so the two agents cannot both receive value greater than $1$. More generally, the offline upper bound of $1/2$ also holds for general subadditive valuations~\citep{ghodsi2022beyond}, and is attained for at most four subadditive agents~\citep{christodoulou2025subadditive}. For $n\ge3$ subadditive agents, the gap between our $1/n$ guarantee under bundle-level upgrade recourse and the offline upper bound $1/2$ remains open.
\subsection{Proof of Theorem~\ref{thm:three-agent-value-monotone}}
\label{app:three-agent-value-monotone}

We follow the outline in Section~\ref{sec:value-monotone}. We first set up the partial allocations maintained during an update. We then prove that the unallocated goods can be partitioned into enough bundles meeting each remaining agent's target, treating the last remaining agent separately. Finally, we give the complete construction, which adapts the procedure of \citet{akrami2025simultaneous}.

Set $\alpha:=(n-1)/(2n-3)$. Fix an arrival $g_t$, and suppose $\mathcal A^{(t-1)}$ is EF1, satisfies~\eqref{eq:singleton-invariant}, and is $\alpha$-MMS on $G^{(t-1)}$. Write $\mu_i:=\MMS_i(G^{(t)})$ for each agent's updated MMS. We construct a complete allocation satisfying the same properties on $G^{(t)}$ and giving every agent at least her old value.

\paragraph{Partial allocations during the computation.}
Let $U:=\{i\in[n]:v_i(A_i^{(t-1)})<\alpha\mu_i\}$.
Initially retain $B_j=A_j^{(t-1)}$ for $j\notin U$, set $B_i=\varnothing$ for $i\in U$, and let $P$ contain all remaining goods, including $g_t$. These partial allocations are used only to compute the update; the allocation returned after the arrival will be complete. Agents leave $U$ when assigned a bundle and never return to $U$. An already assigned agent may replace her bundle only by one she strictly prefers.

We maintain the following properties. Each assigned agent $j\notin U$ has $v_j(B_j)\ge\max\{v_j(A_j^{(t-1)}),\alpha\mu_j\}$.
Among the assigned agents, EF1 and~\eqref{eq:singleton-invariant} hold. Further, every bundle assigned or replaced by a subsequent step has value strictly below $\alpha\mu_i$ to every agent $i$ who remains in $U$. These properties hold initially, since only unchanged old bundles have been retained.

\paragraph{The remaining goods suffice.}
Under these properties, whenever $U\ne\varnothing$, we have
\begin{equation}
\label{eq:three-agent-remaining-mms}
\MMS_i^{|U|}(P)\ge\alpha\mu_i\quad\text{for every }i\in U.
\end{equation}
The important distinction is between a retained singleton, which can be removed by MMS deletion, and a retained nonsingleton, for which EF1 and~\eqref{eq:singleton-invariant} bound one good and its remainder separately. We first use these bounds when at least two agents remain, and then consider the last agent.

\emph{At least two agents remain.}
Fix $i\in U$ and let $k:=|U|\ge2$. Since $i\in U$, her target $\alpha\mu_i$ is positive. For this part of the argument only, scale her valuation so that $\alpha\mu_i=1$. Remove all assigned singleton goods from $G^{(t)}$, and let $H$ be the resulting set and $r$ the number of parts left after these deletions. Lemma~\ref{lem:mms-deletion} gives us $\MMS_i^r(H)\ge1/\alpha$.
If $r=k$, then $H=P$ and the desired partition follows immediately. Otherwise $r>k\ge2$, so $r\ge3$ and $\MMS_i^r(H)\ge2-1/(n-1)\ge2-1/(r-1)$.
There are $r-k$ assigned bundles left in $H$. Each nonempty such bundle $B_j$ has a good $h_j$ with $v_i(h_j)\le1$ and $v_i(B_j\setminus\{h_j\})\le1$.
For an unchanged old bundle, this follows from~\eqref{eq:singleton-invariant}, EF1, and $v_i(A_i^{(t-1)})<1$. For a later assignment or replacement, the whole bundle is worth less than $1$ to $i$. Hence every remaining assigned bundle has value at most $2$, and $v_i(P)\ge r(2-1/(r-1))-2(r-k)=2k-1-1/(r-1)$.

Suppose, for a contradiction, that $\MMS_i^k(P)<1$. Choose a $k$-partition of $P$ whose nondecreasing list of bundle values is lexicographically maximum. Let $X$ be a least-valued part, write $\beta:=v_i(X)<1$, and denote the other parts by $Y_1,\ldots,Y_{k-1}$. Goods of value zero to $i$ can be ignored in this partition and restored to any part afterwards. For every good $g\in Y_j$, $v_i(Y_j\setminus\{g\})\le\beta$;
otherwise moving $g$ to $X$ would make both affected parts worth more than $\beta$, improving the sorted list. Summing these inequalities shows that a nonsingleton $Y_j$ has value at most $2\beta$, and a part with at least three goods has value at most $3\beta/2$.

There can be no singleton among the $Y_j$. If there were $q\ge1$, delete their goods from $H$. The remaining total value would be at most $2(r-k)+(2(k-q)-1)\beta<2(r-q)-1$.
But MMS deletion requires this total to be at least $(r-q)(2-1/(r-1))\ge2(r-q)-1$, a contradiction. If a $Y_j$ were empty or had at least three goods, the preceding bounds would give $v_i(P)\le(2k-3/2)\beta<2k-3/2\le2k-1-1/(r-1)$, again a contradiction. Thus every $Y_j$ has exactly two goods. Each such pair has value strictly greater than $2-1/(r-1)$, since all the other parts together have value at most $(2k-3)\beta<2k-3$. Each individual good in a pair has value at most $\beta<1$.

We next show that $X$ contains a good at least as valuable as every good in the pairs. Choose a pair containing a most valuable good among all the $Y_j$, and write its two values as $a\ge b$. Then $a+b>2-1/(r-1)$, so $b>1-1/(r-1)$ and $a+2b>3-2/(r-1)\ge2>2\beta$.
No subset of $X$ has value in $(\beta-b,a)$: combining such a subset with the value-$b$ good, and its complement with the value-$a$ good, would give two parts worth more than $\beta$. If every good of $X$ had value below $a$, each would therefore have value at most $\beta-b$. Accumulating goods until their value first exceeds $\beta-b$ would produce a subset of value at most $2(\beta-b)<a$, a contradiction. Thus there is a good $h\in X$ with $v_i(h)\ge a$.

Let $s:=v_i(X\setminus\{h\})$. The $2k-1$ goods consisting of $h$ and all goods in the pairs each have value at most $v_i(h)=\beta-s<1-s$. Therefore $v_i(P)\le(2k-1)\beta-2(k-1)s<2k-1-2(k-1)s$. Comparing this with the lower bound $v_i(P)\ge2k-1-1/(r-1)$ gives $s<1/(2(k-1)(r-1))$.
Select these $2k-1$ goods and the good $h_j$ from each nonempty assigned bundle in $H$. There are at most $r+k-1$ selected goods, each of value at most $1$. All other goods of $H$ together have value at most $r-k+s$.

Consider an $r$-partition of $H$ in which every part has value at least $2-1/(r-1)$. Let $E$ be the number of parts containing no selected good. At most $k-1+E$ parts can contain at least two selected goods, so at least $r-k+1-2E$ parts contain exactly one. The goods that were not selected must consequently have total value at least
\[
E\left(2-\frac1{r-1}\right)
+(r-k+1-2E)\left(1-\frac1{r-1}\right)
=r-k+\frac{k-2+E}{r-1}.
\]
If $k\ge3$ or $E\ge1$, this exceeds $r-k+s$, a contradiction. In the remaining case $k=2$ and $E=0$, there are at most $r+1$ selected goods and every part contains at least one. Thus at most one part contains two, and no part contains three. Of the three selected goods from $P$, at least one is the only selected good in its part. Its value is strictly less than $1-s$, so that part needs strictly more than $1-1/(r-1)+s$ in other goods. At least $r-1$ parts contain just one selected good. Their required other value is therefore strictly greater than $(r-1)(1-1/(r-1))+s=r-2+s$, again a contradiction. This proves~\eqref{eq:three-agent-remaining-mms} for $k\ge2$, after undoing the scaling.

\emph{One agent remains.}
Return to the original values, and let $U=\{i\}$. If every assigned bundle has been retained unchanged throughout the computation, then $U$ had size one initially and $P=A_i^{(t-1)}\cup\{g_t\}$. Deleting $g_t$ from an optimal $n$-partition gives $\mu_i\le\MMS_i(G^{(t-1)})+v_i(g_t)$. Hence $v_i(P)\ge\alpha\MMS_i(G^{(t-1)})+v_i(g_t)\ge\alpha\mu_i$.
Otherwise, keep one bundle $B_j$ assigned or replaced during this computation, whose value to $i$ is less than $\alpha\mu_i$. Delete all other assigned singleton goods, leaving $r\ge2$ parts in the MMS comparison. The remaining $r-2$ assigned bundles are empty or nonsingletons and each has value at most $2\alpha\mu_i$, by EF1 and~\eqref{eq:singleton-invariant} for unchanged old bundles, or by the stronger maintained bound for later bundles. MMS deletion and additivity give $v_i(P)>r\mu_i-(2r-3)\alpha\mu_i\ge\mu_i\ge\alpha\mu_i$, where the middle inequality holds because $r\le n$ and hence $r-(2r-3)\alpha=(3n-3-r)/(2n-3)\ge1$.
This completes the verification of~\eqref{eq:three-agent-remaining-mms}.

\paragraph{Assigning bundles and improving assigned agents.}
While $U\ne\varnothing$, choose an agent in $U$ and partition $P$ into $|U|$ bundles meeting her $\alpha$-MMS target, using~\eqref{eq:three-agent-remaining-mms}. From each part, remove goods until the resulting bundle $X$ is inclusion-minimal among sets meeting the target of at least one agent in $U$. The resulting bundles remain disjoint, and each such bundle $X$ is \emph{minimal}: $v_i(Y)<\alpha\mu_i$ for every $i\in U$ and every $Y\subsetneq X$.
Goods removed in this step remain in $P$.

If an assigned agent strictly prefers a proper subset of one of these bundles to her current bundle, choose a minimum-cardinality such subset $S$, considering all assigned agents and all proposed bundles. Give $S$ to an assigned agent who strictly prefers it, return her old bundle to $P$, and restart with the same $U$. No assigned agent prefers any proper subset of $S$ to her current bundle, by the minimum-cardinality choice. Thus every comparison towards $S$ satisfies EF1, and, when $|S|\ge2$, every good in $S$ satisfies~\eqref{eq:singleton-invariant}. Comparisons from its recipient towards unchanged bundles remain valid because her value increases. Moreover, $S$ is a proper subset of a minimal proposed bundle, so its value to every agent $i\in U$ is less than $\alpha\mu_i$. All maintained properties therefore survive this replacement.

Otherwise, form the bipartite graph between $U$ and the proposed bundles, joining an agent $i$ to a bundle whenever its value to her is at least $\alpha\mu_i$. Choose a nonempty matching such that no unmatched agent is adjacent to a matched bundle. Such a matching always exists here: the two sides have the same size, and every proposed bundle has a neighbor. If there is no perfect matching, take an inclusion-minimal set of bundles with fewer neighbors than bundles. Minimality implies that each proper subset has at least as many neighbors as bundles. In particular, deleting one bundle shows that the chosen set has exactly one fewer neighbor than bundles. By Hall's theorem, the set obtained by deleting any one bundle can then be matched to all of these neighbors. Since every proposed bundle has a neighbor, this matching is nonempty; no agent outside that neighbor set is adjacent to a matched bundle.

Give each matched bundle to its matched agent and remove these agents from $U$. Each recipient reaches her target and strictly improves over her old value. Existing assigned agents do not prefer any proper subset of a proposed bundle to their current bundles; otherwise the replacement step would have applied. A new recipient's comparisons towards unchanged old bundles follow from EF1 and~\eqref{eq:singleton-invariant} for $\mathcal A^{(t-1)}$, since her own value increases. She does not envy any earlier replacement or assignment, since its value to her is below her target. Finally, comparisons between new recipients follow from the minimality of the proposed bundles. These observations prove EF1 and~\eqref{eq:singleton-invariant} among all assigned agents. The matching condition ensures that every newly assigned bundle is worth less than each remaining agent's target, as required.

\paragraph{Completion and running time.}
There are at most $n$ matching steps. Between them, each replacement strictly increases the assigned agents' total value. Since there are finitely many partial allocations, the procedure terminates with $U=\varnothing$. Run the two while-loops of Algorithm~\ref{alg:vm-half-mms} on the resulting bundles and the remaining pool $P$. The proof of Theorem~\ref{thm:value-monotone-guarantees} applies from any partial allocation satisfying EF1 and~\eqref{eq:singleton-invariant}, with any pool of unallocated goods: the singleton exchanges, envy-cycle rotations, and assignments preserve both properties without decreasing any value. The completed allocation is therefore EF1, $1/2$-EFX, and $\alpha$-MMS, and no agent's value is below her value in $\mathcal A^{(t-1)}$. Starting from the empty allocation proves the theorem. All choices can use fixed deterministic tie-breaking.

For fixed $n$ and nonnegative integer singleton values, MMS and the required partitions into at most $n$ bundles can be computed by dynamic programming over the values of the first $n-1$ parts; the last part has the remaining total value. This takes time polynomial in $t$ and $\max_i v_i(G^{(t)})$ for fixed $n$. Inclusion-minimal proposed bundles are found by successive deletions. To find a minimum-cardinality preferred subset for an assigned agent, sort the proposed bundle by her singleton values and take the shortest prefix whose value exceeds her current value, provided it is a proper subset. Minimize this cardinality over the assigned agents and proposed bundles. Each replacement increases an integer utility by at least one, and the sum of assigned utilities is bounded by $\sum_i v_i(G^{(t)})$. Matchings involve at most $n$ agents, and completion takes polynomial time. Thus the entire update runs in pseudo-polynomial time for fixed $n$; no such running-time claim is made when $n$ is part of the input.
\subsection{Proof of Theorem~\ref{thm:value-monotone-mms-upper}}
\label{app:value-upper}
\begin{proof}
Use a common additive valuation. Before scaling, present $n-3$ goods of value $199$ each, followed by six goods of values $22,22,22,33,33,50$, and then one final good of value $16$. For $n=3$, there are no value-$199$ goods.

Before the last arrival, MMS is $55$: place the value-$199$ goods alone and partition the other goods as $(22+33),(22+33),(22+50)$. Conversely, any $n$-partition has at least three parts containing no value-$199$ good. To make all three worth more than $55$, each would need at least two of the six other goods, hence exactly two. Each value-$22$ good would then need a partner worth more than $33$, but there is only one such good. After the last arrival, MMS is $66$, attained by the same singleton parts and $(22+22+22)$, $(33+33)$, $(50+16)$.
For the upper bound, at least three parts contain no value-$199$ good and together have value at most $198$.

Suppose an algorithm guarantees $\alpha$-MMS for $\alpha>10/11$. Before the last arrival every agent must have value strictly above $50$. The value-$199$ goods must have distinct owners; otherwise at least four agents would share the other goods, of total value $182<4\cdot50$. The value-$50$ good must belong to one of the three agents without a value-$199$ good; otherwise those three agents would share value at most $132<3\cdot50$. Its owner, say $p$, must also hold another good, and hence has value at least $72$.

After the last arrival, each old owner of a value-$199$ good must still receive such a good, since all other goods together have value only $198$. Thus $p$ is still one of the three agents without a value-$199$ good. Each of these three agents must now receive value strictly above $(10/11)66=60$. A bundle containing the value-$50$ good either has value $50$ or at least $66$. Without this good, its value is a multiple of $11$, possibly plus $16$; neither form lies strictly between $60$ and $66$. Consequently, each must receive at least $66$. Their available total is only $198$, so each receives exactly $66$, contradicting $p$'s earlier value of at least $72$.

The final MMS partition given above (the value-$199$ singletons together with $(22+22+22)$, $(33+33)$, and $(50+16)$), viewed as an allocation, is exact MMS and EFX: the three nonsingleton bundles all have value $66$, and the other bundles are singletons. It is PO because valuations are identical and additive. Scaling every singleton value by $1/199$ gives the stated range.
\end{proof}
\subsection{Proof of Theorem~\ref{thm:identical-value-monotone} for Additive Valuations}
\label{app:identical-additive}
Identical valuations are a well-studied special case in fair division, including for EFX with general monotone valuations~\citep{plaut2020almost}. They also make the connection to scheduling exact: a common additive value is a job size, and the common MMS is the optimal minimum load in machine covering~\citep{csirik1992lpt,galvez2020symmetry}. In our motivating application, this models teams running comparable independent workloads, so that a device contributes the same capacity to each team. Maximizing the least capacity received is then the machine-covering objective described in the introduction.

Agreement about values does not eliminate the difficulty of preserving earlier allocations: the $10/11$ upper bound already holds for identical valuations. It does, however, let us compare and reassign bundles using one common value order. We use this to maintain EFX for monotone valuations and, under additivity, to obtain a stronger MMS guarantee from a classical scheduling algorithm.
\begin{theorem}
\label{thm:identical-value-monotone}
For identical normalized monotone valuations and every $n\ge2$, a deterministic online algorithm with bundle-level upgrade recourse can maintain EFX after every arrival. For identical additive valuations, a polynomial-time algorithm simultaneously maintains EFX, PO, and $(3n-1)/(4n-2)$-MMS.
\end{theorem}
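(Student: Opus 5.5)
The plan has two parts, matching the two claims. Both rely on one observation. With a common valuation $v$, any permutation of bundles among agents is harmless for bundle-level recourse as long as, after sorting, the new bundle values dominate the old ones coordinatewise. Concretely, suppose the old allocation has sorted values $a_1\le\dots\le a_n$ and we compute a new partition of $G^{(t)}$ with sorted values $b_1\le\dots\le b_n$ satisfying $b_k\ge a_k$ for every $k$. We then assign the $k$-th smallest new bundle to the agent who held the $k$-th smallest old bundle, breaking ties deterministically. Every agent then weakly improves. So in both parts it suffices to maintain a partition procedure whose sorted value vector is monotone under insertion of a good. Fairness properties such as EFX, PO and MMS depend only on the partition, since valuations are identical, so they transfer to the allocation.

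\textbf{Monotone valuations, EFX.} Let the new partition be obtained from the old one by adding $g_t$ to a least-valued bundle. By monotonicity this bundle's value weakly increases and the others are unchanged, so the sorted vector dominates coordinatewise. However, adding a good to a minimum bundle need not keep EFX for general monotone valuations. I would therefore follow the procedure of \citet{plaut2020almost} for identical monotone valuations. Give $g_t$ to a least-valued bundle $A_\ell$. If some agent now strongly envies $A_\ell\cup\{g_t\}$ up to a good, repeatedly discard goods from that bundle to obtain a minimal subset $S$ that still violates EFX. Swap $S$ with the least bundle, and return the removed goods to the pool. The least bundle then strictly increases in value, and every other bundle keeps its value or is replaced by one of larger value. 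Hence the sorted vector weakly dominates at each step. A potential argument, lexicographic increase of the sorted vector over finitely many partitions, gives termination with an EFX partition, and the agent-matching step then yields bundle-level recourse. The main care is to show that each intermediate swap dominates in sorted order even while some goods sit in the pool. This is where monotonicity and minimality of $S$ are used: $v(S)>v(\text{min bundle})$.

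\textbf{Additive, LPT.} At each arrival, recompute the longest-processing-time-first (LPT) allocation of $G^{(t)}$. LPT sorts goods in nonincreasing value with fixed tie-breaking and gives each good to a currently least-loaded bundle. We then match bundles to agents by sorted order as above. The key step, which I expect to be the main obstacle, is that inserting a good into the LPT input weakly increases every order statistic of the LPT load vector. I would cite \citet{galvez2020symmetry} for this insertion monotonicity, checking that their tie-breaking matches ours. If needed, I would reprove it by induction over the sorted goods, comparing the two LPT runs and showing that the sorted load vectors interleave with domination at each step. Given this, bundle-level recourse follows. The guarantee $(3n-1)/(4n-2)$-MMS is the classical LPT machine-covering bound of \citet{csirik1992lpt}, since the common MMS equals the optimal minimum load.

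It remains to show EFX and PO. PO is automatic: with identical additive valuations every complete allocation has the same total, so no allocation strictly improves someone without hurting another. For EFX I would use the standard LPT argument. Consider the last good $g$ added to a bundle $A_j$. When $g$ was placed, $A_j$ without $g$ was a minimum load, so every other bundle's final value is at least $v(A_j\setminus\{g\})$. Moreover $g$ is the smallest good in $A_j$ because goods are processed in nonincreasing order. Hence removing any positively valued good leaves value at most $v(A_j\setminus\{g\})\le v(A_i)$. Each LPT step is polynomial, so the whole algorithm runs in polynomial time.
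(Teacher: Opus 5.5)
Your additive part matches the paper's proof: you recompute LPT, match sorted bundles to agents in order of their previous values using the insertion monotonicity of \citet{galvez2020symmetry}, derive EFX from the last good added to each bundle, and use \citet{csirik1992lpt} for MMS and the constant total value for PO. The gap is in the monotone part, where your trimming rule is wrong. You keep a minimal subset $S$ that \emph{still violates} EFX. By construction, then, some $h\in S$ has $v(S\setminus\{h\})>L$, where $L$ is the value of the least bundle among the others, and that bundle stays in the partition. With identical valuations EFX depends only on the partition, so no swap or rematching can repair this.

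A concrete failure: take additive values, $A_1=\{a\}$, $A_2=\{b\}$ with $v(a)=v(b)=1$, and a new good $c$ with $v(c)=5$. The set $\{a,c\}$ is already a minimal violating set, since removing the only good of a singleton leaves value $0$. Nothing returns to the pool, so your procedure ends with $\{\{a,c\},\{b\}\}$, where $v(\{a,c\}\setminus\{a\})=5>1$. More generally, you argue only termination and sorted-value dominance. You never state or check the invariant that the partial allocation is EFX after each step, and that invariant is the content of the claim.

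The missing idea is the opposite stopping rule.
\begin{itemize}
\item Let $j$ be a least-valued agent, $L:=\min_{i\ne j}v(A_i)$, and $B:=A_j\cup\{g\}$ for a pool good $g$.
\item While some $h\in B$ has $v(B\setminus\{h\})>L$, delete $h$ from $B$ and return it to the pool.
\end{itemize}
At the end, $v(B\setminus\{h\})\le L\le v(A_i)$ for every $h\in B$ and every $i\ne j$, which is exactly EFX of all other agents towards $B$. If any deletion occurred, then $v(B)>L\ge v(A_j)$; otherwise $v(B)\ge v(A_j)$ by monotonicity. So you can give $B$ directly to $j$: no swap and no final rematching is needed. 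Agent $j$'s comparisons towards unchanged bundles persist because her value did not decrease, and all other comparisons are untouched.

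For termination, your sorted-vector potential need not strictly increase when no deletion occurs and $g$ adds no value. The paper instead uses the pair consisting of the sum of utilities and the number of allocated goods, which increases lexicographically at every step.
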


\begin{proof}[Proof for additive valuations]
Write $v$ for the common valuation. After each arrival, process all observed goods in nonincreasing value order, assigning each to a currently least-valued bundle. This is the classical longest-processing-time-first (LPT) rule. Order the resulting bundles by value and assign them to the agents in nondecreasing order of $v(A_i^{(t-1)})$, with fixed tie-breaking throughout.

\citet{galvez2020symmetry} show that inserting a good into the LPT input cannot decrease any entry of the sorted list of final bundle values. The reason is that adding the same value to the least entries of two coordinatewise ordered lists, and then sorting, preserves their order. At the position of the new good, the new list weakly increases; every later greedy step preserves this comparison with the old computation. Matching the old and new bundle values in sorted order therefore gives bundle-level upgrade recourse. Merely preserving the minimum bundle value would not suffice: every position in the sorted list must be nondecreasing.

To prove EFX, let $h$ be the last good added to a nonempty greedy bundle $B_j$. At that step, $B_j\setminus\{h\}$ was least-valued, so $v(B_j\setminus\{h\})\le v(B_i)$ for every other final bundle $B_i$. Since $h$ is a least-valued good of $B_j$, removing any other good leaves no more value. Hence EFX holds, even when the removed good has value zero.

For identical additive valuations, MMS is the maximum possible minimum bundle value. The LPT bound of \citet{csirik1992lpt} therefore gives the factor $(3n-1)/(4n-2)$. Every complete allocation is PO because the sum of utilities equals $v(G^{(t)})$. Sorting and greedy allocation take polynomial time. Appendix~\ref{app:identical-value-monotone} proves the monotone-valuation statement by extending an EFX allocation without decreasing any agent's value; no polynomial-time claim is made for that case.
\end{proof}
\subsection{Proof of Theorem~\ref{thm:identical-value-monotone} for Monotone Valuations}
\label{app:identical-value-monotone}

Write $v$ for the common normalized monotone valuation. We maintain the EFX comparison after removing every good, including goods of zero singleton value. This implies EFX under the convention in the preliminaries. Starting from $\mathcal A^{(t-1)}$, let $P=\{g_t\}$ be the unallocated goods. While $P\ne\varnothing$, choose an agent $j$ with minimum current value and set $L:=\min_{i\ne j}v(A_i)$.
Choose $g\in P$, remove it from $P$, and put $B=A_j\cup\{g\}$. While there is an $h\in B$ with $v(B\setminus\{h\})>L$, remove $h$ from $B$ and return it to $P$. Once no such deletion is possible, replace $A_j$ by $B$.

If no deletion occurred, monotonicity implies that $j$'s value did not decrease. If a deletion occurred, the value after the last deletion is strictly above $L$, and hence strictly above $j$'s value before this step. At the end of the step, $v(B\setminus\{h\})\le L\le v(A_i)$ for every $h\in B$ and every $i\ne j$.
Thus every other agent satisfies EFX towards the new bundle. Agent $j$'s comparisons towards all unchanged bundles remain valid because her value did not decrease. All other comparisons are unchanged.

For termination, consider the ordered pair consisting of the sum of current utilities and the number of allocated goods. A step with a deletion strictly increases the first entry. A step without a deletion either increases the first entry or leaves it unchanged and increases the second entry by one. Thus the pair increases lexicographically. There are finitely many partial allocations of $G^{(t)}$, so the procedure terminates with a complete allocation, preserving EFX and every old value. Fixed tie-breaking makes the algorithm deterministic. This proves the result by induction from the empty allocation.
\section{Omitted Proofs in Section~\ref{sec:predictions} and Further Guarantees}
\label{app:prediction-proofs}
\subsection{Proofs of Theorems~\ref{thm:rank-value} and~\ref{thm:prefix-count-error} (Errors in Picking Turns)}
The rank-value comparison from the preliminaries applies to every picking sequence, regardless of how it was chosen.
\begin{theorem}
\label{thm:rank-value}
For every picking sequence $\sigma$, insertion satisfies item-level upgrade recourse, and its final allocation satisfies $v_i(A_i) \ge R_i^\sigma(G)$ for every $i \in [n]$.
Consequently, if $R_i^\sigma(G)\ge\alpha\MMS_i(G)$ for every $i$, then the final allocation is $\alpha$-MMS.
\end{theorem}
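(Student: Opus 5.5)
The plan is to combine Lemma~\ref{lem:serial-insertion} with the rank-value observation from Section~\ref{sec:preliminaries}, and to write that observation out carefully.

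\textbf{Recourse and reduction.} Recourse follows directly from Lemma~\ref{lem:serial-insertion}, which holds for any infinite picking sequence $\sigma$. The same lemma identifies the final allocation $\mathcal A=\mathcal A^{(m)}$ with the picking allocation induced by $(\sigma_1,\dots,\sigma_m)$ on $G$. So it suffices to show that in this picking allocation each agent $i$ receives at least $R_i^\sigma(G)$.

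\textbf{Rank-value bound.} Fix $i$ and list her goods in $G$ as $y_1\succ_i^{\mathrm{tb}}\cdots\succ_i^{\mathrm{tb}} y_m$, so that $v_i(y_k)=x_k^i(G)$. Consider a turn $k\in P_i^\sigma(m)$. Before this turn exactly $k-1$ goods have been taken, so some good among $y_1,\dots,y_k$ is still available. Agent $i$ picks a $\succ_i^{\mathrm{tb}}$-maximal remaining good, so the good she picks at turn $k$ is worth at least $x_k^i(G)$. Her turns are distinct and the goods she picks are distinct, so additivity and summing over $k\in P_i^\sigma(m)$ give
\[
v_i(A_i)\ge\sum_{k\in P_i^\sigma(m)}x_k^i(G)=R_i^\sigma(G).
\]
The degenerate cases need no separate argument. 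If $G=\varnothing$, both sides are zero. If $i$ has no turns among the first $m$, the bound is $0$ and holds by nonnegativity.

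\textbf{Consequence.} If $R_i^\sigma(G)\ge\alpha\MMS_i(G)$ for every $i$, chaining this with the bound above gives $v_i(A_i)\ge\alpha\MMS_i(G)$ for every $i$, which is the definition of $\alpha$-MMS.

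There is no substantive obstacle. The only step that needs care is the counting argument: it must use the tie-broken order $\succ_i^{\mathrm{tb}}$, so that ties among equal values do not affect the bound, and it must rely on Lemma~\ref{lem:serial-insertion} to ensure that the picking is done over all of $G$ rather than only over goods arriving after the turn.
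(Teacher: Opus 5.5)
Your proposal is correct and follows the paper's own proof: recourse and the identification of the final allocation with the picking allocation on $G$ both come from Lemma~\ref{lem:serial-insertion}, and the value bound is the rank-value observation from the preliminaries. You simply write out that observation in more detail than the paper does.
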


\begin{proof}
By Lemma~\ref{lem:serial-insertion}, insertion satisfies item-level upgrade recourse, and the final allocation is the picking allocation induced by $(\sigma_1,\dots,\sigma_m)$ on $G$. The rank-value bound in Section~\ref{sec:preliminaries} therefore gives $v_i(A_i)\ge R_i^\sigma(G)$.
\end{proof}

Comparing each agent's number of turns in every prefix, rather than only her total number of turns, gives the following error bound.
\begin{theorem}
\label{thm:prefix-count-error}
Let $H$ contain $m$ goods and set $x_{m+1}^i(H):=0$. For any two picking sequences $\rho$ and $\sigma$, the allocation returned by insertion with $\sigma$ on $H$ satisfies, for every agent $i$,
\[
    v_i(A_i)
    \ge
    R_i^\rho(H)
    -
    \sum_{k=1}^m
    [N_i^\rho(k)-N_i^\sigma(k)]_+
    (x_k^i(H)-x_{k+1}^i(H)).
\]
In particular, if $\rho$ gives rank value at least $\alpha\MMS_i(H)$ and the displayed error term is at most $\varepsilon_i\MMS_i(H)$, then $v_i(A_i)\ge(\alpha-\varepsilon_i)\MMS_i(H)$.
\end{theorem}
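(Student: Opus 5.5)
We would reduce to the rank-value bound and then compare the two rank values by summation by parts. By Theorem~\ref{thm:rank-value}, insertion with $\sigma$ gives $v_i(A_i)\ge R_i^\sigma(H)$, so it suffices to prove
\[
R_i^\sigma(H)\ge R_i^\rho(H)-\sum_{k=1}^m[N_i^\rho(k)-N_i^\sigma(k)]_+\,(x_k^i(H)-x_{k+1}^i(H)).
\]

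Write $x_k=x_k^i(H)$ and $d_k=x_k-x_{k+1}\ge0$, with $x_{m+1}=0$. This is the same summation by parts used in Lemma~\ref{lem:prefix-count}. For any sequence $\tau$,
\[
R_i^\tau(H)=\sum_{k\in P_i^\tau(m)}x_k=\sum_{k=1}^m N_i^\tau(k)\,d_k ,
\]
because each selected rank $\ell$ contributes $x_\ell=\sum_{k\ge\ell}d_k$, and the coefficient of $d_k$ counts the selected ranks $\ell\le k$. Applying this to $\rho$ and $\sigma$ and subtracting gives
\[
R_i^\rho(H)-R_i^\sigma(H)=\sum_{k=1}^m\bigl(N_i^\rho(k)-N_i^\sigma(k)\bigr)d_k\le\sum_{k=1}^m[N_i^\rho(k)-N_i^\sigma(k)]_+\,d_k ,
\]
where the inequality uses $d_k\ge0$ and $z\le[z]_+$. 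Rearranging yields the displayed bound.

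For the consequence, combine the two hypotheses: $R_i^\rho(H)\ge\alpha\MMS_i(H)$ and the error term is at most $\varepsilon_i\MMS_i(H)$. Together these give $v_i(A_i)\ge(\alpha-\varepsilon_i)\MMS_i(H)$.

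The argument has no real obstacle. The only points needing care are the boundary convention $x_{m+1}=0$, so that the telescoping identity is exact, and the fact that only the first $m$ turns of each sequence matter on a set of $m$ goods, which is how $P_i^\tau(m)$ and $N_i^\tau(k)$ are defined.
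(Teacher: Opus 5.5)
Your proposal is correct and follows the paper's proof: the summation-by-parts identity $R_i^\tau(H)=\sum_{k=1}^m N_i^\tau(k)\,(x_k^i(H)-x_{k+1}^i(H))$ for both sequences, subtraction while keeping only the positive parts of the prefix-count differences, and Theorem~\ref{thm:rank-value} to get $v_i(A_i)\ge R_i^\sigma(H)$. The MMS consequence then follows by combining the two hypotheses, as you wrote.
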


\begin{proof}
For every picking sequence $\omega$, summation by parts gives $R_i^\omega(H)=\sum_{k=1}^mN_i^\omega(k)(x_k^i(H)-x_{k+1}^i(H))$.
Subtracting the identities for $\rho$ and $\sigma$, and retaining only positive differences in their prefix counts, gives the stated lower bound on $R_i^\sigma(H)$. Theorem~\ref{thm:rank-value} completes the proof.
\end{proof}

A missed early turn matters only to the extent that ranked values decrease at that position.
\subsection{Proofs of Theorem~\ref{thm:ordered-instance-transfer} and Corollary~\ref{cor:two-agent-perfect-order}}
\label{app:correct-predictions}
\begin{theorem} \label{thm:ordered-instance-transfer}
Let $H$ be a set of $m$ goods, and let $\mathcal B=(B_1,\dots,B_n)$ be any allocation of its ordered instance. Define the first $m$ entries of a picking sequence by setting $\sigma_k=i$ exactly when $e_k\in B_i$, and extend the sequence arbitrarily after position $m$. Insertion on $H$ then returns an allocation $\mathcal A$ satisfying $v_i(A_i)\ge v_i^{\mathrm{ord}}(B_i)$ for every $i \in [n]$.

Consequently, if $\mathcal B$ is $\alpha$-MMS in the ordered instance, then $\mathcal A$ is $\alpha$-MMS on $H$.
\end{theorem}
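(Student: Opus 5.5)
The plan is to reduce everything to the rank-value bound of Theorem~\ref{thm:rank-value}. First I identify the rank value of the induced sequence with the ordered-instance value of $B_i$.

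By construction, $P_i^\sigma(m)=\{k\in[m]:e_k\in B_i\}$. The ordered instance assigns $v_i^{\mathrm{ord}}(e_k)=x_k^i(H)$. Hence
\[
R_i^\sigma(H)=\sum_{k\in P_i^\sigma(m)}x_k^i(H)=\sum_{k:\,e_k\in B_i}v_i^{\mathrm{ord}}(e_k)=v_i^{\mathrm{ord}}(B_i),
\]
using additivity of $v_i^{\mathrm{ord}}$. The arbitrary extension of $\sigma$ beyond position $m$ is irrelevant, because on $H$ only the first $m$ turns are used.

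Insertion with $\sigma$ satisfies item-level upgrade recourse by Lemma~\ref{lem:serial-insertion}. It returns the picking allocation of $(\sigma_1,\dots,\sigma_m)$ on $H$. Theorem~\ref{thm:rank-value} then gives $v_i(A_i)\ge R_i^\sigma(H)=v_i^{\mathrm{ord}}(B_i)$. Equivalently, one can invoke directly the observation from the preliminaries: at turn $k$, at most $k-1$ goods are gone, so agent $\sigma_k$ obtains a good worth at least $x_k^{\sigma_k}(H)$.

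For the consequence, I use the fact, noted in the preliminaries, that the ordered instance preserves each agent's multiset of singleton values. Under additivity, $\MMS_i$ depends only on this multiset: a bijection between goods carrying equal values maps partitions to partitions with the same part values. So the MMS of agent $i$ in the ordered instance equals $\MMS_i(H)$. If $\mathcal B$ is $\alpha$-MMS in the ordered instance, then
\[
v_i(A_i)\ge v_i^{\mathrm{ord}}(B_i)\ge\alpha\MMS_i(H).
\]

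There is no real obstacle here. The only point needing care is to match the ordered-instance value to the rank value exactly, including ties. Ties are harmless, because the rank bound uses ranked values rather than good identities.
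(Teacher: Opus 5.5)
Your proposal is correct and follows the paper's proof: you identify $v_i^{\mathrm{ord}}(B_i)=\sum_{k:e_k\in B_i}x_k^i(H)$ with the rank value of the induced sequence, apply the rank-value bound for insertion, and then use that the ordered instance preserves each agent's MMS. Your justification of the last step, via a value-preserving bijection between goods that maps partitions to partitions, is a detail the paper leaves implicit.
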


\begin{proof}
The rank-value bound gives $v_i(A_i)\ge\sum_{k:e_k\in B_i}x_k^i(H)=v_i^{\mathrm{ord}}(B_i)$. Ordering preserves each agent's MMS.
\end{proof}
\begin{corollary}
\label{cor:two-agent-perfect-order}
For two agents and every finite set of goods $H$, there is a picking sequence whose insertion allocation is exact MMS on $H$.
\end{corollary}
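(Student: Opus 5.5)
The plan is to derive the corollary from Theorem~\ref{thm:ordered-instance-transfer} applied with $\alpha=1$. It therefore suffices to produce an exact-MMS allocation $\mathcal B$ of the ordered instance of $H$. For two agents with additive valuations, such an allocation always exists. Agent~$1$ picks a partition $(P_1,P_2)$ of $\{e_1,\dots,e_m\}$ that attains $\MMS_1^{\mathrm{ord}}$. Agent~$2$ takes the part she values more under $v_2^{\mathrm{ord}}$, and agent~$1$ receives the other part.

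The verification uses only the ordered instance. Agent~$1$ receives a part of her maximin partition, so she gets at least her maximin share. Agent~$2$ receives at least $v_2^{\mathrm{ord}}(\{e_1,\dots,e_m\})/2$. By additivity this is at least $\MMS_2^{\mathrm{ord}}$, since no two-partition has both parts above half the total. The ordered instance keeps each agent's multiset of singleton values, so $\MMS_i^{\mathrm{ord}}=\MMS_i(H)$, as noted in the preliminaries. Hence $\mathcal B$ is exact MMS in the ordered instance.

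Next, define $\sigma_k=i$ exactly when $e_k\in B_i$ for $k\le m$, and extend $\sigma$ arbitrarily after position $m$. Theorem~\ref{thm:ordered-instance-transfer} then gives an insertion allocation $\mathcal A$ on $H$ with $v_i(A_i)\ge v_i^{\mathrm{ord}}(B_i)\ge\MMS_i(H)$ for both agents, which is the claim.

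No step is a real obstacle. The one point to check is that the cut-and-choose argument runs entirely inside the ordered instance, where both agents rank the goods identically but value them differently. Nothing in the argument depends on the agents having different rankings, so it goes through unchanged. The argument also involves no online dynamics, because the sequence is built for the fixed set $H$.
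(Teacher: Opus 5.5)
Your proposal is correct and matches the paper's proof: build an exact-MMS allocation of the ordered instance by cut-and-choose, then apply Theorem~\ref{thm:ordered-instance-transfer} with $\alpha=1$, using the fact that ordering preserves each agent's MMS. Your verification of the cut-and-choose step, agent~$1$ via her own MMS partition and agent~$2$ via the half-total bound, is the same as the paper's.
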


\begin{proof}
Every two-agent additive instance has an MMS allocation: take an MMS partition of agent $1$, let agent $2$ choose her preferred part, and give the other part to agent $1$. Apply Theorem~\ref{thm:ordered-instance-transfer} to such an allocation of the ordered instance.
\end{proof}
\subsection{Proof of Theorem~\ref{thm:ranked-value-prediction}}
\begin{proof}
Item-level upgrade recourse follows from Lemma~\ref{lem:serial-insertion}, since the predicted instance fixes a picking sequence before arrivals begin. Let $\widehat v_i$ and $\widehat\MMS_i$ denote value and maximin share in the predicted ordered instance. Padding either ordered instance with zero-valued goods leaves its maximin share unchanged. In a partition of the true ordered instance attaining $\MMS_i(G)$, every part loses at most $\sum_{k\ge1}[x_k^i(G)-\widehat x_k^i]_+$ under the predicted values. Therefore $\widehat\MMS_i\ge\MMS_i(G)-\sum_{k\ge1}[x_k^i(G)-\widehat x_k^i]_+$.
Theorem~\ref{thm:rank-value} bounds the realized value from below by the true values of the ranks assigned to $i$ among the first $\widehat m$ turns. Terms beyond the realized horizon are zero, and any additional turns can only help. Hence
\begin{align*}
    v_i(A_i)
    &\ge \sum_{k:e_k\in B_i}x_k^i(G)\\
    &\ge \widehat v_i(B_i)
         -\sum_{k:e_k\in B_i}[\widehat x_k^i-x_k^i(G)]_+\\
    &\ge \alpha\MMS_i(G)
         -\alpha\sum_{k\ge1}[x_k^i(G)-\widehat x_k^i]_+
         -\sum_{k:e_k\in B_i}[\widehat x_k^i-x_k^i(G)]_+\\
    &\ge \alpha\MMS_i(G)-E_i.
\end{align*}
The last inequality uses $\alpha\le1$: underestimation and overestimation contribute to different sums, whose total is at most $E_i$.
\end{proof}

\subsubsection{A Guarantee under Pointwise Relative Errors}
\label{app:relative-prediction-error}
A relative-error assumption gives a further guarantee that does not sum errors over goods. If, for some $\delta\in[0,1)$, $(1-\delta)x_k^i(G)\le\widehat x_k^i\le(1+\delta)x_k^i(G)$ for every $i\in[n]$ and $k\ge1$, then the same algorithm gives $\alpha(1-\delta)/(1+\delta)$-MMS. Unlike the bound using $E_i$, this pointwise assumption excludes missing or additional positively valued goods after zero padding. The additive guarantee above allows both.

To prove the relative-error guarantee, let $\widehat v_i$ and $\widehat\MMS_i$ be as in the preceding proof. The pointwise lower bound on predicted values gives $\widehat\MMS_i\ge(1-\delta)\MMS_i(G)$ by applying it to each part of a true MMS partition. The pointwise upper bound then gives $v_i(A_i)\ge\sum_{k:e_k\in B_i}x_k^i(G)\ge\widehat v_i(B_i)/(1+\delta)\ge\alpha(1-\delta)\MMS_i(G)/(1+\delta)$.

\subsubsection{The Coefficient of \texorpdfstring{$E_i$}{E\_i} Cannot Be Reduced}
For arbitrary choices of the offline allocation $\mathcal B$, the coefficient of $E_i$ cannot be reduced in general, already when $\alpha=1$. With two identical agents, predict the ranked values $(1,1,1)$ and give rank $3$ to agent $1$ and the other ranks to agent $2$. This is exact MMS in the predicted instance. If the true ranked values are $(1,1,1-\varepsilon)$, where $0<\varepsilon<1$, the induced picking sequence $(2,2,1)$ gives agent $1$ value $1-\varepsilon$, while her true MMS is $1$ and $E_1=\varepsilon$.
\subsection{Proof of Theorem~\ref{thm:prediction-mms-tradeoff}}
\begin{proof}
Choose a positive multiple $M$ of $n$ and set $\varepsilon:=1/(2M)$. The predicted horizon is $M+n-1$, and every agent's predicted ranked values consist of $n-1$ values of $1$ and $M$ values of $\varepsilon$. First present $M$ goods, each worth $\varepsilon$ to every agent.

Consider completing the instance with $n-1$ goods of value $1$. The prediction is correct, and every agent's MMS is $M\varepsilon$: put the large goods in separate parts and all small goods in the last part; conversely, some part of every $n$-partition contains no large good. In the final allocation, some agent $q$ holds no large good, so the $\alpha$-MMS guarantee requires her to receive at least $\alpha M$ small goods. Under item-level recourse, bundle cardinalities never decrease. Since only $n-1$ goods arrive after the common prefix, agent $q$ must already hold at least $\alpha M-(n-1)$ goods after the first $M$ arrivals.

Instead, complete the same prefix with $n-1$ goods of value $\varepsilon$, keeping the prediction unchanged. The algorithm has the same allocation at the end of the common prefix, and agent $q$ cannot lose cardinality. The other $n-1$ agents therefore hold at most $(1-\alpha)M+2(n-1)$ goods in total at the final allocation. At least one of them holds at most $(1-\alpha)M/(n-1)+2$ goods. Every agent's true MMS is now $(M/n)\varepsilon$, because $M$ is a multiple of $n$. The $\beta$-MMS guarantee requires $\beta\le n(1-\alpha)/(n-1)+2n/M$.
Letting $M$ tend to infinity proves the claim.
\end{proof}

\subsection{Proof of Theorem~\ref{thm:bundle-prediction-upper}}
\label{app:bundle-prediction-upper}
\begin{proof}
Suppose $\beta>10/11$. Use the identical-valuations construction from Appendix~\ref{app:value-upper}: before scaling, present $n-3$ goods worth $199$ each, followed by goods worth $22,22,22,33,33,50$. For every agent, predict the ranked values of these goods together with one additional value of $16$, giving the correct horizon $n+4$.

Consider instead a final good worth $\varepsilon$, where $0<\varepsilon<\min\{1,55\beta-50\}$. The common MMS is at least $55$, because the earlier goods already admit such a partition. Robustness therefore requires every agent's final value to be at least $55\beta$. Adding the last good increases the sum of all agents' values by exactly $\varepsilon$. Since bundle-level upgrades cannot decrease anyone's value, each individual increase is at most $\varepsilon$. Every agent must consequently have value at least $55\beta-\varepsilon>50$ just before the last arrival.

The algorithm has the same allocation at that point when the last good is instead worth the predicted $16$. As proved in Appendix~\ref{app:value-upper}, every allocation of the common prefix giving everyone value above $50$ has an agent without a value-$199$ good whose value is at least $72$. After the value-$16$ arrival, the $\beta$-MMS guarantee must give everyone value above $60$. The old owners of value-$199$ goods must retain such goods, and each of the three other agents must receive exactly $66$. The agent previously receiving at least $72$ would therefore lose value, a contradiction. Scaling all values by $1/199$ gives strictly positive singleton values in $[0,1]$ in both instances.
\end{proof}

\section{Impossibility Results for EFX and Efficiency, and Comparison of the Recourse Rules}
\label{sec:limits}
Valuations are additive unless stated otherwise. We first show that item-level upgrade recourse rules out approximate EFX, as well as approximate EF1 together with efficiency. We then collect the comparison with bundle-level upgrade recourse.
\subsection{Proof of Theorem~\ref{thm:efx-impossibility} (EFX under Item-Level Upgrades)}
EF1 permits removing a large good from an envy comparison, whereas EFX also requires the comparison after removing a small positively valued good. Appending a large good after many small goods makes this distinction consequential.
\begin{theorem}
\label{thm:efx-impossibility}
Fix any $\alpha\in(0,1]$. No deterministic algorithm with item-level upgrade recourse for goods can guarantee $\alpha$-EFX, even for two agents with identical additive valuations and singleton values in $[0,1]$. The final instance in the construction admits an allocation that is both exact EFX and exact MMS.
\end{theorem}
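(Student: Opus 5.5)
We build an adversarial instance that exploits the fact that, under item-level upgrade recourse, an agent's number of goods can never decrease. Suppose an algorithm guarantees $\alpha$-EFX for two agents with a common additive valuation $v$. Fix a large integer $M$ (to be chosen depending on $\alpha$) and $\varepsilon>0$ small, and present $M$ goods of value $\varepsilon$ each.

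The first step uses $\alpha$-EFX after these $M$ arrivals to force the cardinalities to be balanced. Every small good has positive value, so $\alpha$-EFX requires $|A_i|\varepsilon\ge\alpha(|A_j|-1)\varepsilon$ for both ordered pairs. Hence each agent holds at least $\alpha(|A_j|-1)$ goods. In particular, both agents hold at least roughly $\alpha M/(1+\alpha)$ goods, which is at least $2$ once $M$ is large enough in terms of $\alpha$.

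The second step appends one large good $g^*$ of value $1$. Some agent, say $j$, ends up holding $g^*$. Item-level recourse forces $|A_j|\ge c_j\ge2$, where $c_j$ is her count after the first $M$ arrivals, so $A_j$ contains $g^*$ and at least one small good $h$. The other agent $i$ holds only small goods, so her value is at most $M\varepsilon$. Removing the positively valued good $h$ from $A_j$ leaves value at least $1$. Thus $\alpha$-EFX for $i$ towards $j$ requires $M\varepsilon\ge\alpha$. Choosing $\varepsilon<\alpha/M$, for instance $\varepsilon=\alpha/(2M)$, gives a contradiction. All singleton values lie in $[0,1]$ and the valuations are identical.

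For the final claim, allocate $\{g^*\}$ to one agent and all $M$ small goods to the other. This allocation is EFX: the singleton side needs no check, and the agent with the small goods values her bundle at $M\varepsilon<1$, so she has no envy after any removal from a singleton. It is also exact MMS, since the common MMS is $\min\{1,M\varepsilon\}=M\varepsilon$ and both agents meet it.

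The only delicate point is the first step. One must check that $\alpha$-EFX really forces each agent to hold at least two goods before $g^*$ arrives, which requires choosing $M$ large relative to $1/\alpha$, say $M\ge 4/\alpha$. Otherwise one agent could hold a single small good and later trade it for $g^*$ through the upgrade injection, leaving a singleton bundle against which EFX is vacuous. Once that lower bound on cardinalities is secured, the remaining steps are direct.
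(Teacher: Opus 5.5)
Your proof is correct and is essentially the paper's argument: $\alpha$-EFX on $M$ equal small goods forces both agents to hold at least two goods, item-level recourse then forces the recipient of the unit good to keep a small good, and removing that small good means $\alpha$-EFX would require $M\varepsilon\ge\alpha$, which fails for $\varepsilon<\alpha/M$. Your threshold $M\ge4/\alpha$ plays the same role as the paper's $M>1/\alpha+2$, since $b(1+\alpha)\ge\alpha(M-1)$ then forces $b\ge2$. One small wording fix in the final EFX check: it is the holder of the unit good who needs $M\varepsilon<1$ to have no envy, while comparisons toward the singleton bundle are vacuous.
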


\begin{proof}
Suppose such an algorithm exists. Choose an integer $M>1/\alpha+2$ and choose $0<\varepsilon<\alpha/M$. 
First present $M$ goods, each worth $\varepsilon$ to both agents. 
Since this prefix may be final, its allocation must be $\alpha$-EFX. Let the two numbers of small goods be $a\ge b$, with $a+b=M$. Applying $\alpha$-EFX from the agent with $b$ goods toward the agent with $a$ goods gives us $b\varepsilon\ge\alpha(a-1)\varepsilon$, and
so $b(1+\alpha)\ge\alpha(M-1)$.
If $b\le1$, then $M\le1/\alpha+2$, a contradiction. Hence both agents hold at least two small goods.

Now present one good $z$ worth $1$ to both agents. Let $p$ be the agent who receives $z$, and let $q$ be the other agent. Item-level upgrade recourse gives an injection from $A_p^{(M)}$ to $A_p^{(M+1)}$, so $|A_p^{(M+1)}|\ge2$. Thus $p$ holds $z$ and at least one small good $d$.

Agent $q$ holds only small goods, so $v_q(A_q)\le M\varepsilon<\alpha$.
Since $v_q(d)>0$, $\alpha$-EFX for the ordered pair $(q,p)$ and the good $d$ requires $v_q(A_q)
    \ge
    \alpha v_q(A_p\setminus\{d\})
    \ge\alpha$,
a contradiction.

For the final instance, give $z$ alone to one agent and all small goods to the other. Since $M\varepsilon<1$, this allocation is exact EFX. Its minimum bundle value is $M\varepsilon$, which is also the two-agent maximin share, so it is exact MMS.
\end{proof}
\subsection{Proofs of Theorems~\ref{thm:gamma-ef1-usw} and~\ref{thm:gamma-ef1-po} (Fairness versus Efficiency)}

Define \emph{utilitarian social welfare} by $\USW(\mathcal A)=\sum_i v_i(A_i)$, and let $\OPTUSW(H)$ be its maximum over allocations of $H$. For $\beta\in[0,1]$, an allocation $\mathcal A$ of $H$ is a \emph{$\beta$-approximation} to optimal utilitarian welfare if $\USW(\mathcal A)\ge\beta\OPTUSW(H)$.

The need to retain earlier assignments also affects efficiency. We first bound the utilitarian welfare compatible with approximate EF1, then show that approximate EF1 and Pareto optimality cannot be maintained together under item-level recourse.

\begin{theorem}
\label{thm:gamma-ef1-usw}
Fix $n\ge2$, $\gamma\in(0,1]$, and $\beta>1/(1+\gamma(n-1))$.
No deterministic algorithm with item-level upgrade recourse for goods can guarantee both $\gamma$-EF1 and a $\beta$-approximation to optimal utilitarian welfare. This holds for additive valuations with singleton values in $[0,1]$, even when the final instance has a welfare-maximizing allocation that is exact EF1 and PO.
\end{theorem}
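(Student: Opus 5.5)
Suppose, for contradiction, that an algorithm guarantees both $\gamma$-EF1 and a $\beta$-approximation to optimal utilitarian welfare, where $\beta>1/(1+\gamma(n-1))$. The adversary follows the proof of Theorem~\ref{thm:goods-ef1-mms-tradeoff}: many tiny goods first force an almost even split of cardinalities, and large goods afterwards cannot push out the tiny goods already held. The difference is in the values. The small goods should be welfare-relevant for only one agent, so that spreading them evenly wastes welfare.

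\textbf{Construction.} First present $M$ small goods. Each is worth $1$ to agent $1$ and $\varepsilon$ to every other agent, with $\varepsilon>0$ tiny; we normalize so that values lie in $[0,1]$. Let $a_i$ be the number of small goods agent $i$ holds after these $M$ arrivals, and let $p$ be an agent maximizing $a_p$. The $\gamma$-EF1 requirement at this point only controls envy toward $p$, so use it from every agent $i$ toward $p$. Since all small goods are identical to $i$, it gives $a_i\ge\gamma(a_p-1)$. Summing over $i\ne p$ yields $a_p\le(M+\gamma(n-1))/(1+\gamma(n-1))$, as before. In particular agent $1$ holds at most $a_p$ small goods, and at least $M-a_p\approx\gamma(n-1)M/(1+\gamma(n-1))$ of them sit with agents $2,\dots,n$.

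\textbf{Completion.} Next present $n-1$ large goods. Each is worth $1$ to agents $2,\dots,n$ and $0$ (or tiny) to agent $1$. We expect the optimal welfare to be about $M+n-1$, attained by giving all small goods to agent $1$ and one large good to each other agent. That allocation is exact EF1: agent $1$ values the others' bundles at $0$, and each other agent values agent $1$'s bundle at $M\varepsilon<1$, so no one envies it after removing a good. It is also PO, since each good goes to an agent valuing it maximally. Item-level recourse forbids this. The small goods held by agents $2,\dots,n$ can only be replaced by goods those agents value at least $\varepsilon$, namely small or large goods, and their cardinalities cannot decrease. So at the end agents $2,\dots,n$ together hold at least $M-a_p$ goods. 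At most $n-1$ of these are large, so at least $M-a_p-(n-1)$ small goods stay with them, each contributing only $\varepsilon$. Agent $1$ therefore obtains at most $a_p+n-1$, and the welfare is at most $a_p+n-1+(n-1)+M\varepsilon$. Dividing by $M+n-1$ and letting $M\to\infty$, $\varepsilon\to0$ gives ratio at most $1/(1+\gamma(n-1))<\beta$, a contradiction.

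\textbf{Main obstacle.} The delicate step is the EF1 counting when agent $1$ values the small goods differently from the others. The bound on $a_p$ uses only the envy of agents $i\ne p$ toward $p$, and each such agent values all small goods equally (at $1$ or at $\varepsilon$), so the inequality $a_i\ge\gamma(a_p-1)$ survives. We must still confirm that it holds for both $i=1$ and $i\neq1$, and that it holds after every prefix, not only at time $M$. I would also check that the recourse injection really cannot route agent $i$'s small goods onto large goods in a way that frees small goods for agent $1$. Each large good absorbs at most one small good, which is already counted in the $n-1$ slack above.
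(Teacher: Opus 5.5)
Your argument is correct and proves the statement as written, but by a different instance from the paper's. The checks in your last paragraph all go through, and only the EF1 constraint at time $M$ is needed.

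\textbf{The paper's route.} It first presents an anchor good $x$ worth $1$ to everyone and lets the algorithm choose its owner $p$. Only then does it present $K$ goods worth $s<1$ to $p$ and $\delta$ to the others, followed by $n-1$ goods worth $0$ to $p$ and $1$ to the others. Item-level recourse pins $x$ to $p$, and $\gamma$-EF1 toward $p$ gives $c_q\ge\gamma r$ for every $q\ne p$.

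\textbf{Your route.} You fix the high-value agent in advance and drop the anchor. Your adversary is therefore oblivious and the accounting is shorter. You do not even need the maximizer $p$: EF1 from each $i\ne1$ toward agent $1$ directly gives $a_1\le(M+\gamma(n-1))/(1+\gamma(n-1))$.

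\textbf{What the difference costs.} Recourse does no work in your instance. The welfare guarantee must hold after every arrival, so you already have a contradiction after the $M$ small goods. There $\OPTUSW=M$, while every $\gamma$-EF1 allocation has welfare at most $(M+\gamma(n-1))/(1+\gamma(n-1))+M\varepsilon$. Your large goods and cardinality argument are therefore logically superfluous. The same prefix defeats online algorithms with bundle-level or even unrestricted recourse; it is simply the offline price of $\gamma$-EF1 on that prefix.

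\textbf{What the anchor buys.} In the paper's instance, the algorithm's handicap at the intermediate prefix is that $x$ cannot leave $p$, whereas an offline allocation may give $x$ to some $q$. For $n=2$, every prefix then admits an EF1 welfare-maximizing allocation. For $n\ge3$, the best offline $\gamma$-EF1 ratio at that prefix tends to $1/(1+\gamma(n-2))$, which is still above the threshold $1/(1+\gamma(n-1))$. So the anchor is what ties the bound to item-level recourse.

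The anchor is also what the paper reuses in Theorem~\ref{thm:gamma-ef1-po}, where giving $q$ both $x$ and $w$ makes her strictly better off. In your anchor-free two-agent instance, a balanced split of the small goods with the large good given to agent $2$ is EF1, respects item-level recourse, and is already PO at the end.
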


\begin{proof}
Suppose such an algorithm exists, and set $C:=1+\gamma(n-1)$.
Fix $s\in(0,1)$. Choose an integer $K$ large enough that $Ks\ge1$ and $\eta:=\beta(Ks+n)-n-s(K/C+n-1)>0$.
This is possible because $\beta>1/C$ and $(n+s(K/C+n-1))/(Ks+n)\to1/C$ as $K\to\infty$. Choose $0<\delta<\min\{s,1/K,\eta/K\}$.

First present a good $x$ worth $1$ to every agent. Let $p$ receive it, and set $Q:=[n]\setminus\{p\}$. Next present $K$ goods $y_1,\dots,y_K$ with $v_p(y_\ell)=s$ and $v_q(y_\ell)=\delta$ for $q \in Q$.
Since $s<1$, item-level upgrade recourse keeps $x$ with agent $p$.

Let $r$ be the number of $y$-goods held by $p$ after these arrivals, and let $c_q$ be the number held by $q\in Q$. If arrivals stop here, then for agent $q$, removing $x$ from $p$'s bundle leaves value $r\delta$, while removing a $y$-good leaves value at least $r\delta$. Hence $\gamma$-EF1 implies $c_q\delta\ge\gamma r\delta$, so $c_q\ge\gamma r$. Therefore $K=r+\sum_{q\in Q}c_q\ge(1+\gamma(n-1))r=Cr$, and $r\le K/C$.

Now present $n-1$ goods $w_1,\dots,w_{n-1}$ with $v_p(w_a)=0$ and $v_q(w_a)=1$ for all $q \in Q$.

Let $r'$ be the final number of $y$-goods held by $p$. The good $x$ remains with $p$. Across the agents in $Q$, the $n-1$ new goods can replace at most $n-1$ previously held $y$-goods in the upgrade injections. Thus $r'\le r+n-1\le K/C+n-1$, and the algorithm's welfare satisfies $\USW(\mathcal A)\le n+s(K/C+n-1)+K\delta$.

Consider the offline allocation $\mathcal B$ that gives every $y$-good to $p$ and allocates the $n$ goods $x,w_1,\dots,w_{n-1}$ among the agents in $Q$ so that each receives at least one. Then $\USW(\mathcal B)=Ks+n$.
This allocation maximizes welfare item by item. It is exact EF1: agents in $Q$ value $p$'s bundle at $K\delta<1$, agent $p$ values her own bundle at $Ks\ge1$ and any bundle in $Q$ at most $1$, and between agents in $Q$ one unit-valued good can be removed from a bundle containing two. Since $\mathcal B$ maximizes welfare, it is also PO.

Finally, since $K\delta<\eta$, we obtain $\USW(\mathcal A)<n+s(K/C+n-1)+\eta=\beta(Ks+n)=\beta\OPTUSW(G)$, a contradiction.
\end{proof}

To rule out PO, we need an alternative allocation that weakly improves every agent. A two-agent version of the preceding construction gives an allocation that strictly improves both agents.

\begin{theorem}
\label{thm:gamma-ef1-po}
For every $\gamma\in(0,1]$, no deterministic algorithm with item-level upgrade recourse for goods can guarantee both $\gamma$-EF1 and PO, even for two agents with additive valuations whose singleton values lie in $[0,1]$. The final instance admits an allocation that is exact EF1 and PO.
\end{theorem}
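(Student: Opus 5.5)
We adapt the construction behind Theorem~\ref{thm:gamma-ef1-usw} to two agents, so that the final allocation is strictly Pareto dominated. First present a good $x$ worth $1$ to both agents, and call its recipient $p$ and the other agent $q$. Then present $K$ goods $y_1,\dots,y_K$ with $v_p(y_\ell)=s$ and $v_q(y_\ell)=\delta$, where $0<s<1$ and $\delta$ is tiny. Because $s<1$, item-level recourse forces $p$ to keep $x$.

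If arrivals stop here, $\gamma$-EF1 from $q$ towards $p$ gives $c_q\ge\gamma r$, where $r$ and $c_q$ count the $y$-goods held by $p$ and $q$. Removing $x$ is the best removal for $q$, since $\delta<1$. Hence $c_q\ge\gamma K/(1+\gamma)\ge1$ once $K>1+1/\gamma$, so $q$ holds at least one $y$-good; we may take $K$ larger to get several. Next present one good $w$ with $v_p(w)=0$ and $v_q(w)=1$. In the final allocation, item-level recourse keeps $x$ with $p$. Agent $q$'s injection maps her $c_q\ge1$ old $y$-goods into her new bundle, and at most one of them can be replaced by $w$. So $q$ still holds at least $c_q-1$ goods of type $y$, and if $c_q\ge2$ she holds at least one. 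We therefore choose $K$ so that $\gamma K/(1+\gamma)\ge2$.

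We then exhibit a Pareto improvement. Let $y^*$ be a $y$-good held by $q$ in the final allocation. If $w\in A_q$, moving $y^*$ to $p$ changes $p$ by $+s$ and $q$ by $-\delta$, which is only a weak trade. This is why we need a swap. The cleanest fix is to make the improvement a two-sided exchange: trade $y^*$ to $p$ for something $p$ holds that $q$ values more than $\delta$ but $p$ values less than $s$.

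The candidates for that something are $x$, which $p$ values at $1>s$ and so is not useful, or $w$ if $p$ holds it. We therefore split into cases on who holds $w$. If $p$ holds $w$, swapping $w$ for $y^*$ strictly improves both agents: $p$ gains $s-0$ and $q$ gains $1-\delta$. If $q$ holds $w$, we need another device. We would modify the construction by giving the $y$-goods a small value $\delta'$ to $p$ and adding a second item type, so that $q$ holds a $y$-good that $p$ values more while $p$ holds something $q$ values more. Concretely, we first try making $v_p(w)$ small but positive and arranging that $p$ must hold a $y$-good. Alternatively, we split the $y$-goods into two classes valued asymmetrically, so that each agent holds a good the other prefers relatively more; any strict ratio mismatch between two goods held by different agents yields a strict Pareto improvement.

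The main obstacle is this second case: forcing item-level recourse to leave each agent holding a good the other values relatively more. We expect to achieve it by designing the first phase so that EF1 forces both agents to hold goods of a type with mismatched value ratios. Item-level recourse then preserves these counts, since later goods are lower-valued. Finally, for the last claim, the offline allocation giving all $y$-goods to $p$ and $\{x,w\}$ to $q$ maximizes welfare item by item, so it is PO. It is also EF1, provided $Ks\ge1$ and $K\delta<1$.
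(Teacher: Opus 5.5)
\textbf{There is a genuine gap: the case $w\in A_q$ is left open.} Your setup is the paper's construction: the good $x$, then the $K$ goods $y_\ell$ with $v_p(y_\ell)=s$ and $v_q(y_\ell)=\delta$, then $w$. Your bound $c_q\ge\gamma K/(1+\gamma)$ is correct, and so is your swap of $w$ for $y^*$ when $w\in A_p$. But for $w\in A_q$ you only sketch possible redesigns (a positive $v_p(w)$, two asymmetric classes of $y$-goods), and none of them is specified or checked. As written, the proposal does not prove the theorem. The error is where you dismiss $x$ as a trading candidate because $p$ values it at $1>s$. That reasoning only considers one-for-one swaps. Relatedly, your claim that any strict ratio mismatch between two goods held by different agents yields a strict Pareto improvement is false for indivisible goods. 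A single swap improves both agents only if each strictly prefers the good she receives.

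\textbf{The missing idea is a many-for-one exchange.} Give $x$ to $q$ in return for a block of $k$ of $q$'s $y$-goods. Agent $p$ then changes by $ks-1$ and agent $q$ by $1-k\delta$, both positive when $k>1/s$ and $K\delta<1$. Your own counting supplies such a block, whoever holds $w$:
\begin{itemize}
\item $q$ held $c_q\ge\gamma K/(1+\gamma)$ $y$-goods before $w$ arrived;
\item $x$ cannot leave $p$;
\item in $q$'s injection, $w$ can replace at most one of those $y$-goods.
\end{itemize}
So $q$ ends with at least $\gamma K/(1+\gamma)-1$ $y$-goods. Your requirement $\gamma K/(1+\gamma)\ge2$ must therefore be strengthened to $s(\gamma K/(1+\gamma)-1)>1$.

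This is exactly the paper's argument. It uses the allocation you already give for the final claim: all $y$-goods to $p$, and $\{x,w\}$ to $q$. That allocation strictly Pareto-dominates the algorithm's final allocation. Agent $q$ gets $2>1+K\delta$, since $q$ cannot hold $x$. Agent $p$ gets $Ks>1+s\bigl(K/(1+\gamma)+1\bigr)$, since $p$ holds at most $r+1\le K/(1+\gamma)+1$ $y$-goods. No case split on the owner of $w$ is needed.
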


\begin{proof}
Suppose such an algorithm exists. Fix $s\in(0,1)$, choose $K$ large enough that $s(\gamma K/(1+\gamma)-1)>1$,
and choose $0<\delta<\min\{s,1/K\}$.

First present a good $x$ worth $1$ to both agents. Let $p$ receive it and let $q$ be the other agent. Next present $K$ goods $y_1,\dots,y_K$ with $v_p(y_\ell)=s$ and $v_q(y_\ell)=\delta$.
The good $x$ remains with $p$. Let $r$ be the number of $y$-goods held by $p$, so $q$ holds $K-r$. If arrivals stop here, then, as in the previous proof, $\gamma$-EF1 implies $(K-r)\delta\ge\gamma r\delta$, so $r\le K/(1+\gamma)$.

Now present one good $w$ with $v_p(w)=0$ and $v_q(w)=1$. Let $r'$ be the final number of $y$-goods held by $p$. The new good $w$ can replace at most one of $q$'s previous $y$-goods, while $x$ remains with $p$, so $r'\le r+1\le K/(1+\gamma)+1$.

Let $\mathcal B$ give all $y$-goods to $p$ and both $x$ and $w$ to $q$. This allocation is exact EF1 and PO. Agent $p$ has value $Ks>1$ and values $q$'s bundle at $1$; agent $q$ has value $2$ and values $p$'s bundle at $K\delta<1$. Every good is assigned to an agent with maximum value.

Moreover, $\mathcal B$ Pareto-dominates the algorithm's allocation. Agent $p$'s gain is at least $Ks-(1+sr')\ge s(K-K/(1+\gamma)-1)-1=s(\gamma K/(1+\gamma)-1)-1>0$.
Agent $q$ has value at most $1+K\delta<2$ under the algorithm and value $2$ under $\mathcal B$. Thus the algorithm's allocation is not PO, a contradiction.
\end{proof}
\subsection{Proof of Corollary~\ref{cor:goods-recourse-comparison} (Comparing the Recourse Rules)}
\label{app:recourse-comparison}
Item-level upgrade recourse compares old and new bundles by an injection. Every removed good has a distinct newly received replacement worth at least as much. To see this, follow the injection through any retained goods until reaching a newly received good. Injectivity makes these paths disjoint, and values never decrease along them.

For additive valuations, summing the replacement inequalities proves that item-level upgrade recourse implies bundle-level upgrade recourse. This implication need not hold beyond additivity, even for subadditive valuations. For example, let goods $a$ and $c$ provide the same unit of value, and let $b$ provide an independent unit: a bundle is worth one if it contains $a$ or $c$, plus one if it contains $b$. All singleton values are one, so replacing $b$ by $c$ is an item-level upgrade, but replacing $\{a,b\}$ by $\{a,c\}$ reduces the bundle's value from two to one. This valuation is normalized, monotone, and subadditive.

\begin{corollary}
\label{cor:goods-recourse-comparison}
For two additive agents, the optimal deterministic MMS factors are $1$ under bundle-level upgrade recourse and $2/3$ under item-level upgrade recourse. For every $n\ge3$, bundle-level upgrade recourse permits $(n-1)/(2n-3)$-MMS, whereas no deterministic item-level algorithm can guarantee a factor above $1/H_n$, and $(n-1)/(2n-3)>n/(2n-1)>1/H_n$.
\end{corollary}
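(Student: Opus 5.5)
The corollary collects earlier results, so the plan is mainly to cite the right theorems and then prove the two numerical inequalities.

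\emph{Two agents.} Under bundle-level upgrades, the factor $1$ is attained by Theorem~\ref{thm:two-agent-value-monotone-exact}, and no factor above $1$ is meaningful. Under item-level upgrades, the factor $2/3$ is attained and shown optimal by Theorem~\ref{thm:tight-two-agent-mms}, whose upper bound comes from Theorem~\ref{thm:mms-impossibility} with $1/H_2=2/3$.

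\emph{Cited bounds for $n\ge3$.} The achievable factor $(n-1)/(2n-3)$ under bundle-level upgrades is Theorem~\ref{thm:three-agent-value-monotone}. The item-level impossibility above $1/H_n$ is Theorem~\ref{thm:mms-impossibility}, which holds even for identical valuations and hence in general.

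\emph{First inequality.} I compare $(n-1)/(2n-3)$ and $n/(2n-1)$ by cross-multiplying:
\[
(n-1)(2n-1)=2n^2-3n+1>2n^2-3n=n(2n-3).
\]
This holds for every $n$, and both denominators are positive for $n\ge2$.

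\emph{Second inequality.} I need $n/(2n-1)>1/H_n$, which is equivalent to $H_n>(2n-1)/n=2-1/n$.
\begin{itemize}
\item At $n=3$: $H_3=11/6>5/3$.
\item Inductive step: going from $n$ to $n+1$, the left side $H_n$ increases by $1/(n+1)$. The right side $2-1/n$ increases by $1/n-1/(n+1)=1/(n(n+1))$. Since $1/(n+1)\ge1/(n(n+1))$, the strict inequality is preserved.
\end{itemize}
An alternative to the induction is $H_n\ge 1+1/2+1/3>2-1/n$ for $n\le 6$, together with $H_n\ge2$ for $n\ge4$; the induction is simpler.

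\emph{Main obstacle.} There is little difficulty beyond this. The one thing to verify is that the two-agent ``optimal'' claims use matching upper bounds: $1$ is trivially optimal, and $2/3$ is optimal by the cited theorem.
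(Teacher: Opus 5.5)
Your proposal is correct and follows the paper's proof: you cite the same four theorems, and you get the first inequality by the same cross-multiplication. The only difference is that you prove $H_n>2-1/n$ by induction from $n=3$, whereas the paper bounds it directly via $H_n\ge 1+1/2+(n-2)/n=5/2-2/n$. One small slip in your aside: the range should be $n\le5$, since $11/6=2-1/6$. This is harmless, because $H_n\ge2$ for $n\ge4$ already covers that case.
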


\begin{proof}
The two-agent statement follows from Theorems~\ref{thm:two-agent-value-monotone-exact} and~\ref{thm:tight-two-agent-mms}. For $n\ge3$, apply Theorems~\ref{thm:three-agent-value-monotone} and~\ref{thm:mms-impossibility}. The first strict inequality follows by cross-multiplication. For the second, each term $1/r$ with $3\le r\le n$ is at least $1/n$, so $H_n\ge1+1/2+(n-2)/n=5/2-2/n>2-1/n$, which is equivalent to the second strict inequality.
\end{proof}

\section{Verification Program for Theorem~\ref{thm:finite-exact}}
\label{app:program}
The following program performs the finite arithmetic in Appendix~\ref{app:exact-factors}, using only integer and rational arithmetic. It requires Python~3 and a C++17 compiler with OpenMP.

\begingroup\footnotesize
\begin{verbatim}
from fractions import Fraction as F
from pathlib import Path
import argparse
import os
import subprocess
import tempfile

FACTORS = {
    2: F(2, 3), 3: F(1, 2), 4: F(10, 21), 5: F(24, 55),
    6: F(15, 37), 7: F(42, 109), 8: F(18, 49), 9: F(6, 17),
    10: F(439, 1286), 11: F(540, 1631), 12: F(20159, 62560),
    13: F(72, 229), 14: F(504, 1639), 15: F(5171009, 17158995),
    16: F(812054744, 2745351561), 17: F(145645483, 500957470),
    18: F(102255661, 357395274), 19: F(4095, 14528),
    20: F(16380, 58931), 21: F(16380, 59711), 22: F(7371, 27205),
    23: F(27440, 102469), 24: F(82320, 310837), 25: F(493920, 1884779),
    26: F(735, 2833), 27: F(493920, 1922069), 28: F(493920, 1939709),
    29: F(1024380, 4058239), 30: F(204876, 818477),
    31: F(493920, 1989137), 32: F(123480, 501143),
    33: F(8350020, 34141547),
}
# n: (ordinary lower-bound cutoff K, one-prefix upper-bound x)
ORDINARY = {
    7: (2940, 216), 8: (1833, 146), 9: (2062, 152),
    10: (23998, 1285), 13: (30821, 1602), 14: (33192, 1638),
    19: (12746567, 523007), 20: (13417439, 530378),
    21: (14088311, 537398), 22: (10440075, 544099),
    23: (15417321, 1844439), 24: (16087639, 1865019),
    25: (13180423, 1884778), 26: (12667526, 1903773),
    27: (19599029, 1922068), 28: (20324919, 1939708),
    29: (358660146, 16232954), 30: (371027738, 16369538),
    31: (288515579, 1989136), 32: (297822533, 2004570),
    33: (7045486420, 273132374),
}
# n: (r, L, c, lambda, last global t)
SPECIAL = {
    4: (3, 19, 3, F(1, 6), 56),
    5: (5, 23, 2, F(1, 11), 67),
    6: (6, 15, 1, F(8, 111), 458),
    11: (7, 1332, 63, F(387, 8155), 615791),
    12: (7, 4475, 206, F(18, 391), 108056),
    15: (7, 4785, 206, F(4322, 100345), 1341555),
    16: (11, 68166, 1833, F(11915, 443013), 26892509),
    17: (11, 69352, 1833, F(6301, 238362), 29749879),
    18: (11, 70472, 1833, F(1451, 55782), 18104177),
}
# n: (x, r, M, stated f', stated lower bound on required fraction)
TWO_PREFIX = {
    6: (9, 6, 99, 0, F(159, 148)),
    11: (1325, 7, 11722, 0, F(57237, 57085)),
    12: (4468, 7, 1564, 1, F(22597967, 22521600)),
    15: (4778, 7, 100345, 0, F(6467213665753, 6183415438200)),
    16: (68155, 11, 443013, 0, F(19191933566131, 19025286317730)),
    17: (69341, 11, 953448, -1, None),
    18: (70461, 11, 334692, 0, F(86909440158353, 85860640625760)),
}


def ceil(x):
    return -(-x.numerator // x.denominator)


def harmonic(n):
    return sum((F(1, r) for r in range(1, n + 1)), F(0))


def extras(n, beta, x, counts):
    return [max(F(0), (beta * (x + r) - r * counts[r - 1])
                / (r * ((-x) % r))) if x % r else F(0)
            for r in range(1, n + 1)]


def minimum_counts(n, beta, x):
    return [(beta * (1 + x // r)).__floor__() + 1
            for r in range(1, n + 1)]


def check_upper_bounds():
    for n, x in [(4, 16), (5, 18)] + [(n, v[1]) for n, v in ORDINARY.items()]:
        beta = FACTORS[n]
        counts = minimum_counts(n, beta, x)
        free = n + x - sum(counts)
        terms = extras(n, beta, x, counts)
        total = beta * harmonic(n) + sum(sorted(terms)[:max(0, n - free)])
        assert free < 0 or total >= 1, (n, "one-prefix upper", free, total)
        assert (total == 1) if n in (4, 5) else (total > 1)
        print(f"one-prefix upper n={n}: f={free}, fraction={total}", flush=True)
    for n, (x, r, middle, stated_free, stated_total) in TWO_PREFIX.items():
        beta = FACTORS[n]
        counts = minimum_counts(n, beta, x)
        free = n + x - sum(counts)
        terms = extras(n, beta, x, counts)
        forced = [beta * harmonic(n) + e >= 1 for e in terms]
        assert free >= 0 and sum(forced) == free, (n, "forced counts")
        counts = [b + forced[s] for s, b in enumerate(counts)]
        d = (-x) % r
        h = (x + r + r - 1) // r
        assert x + r == r * h - d and 1 <= d < r and h >= d
        rate = (beta * h - counts[r - 1]) / d
        assert rate > 0 and middle > d
        assert rate == SPECIAL[n][3]
        second_x = x + middle
        second_counts = minimum_counts(n, beta, second_x)
        second_counts = [max(counts[s], second_counts[s]) for s in range(n)]
        second_counts[r - 1] = counts[r - 1] + (middle * rate).__floor__() + 1
        second_free = n + second_x - sum(second_counts)
        assert second_free == stated_free, (n, "second free", second_free)
        terms = extras(n, beta, second_x, second_counts)
        terms[r - 1] = F(0)
        total = beta * harmonic(n) + sum(sorted(terms)[:max(0, n - second_free)])
        if stated_total is not None:
            assert total == stated_total and total > 1, (n, total, stated_total)
            outcome = f"fraction={total}"
        else:
            assert sum(second_counts) == n + second_x + 1
            outcome = "one more good required than is available"
        print(f"two-prefix upper n={n}: forced={sum(forced)}, "
              f"f'={second_free}, {outcome}", flush=True)


def special_count(n, t, refined=True):
    a = FACTORS[n]
    r, length, c, rate, _ = SPECIAL[n]
    k = max(0, t - n + r)
    ordinary = sum(ceil(a * max(0, t - n + s) / s)
                   for s in range(1, n + 1) if s != r)
    if refined and n == 16 and k <= length:
        d = (-length) % r
        h = (length + d) // r
        lower_count = min(max(F(k, r + d), h - F((d + 1) * (length - k), r + d)),
                          max(F(0), h - F(length - k, r - d)))
        own = ceil(rate * k - (r * rate - a) * lower_count)
    else:
        own = min(c, ceil(rate * k)) + ceil(rate * max(0, k - length))
    return ordinary + own


def check_parameters():
    assert set(ORDINARY) | set(SPECIAL) == set(range(4, 34))
    for n, (cutoff, _) in ORDINARY.items():
        a = FACTORS[n]
        assert a * harmonic(n) < 1
        assert ceil(a * n / (1 - a * harmonic(n))) == cutoff
        for t in range(1, n + 1):
            assert sum(ceil(a * max(0, t - n + r) / r)
                       for r in range(1, n + 1)) <= t
    for n, (r, length, c, rate, last) in SPECIAL.items():
        a = FACTORS[n]
        d = (-length) % r
        h = (length + d) // r
        assert 1 <= d < r and rate == (a * h - c) / d
        assert a / r <= rate <= a / d and c <= rate * length
        if n == 16:
            assert h >= 2 and rate <= a / (r - 1)
        rho = a * (harmonic(n) - F(1, r)) + rate
        bound = n + c - rate * (n - r + length) - sum(
            (a * (n - s) / s for s in range(1, n + 1) if s != r), F(0))
        if n not in (4, 5, 6):
            assert rho < 1
            assert last == max(n, n - r + 1 + length, ceil(max(F(0), bound) / (1 - rho)))
        else:
            period, start, delta = {4: (42, 15, 42), 5: (55, 13, 55),
                                     6: (444, 15, 443)}[n]
            assert last == start + period - 1 and delta <= period
            assert all((period * a / s).denominator == 1
                       for s in range(1, n + 1) if s != r)
            assert (period * rate).denominator == 1
            assert period * rho == delta
            # Beyond both transitions, integer shifts prove periodicity.
            settled = max(n, n - r + length, n - r + ceil(F(c, 1) / rate))
            for t in range(start, settled + 1):
                assert special_count(n, t + period) - special_count(n, t) == delta
        # Exact fractional evaluation of the only refined finite segment.
        if n == 16:
            for t in range(1, n - r + length + 1):
                assert special_count(n, t) <= t, (n, t, special_count(n, t))
        print(f"parameters n={n}: correct", flush=True)


CPP_HEAD = r'''
#include <algorithm>
#include <atomic>
#include <cstdint>
#include <cstdlib>
#include <iostream>
#include <vector>
#include <omp.h>
using I = int64_t;
template<I P,I Q,int N> inline I ordinary_count(I x) {
  I result=0;
  // The generated compiler receives constant divisors after unrolling.
  #pragma GCC unroll 40
  for(int r=1;r<=N;++r) result+=(P*(x+r)-1)/(Q*r);
  return result;
}
template<I P,I Q,int N> void ordinary(I last) {
  if(ordinary_count<P,Q,N>(0)>0) std::abort();
  constexpr I block=1<<20;
  I lastz=(P*last-1)/Q;
  std::atomic<I> bad(-1);
  I direct=0;
  #pragma omp parallel for schedule(dynamic) reduction(+:direct)
  for(I base=0;base<=lastz;base+=block) {
    I end=std::min(base+block-1,lastz);
    std::vector<uint8_t> increment(end-base+1,0);
    I upper=0;
    for(int r=1;r<=N;++r) {
      I shift=P*r/Q;
      upper+=(base+shift)/r;
      I first=(r-(base+1+shift)%r)%r;
      for(I j=first;j<=end-base;j+=r) ++increment[j];
    }
    for(I z=base;z<=end;++z) {
      upper+=increment[z-base];
      I first=(Q*z+P)/P;
      if(upper<=first) continue;
      I endx=std::min(last,(Q*(z+1)+P)/P-1);
      for(I x=first;x<=endx;++x) {
        ++direct;
        if(ordinary_count<P,Q,N>(x)>x) bad.store(x);
      }
    }
  }
  if(bad.load()>=0) {
    std::cerr<<"VIOLATION ordinary n="<<N<<" x="<<bad.load()<<std::endl;
    std::exit(1);
  }
  std::cout<<"ordinary lower n="<<N<<": all x=0.."<<last
           <<" checked; "<<direct<<" direct evaluations"<<std::endl;
}
template<I P,I Q,int N,int R,I L,I C,I LP,I LQ>
void special(I first,I last) {
  std::atomic<I> bad(-1);
  #pragma omp parallel for schedule(static)
  for(I t=first;t<=last;++t) {
    I total=0;
    #pragma GCC unroll 40
    for(int s=1;s<=N;++s) if(s!=R) {
      I k=std::max(I(0),t-N+s);
      total+=(P*k+Q*s-1)/(Q*s);
    }
    I k=std::max(I(0),t-N+R);
    total+=std::min(C,(LP*k+LQ-1)/LQ);
    total+=(LP*std::max(I(0),k-L)+LQ-1)/LQ;
    if(total>t) bad.store(t);
  }
  if(bad.load()>=0) {
    std::cerr<<"VIOLATION special n="<<N<<" t="<<bad.load()<<std::endl;
    std::exit(1);
  }
  std::cout<<"special lower n="<<N<<": all t="<<first<<".."<<last
           <<" checked"<<std::endl;
}
int main(int argc,char**argv) {
  omp_set_num_threads(argc>1?std::atoi(argv[1]):4);
'''


def check_deadlines(threads):
    calls = []
    for n, (r, length, c, rate, last) in SPECIAL.items():
        a = FACTORS[n]
        first = n - r + length + 1 if n == 16 else 1
        args = (a.numerator, a.denominator, n, r, length, c,
                rate.numerator, rate.denominator)
        calls.append('special<' + ','.join(map(str, args)) + f'>({first},{last});')
    for n, (last, _) in ORDINARY.items():
        a = FACTORS[n]
        calls.append(f'ordinary<{a.numerator},{a.denominator},{n}>({last}LL);')
    with tempfile.TemporaryDirectory(prefix='verify-mms-') as directory:
        source = Path(directory) / 'verify.cpp'
        executable = Path(directory) / 'verify'
        source.write_text(CPP_HEAD + '\n'.join(calls) + '\n}\n')
        subprocess.run(['g++', '-O3', '-std=c++17', '-fopenmp', str(source),
                        '-o', str(executable)], check=True)
        subprocess.run([str(executable), str(threads)], check=True)


def main():
    parser = argparse.ArgumentParser(description=__doc__)
    parser.add_argument('--threads', type=int, default=min(8, os.cpu_count() or 1))
    args = parser.parse_args()
    check_upper_bounds()
    check_parameters()
    check_deadlines(args.threads)
    print('PASS: all finite tables and deadline inequalities verified exactly.')


if __name__ == '__main__':
    main()
\end{verbatim}
\endgroup

\end{document}